\documentclass[journal,twoside,web]{ieeecolor}
\usepackage{generic}
\usepackage{cite}
\usepackage{amsmath,amssymb,amsfonts}
\usepackage{algorithm}
\usepackage[noend]{algpseudocode}
\usepackage{graphicx}
\usepackage{textcomp}
\def\BibTeX{{\rm B\kern-.05em{\sc i\kern-.025em b}\kern-.08em
    T\kern-.1667em\lower.7ex\hbox{E}\kern-.125emX}}
\markboth{\journalname, VOL. XX, NO. XX, XXXX 2017}
{Author \MakeLowercase{\textit{et al.}}: Preparation of Papers for IEEE TRANSACTIONS and JOURNALS (February 2017)}

\newtheorem{lemma}{Lemma}
\newtheorem{definition}{Definition}
\newtheorem{theorem}{Theorem}
\newtheorem{problem}{Problem}
\newtheorem{proposition}{Proposition}
\newtheorem{corollary}{Corollary}
\newtheorem{assumption}{Assumption}

\begin{document}
\title{A Semi-Algebraic Framework for Verification and Synthesis of Control Barrier Functions}
\author{Andrew Clark \IEEEmembership{Senior Member, IEEE}
\thanks{A. Clark is with the Department of Electrical and Systems Engineering, Washington University in St. Louis,  St. Louis, MO 63130 USA (e-mail: andrewclark@wustl.edu). }
}

\maketitle

\begin{abstract}
Safety is a critical property for control systems in medicine, transportation, manufacturing, and other applications, and can be defined as ensuring positive invariance of a predefined safe set. This paper investigates the problems of verifying positive invariance of a semi-algebraic set as well as synthesizing sets that can be made positive invariant through Control Barrier Function (CBF)-based control. The key to our approach consists of mapping conditions for positive invariance to sum-of-squares constraints via the Positivstellensatz from real algebraic geometry. Based on these conditions, we propose a framework for verifying safety of CBF-based control including single CBFs, high-order CBFs, multi-CBFs, and systems with trigonometric dynamics and actuation constraints.
 In the area of synthesis, we propose algorithms for constructing CBFs, namely, an alternating-descent approach and  a local CBF approach. We evaluate our approach through  case studies on quadrotor UAV and power converter test systems.
\end{abstract}

\begin{IEEEkeywords}
Safety, sum-of-squares optimization, control barrier function
\end{IEEEkeywords}

\section{Introduction}
\label{sec:intro}
Safety is a critical property of autonomous control systems in medical, manufacturing, transportation, and other vital applications. From a control-theoretic perspective, safety is typically characterized by positive invariance of a particular safe region, for example, ensuring that a vehicle remains a desired safe distance from humans or obstacles, or maintaining the temperature of a system within tolerated limits. Methodologies such as energy-based methods \cite{dawson2023safe}, counterexample-guided synthesis \cite{henzinger2003counterexample}, and abstraction-based verification \cite{ratschan2007safety} have been proposed to verify system safety and design safe control algorithms. 

Conditions for positive invariance of controlled systems have been studied for decades beginning with the seminal work of Nagumo \cite{mejstrik2012some}. Exact conditions for positive invariance of a set have been proposed based on tangential conditions, i.e., whether the control input causes the state trajectory to evolve in a direction that is tangent to the boundary of the safe region \cite{aubin2011viability}. At present, however, there is currently a lack of \emph{computational} techniques for verifying that these invariance conditions are satisfied. Verification methodologies  have been proposed based on special cases, for example, when the safe set is the super-level set of a single polynomial and the control policy is given \cite{prajna2007framework}. These methods, however, do not readily generalize to verifying more complicated safe sets (e.g., intersections and unions of super-level sets) as well as incorporating actuation constraints.

In this paper, we develop a framework for verification of safety constraints that are expressed as invariance of semi-algebraic sets. A semi-algebraic set is any set that can be expressed as a combination of polynomial equalities and inequalities. The key challenge is that safety verification relies on checking \emph{non-existence} of certain conditions, namely, that there do not exist points on the boundary of the safe region that violate the tangential conditions. To overcome this challenge, we develop our framework based on theorems of alternatives, which map non-existence conditions of a primal problem to equivalent existence conditions of a dual problem. In particular, we leverage the Positivstellensatz from real algebraic geometry \cite{bochnak2013real} and Farkas lemma from linear algebra \cite{dinh2014farkas} to formulate a collection of sum-of-squares (SOS) programs. The solutions to these programs comprise certificates that validate safety of a given system. Furthermore, we use a novel construction based on semi-algebraic triangulations to construct continuous feedback controllers that satisfy the safety criteria, and extend our approach to systems with trigonometric functions in their dynamics.

While our proposed approach is agnostic to the control law that is used to ensure safety, in practice, our techniques are  motivated by the recent research interest in Control Barrier Functions (CBFs) for safe control \cite{ames2019control}. CBF-based controllers attempt to ensure safety by applying  linear constraints on the control input at each time step that enforce the tangent cone conditions for positive invariance, however, safety may still be violated if a control input satisfying the constraints does not exist. Verification of a semi-algebraic set can be interpreted as verification of a CBF-based control policy, in which different CBFs are used in different regions of the state space. We make this connection explicit in Section \ref{subsec:feedback-control}, where we derive conditions for existence of CBF-based control laws for a given safe region, and in Section \ref{subsec:verification-applications}, where we show how existing CBF constructions can be verified within our framework.

We also consider the problem of synthesizing controlled invariant sets. We propose two heursitic algorithms. The first algorithm is an alternating-descent approach based on the SOS programs that we developed for verification. The second algorithm constructs an invariant set by linearizing around an equilibrium point. 

The paper is organized as follows. Section \ref{sec:related} presents the related work. Section \ref{sec:model} presents the system model and background material. Section \ref{sec:verification} presents a convex framework for verification of semi-algebraic sets. Section \ref{sec:synthesis} presents algorithms for synthesizing invariant sets. Section \ref{sec:simulation} presents simulation results. Section \ref{sec:conclusion} concludes the paper.
\section{Related Work}
\label{sec:related}
Safety and positive invariance of nonlinear systems has been studied extensively, dating back to the seminal work of Nagumo \cite{mejstrik2012some}. Much of the work in this domain focuses on conditions for the state of a dynamical system to remain within a given set and existence of (possibly discontinuous) controllers that guarantee invariance (viability) \cite{aubin1991viability}. For detailed treatments of these bodies of work see \cite{blanchini2008set,aubin2011viability}. This paper builds on these conditions and focuses on algorithmic approaches to verifying viability and invariance of nonlinear systems. Specifically, our main contribution is to consider the sub-class of semi-algebraic sets and polynomial systems and give exact, convex conditions for viability. 

While in principle our framework is agnostic to the type of control law used to ensure invariance of the safe set, we are especially motivated by the extensive recent work on Control Barrier Functions (CBFs). Control Barrier Functions (CBFs) were first proposed in the seminal works \cite{ames2014control,ames2016control} and have been widely studied as a framework for verifiably safe control. CBFs have been proposed for high-degree \cite{xiao2021high},  actuation-limited \cite{breeden2021high}, Euler-Lagrange \cite{cortez2020correct}, time-delayed \cite{singletary2020control}, and uncertain \cite{xiao2020adaptive} systems. As we discuss in the paper, once a semi-algebraic set has been synthesized and/or verified as viable using our approach, CBF-based control laws can be used to ensure safety. We note that our main focus is on selecting and verifying safe invariant sets, making other related works on selecting parameters for CBFs \cite{parwana2022trust,gao2023learning} and analyzing performance and stability of CBF-based control laws \cite{reis2020control,mestres2022optimization} orthogonal to our paper.

Sum-of-squares optimization has been used to verify properties of uncontrolled dynamical systems, starting from construction and verification of Lyapunov functions \cite{papachristodoulou2002construction} and safety verification through barrier certificates \cite{prajna2007framework}. Recently, sum-of-squares optimization has been used to verify safety of control systems, for example, verification of CBF-based control policies \cite{ames2019control}. Most of the existing works in this area assume that a single CBF or HOCBF is used \cite{jagtap2020compositional}, although some recent efforts can incorporate multiple barrier functions \cite{schneeberger2023sos}. Importantly, these existing efforts verify safety by first synthesizing a nominal control policy, and then verifying that the control policy satisfies the conditions of the CBF. As a result, the safety guarantees are highly dependent on the choice of the nominal control policy. In contrast, this paper presents exact conditions for viability of a semi-algebraic set that do not depend on the nominal controller construction, and are general enough to include intersections and unions of invariant sets as well as trigonometric functions. Finally, we note that techniques other than sum-of-squares have been proposed for safety verification, such as sampling-based methods \cite{lavaei2021formal} or sufficient conditions \cite{xiao2022sufficient}.


In practical settings, safety guarantees may rely on the existence of continuous controllers that satisfy the desired safety properties. There are two general approaches to proving existence of such controllers. In the first class of approach, inspired by the original work of Artstein \cite{artstein1983stabilization}, a stabilizing controller is constructed without a closed-form representation. In the second class, a closed-form representation of the controller is provided \cite{sontag1989universal,ong2019universal}.  We consider the first class of existence proofs, and propose a proof of existence of continuous feedback control policies that guarantee invariance of simple algebraic sets. As a corollary, we give sufficient conditions for existence of CBF-based control policies.


A variety of methodologies have been proposed for synthesizing CBFs. Bilinear, alternating-descent sum-of-squares algorithms were proposed in \cite{dai2023convex}. Deep neural networks have shown significant promise for representing CBFs, due to their universality and the availability of efficient training algorithms \cite{dawson2023safe}. When the neural networks employ semi-algebraic activation functions, such as Rectified Linear Unit (ReLU), the resulting safe region is semi-algebraic and can be verified by our proposed approach. Related data-driven methods were proposed in \cite{salamati2024data}. In \cite{fisac2015reach}, it was shown that, when the value function is defined appropriately to capture safety and reachability properties of the system, barrier functions can be constructed by solving a Hamilton-Jacobi-Bellman equation. HJB-based algorithms for constructing value functions include discretizing the state space \cite{tonkens2022refining} and developing quadratic approximations via differential dynamic programming \cite{kumar2023fast}. In Section \ref{subsec:verification-applications}, we explain how our framework can be used to verify such methods.

The preliminary conference version of this paper presented an algebraic-geometric framework for verifying CBFs and  HOCBFs, but did not consider unions of CBFs, trigonometric functions, or actuation constraints~\cite{clark2021verification}. 

\section{Model and Preliminaries}
\label{sec:model}
This section presents the system model and definitions of safety. We also present background from real algebraic geometry.

\subsection{System Model and Safety Definitions}
\label{subsec:model}
We consider a nonlinear control system with dynamics 
\begin{equation}
\label{eq:dynamics}
\dot{x}(t) = f(x(t)) + g(x(t))u(t)
\end{equation}
where $x(t) \in \mathbb{R}^{n}$ denotes the state, $u(t) \in \mathcal{U} \subseteq \mathbb{R}^{m}$ is a control input, and $f: \mathbb{R}^{n} \rightarrow \mathbb{R}^{n}$ and $g: \mathbb{R}^{n} \rightarrow \mathbb{R}^{n \times m}$ are continuous functions. We assume that $\mathcal{U} = \{u : Au \leq c\}$ for some $A \in \mathbb{R}^{p \times m}$ and $c \in \mathbb{R}^{p}$. Throughout the paper, we let $[x]_{i}$ denote the $i$-th element of $x \in \mathbb{R}^{n}$. 



In what follows, we define safety of the system (\ref{eq:dynamics}) by introducing the properties of viability and controlled positive invariance. We first consider an uncontrolled system $\dot{x}(t) = f(x(t))$.

\begin{definition}
    \label{def:PI}
    A set $\mathcal{D}$ is positive invariant for an uncontrolled system $\dot{x}(t) = f(x(t))$ if $x(0) \in \mathcal{D}$ implies that $x(t) \in \mathcal{D}$ for all time $t \geq 0$ when the solution to (\ref{eq:dynamics}) exists.
\end{definition}

Positive invariance implies that, if a system initially lies in set $\mathcal{D}$, it will remain in $\mathcal{D}$ for all future time. We have the following preliminary result before introducing conditions for positive invariance.

\begin{definition}[\cite{blanchini2008set}, Def. 4.6]
    \label{def:tangent-cone}
    For any set $\mathcal{D} \subseteq \mathbb{R}^{n}$, the tangent cone to $\mathcal{D}$ at $x \in \mathcal{D}$ is defined by $$\mathcal{T}_{\mathcal{D}}(x) = \left\{z : \lim\inf_{\tau \rightarrow 0}{\frac{\mbox{dist}(x + \tau z, \mathcal{D})}{\tau}} = 0\right\}.$$
\end{definition}

We note that, if $x$ is in the interior of $\mathcal{D}$, then $\mathcal{T}_{\mathcal{D}}(x) = \mathbb{R}^{n}$. We have the following result on the tangent cone for a class of sets $\mathcal{D}$.

\begin{lemma}[\cite{blanchini2008set}, Eq. (4.6)]
\label{lemma:tangent-cone-intersection}
Suppose that the set $\mathcal{D} = \{x : b_{i}(x) \geq 0, i=1,\ldots,M\}$ for some functions $b_{1},\ldots,b_{M}: \mathbb{R}^{n} \rightarrow \mathbb{R}$. Furthermore, suppose that, for all $x \in \mathcal{D}$, there exists $z$ such that $b_{i}(x) + \nabla b_{i}(x)z > 0$ for all $i=1,\ldots,M$. Then for any $x \in \partial \mathcal{D}$, $$\mathcal{T}_{\mathcal{D}}(x) = \left\{z: \frac{\partial b_{i}}{\partial x}z \geq 0 \ \forall i \mbox{ with } b_{i}(x) = 0\right\}.$$ 
\end{lemma}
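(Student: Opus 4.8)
The plan is to prove the two inclusions $\mathcal{T}_{\mathcal{D}}(x) \subseteq L(x)$ and $L(x) \subseteq \mathcal{T}_{\mathcal{D}}(x)$ separately, where I write $L(x) := \{z : \nabla b_{i}(x) z \geq 0 \ \forall i \text{ with } b_{i}(x) = 0\}$ for the linearized cone on the right-hand side. Throughout I fix a boundary point $x \in \partial \mathcal{D}$ and let $I(x) = \{i : b_{i}(x) = 0\}$ denote the active index set, so that $b_{i}(x) > 0$ for $i \notin I(x)$. The first inclusion holds without any constraint qualification, whereas the second relies crucially on the stated hypothesis, which at a boundary point (where $b_{i}(x) = 0$ for the active constraints) furnishes a direction $\hat{z}$ with $\nabla b_{i}(x) \hat{z} > 0$ for all $i \in I(x)$. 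I assume the $b_{i}$ are $C^{1}$ so that the first-order Taylor expansions below are valid.

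For the inclusion $\mathcal{T}_{\mathcal{D}}(x) \subseteq L(x)$, I would first unpack Definition~\ref{def:tangent-cone}: if $z \in \mathcal{T}_{\mathcal{D}}(x)$, the vanishing liminf gives a sequence $\tau_{k} \downarrow 0$ and points $y_{k} \in \mathcal{D}$ with $\|x + \tau_{k} z - y_{k}\| = o(\tau_{k})$; setting $z_{k} = (y_{k} - x)/\tau_{k}$ yields $x + \tau_{k} z_{k} \in \mathcal{D}$ with $z_{k} \to z$. For each active $i \in I(x)$, feasibility gives $b_{i}(x + \tau_{k} z_{k}) \geq 0$, and since $b_{i}(x) = 0$ the Taylor expansion reads $b_{i}(x + \tau_{k} z_{k}) = \tau_{k}\, \nabla b_{i}(x) z_{k} + o(\tau_{k})$. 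Dividing by $\tau_{k} > 0$ and letting $k \to \infty$ gives $\nabla b_{i}(x) z \geq 0$, so $z \in L(x)$.

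The reverse inclusion $L(x) \subseteq \mathcal{T}_{\mathcal{D}}(x)$ is the crux, and is where the constraint qualification enters. Given $z \in L(x)$ and the direction $\hat{z}$ supplied by the hypothesis, I form the perturbed directions $z_{\epsilon} = z + \epsilon \hat{z}$ for $\epsilon > 0$; for each $i \in I(x)$ we get $\nabla b_{i}(x) z_{\epsilon} = \nabla b_{i}(x) z + \epsilon\, \nabla b_{i}(x) \hat{z} \geq \epsilon\, \nabla b_{i}(x) \hat{z} > 0$, so $z_{\epsilon}$ is a strictly feasible linearized direction. I then claim the whole ray $x + \tau z_{\epsilon}$ lies in $\mathcal{D}$ for all sufficiently small $\tau > 0$: for active $i$ the expansion $b_{i}(x + \tau z_{\epsilon}) = \tau\, \nabla b_{i}(x) z_{\epsilon} + o(\tau)$ is positive for small $\tau$ because its leading coefficient is strictly positive, while for inactive $i$ we have $b_{i}(x) > 0$ and continuity gives $b_{i}(x + \tau z_{\epsilon}) > 0$ for small $\tau$. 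Hence $x + \tau z_{\epsilon} \in \mathcal{D}$, so $\mathrm{dist}(x + \tau z_{\epsilon}, \mathcal{D}) = 0$ for small $\tau$ and therefore $z_{\epsilon} \in \mathcal{T}_{\mathcal{D}}(x)$. Since the contingent cone $\mathcal{T}_{\mathcal{D}}(x)$ is closed, letting $\epsilon \downarrow 0$ yields $z = \lim_{\epsilon \to 0} z_{\epsilon} \in \mathcal{T}_{\mathcal{D}}(x)$.

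The main obstacle is precisely this reverse inclusion: without the constraint qualification, $L(x)$ can be strictly larger than $\mathcal{T}_{\mathcal{D}}(x)$ (the usual cusp-type counterexamples), so the argument must exploit both the strictly interior linearized direction $\hat{z}$ and the closedness of the tangent cone in order to recover the boundary case $\nabla b_{i}(x) z = 0$ from strictly feasible perturbations. Two routine but necessary technical points to handle carefully are that the $o(\tau)$ remainders can be controlled simultaneously across the finitely many active constraints (so a single threshold on $\tau$ suffices), and that the inactive constraints are dispatched separately by continuity rather than by the first-order estimate.
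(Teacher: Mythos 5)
Your proof is correct, but note that the paper itself offers no proof of this lemma to compare against: it is imported verbatim from \cite{blanchini2008set} (Eq.\ (4.6)) as background material, and the hypothesis ``for all $x \in S$'' is a typo for ``for all $x \in \mathcal{D}$,'' which you correctly read as supplying the strictly feasible direction $\hat{z}$ at the boundary point under consideration. Your argument is the standard one for this result (it is essentially the proof that the tangent cone equals the linearized cone under a Mangasarian--Fromovitz-type constraint qualification): the inclusion $\mathcal{T}_{\mathcal{D}}(x) \subseteq L(x)$ follows from the first-order expansion along the approximating sequence $z_k \to z$ with $x + \tau_k z_k \in \mathcal{D}$, and the reverse inclusion follows by perturbing $z$ to $z_\epsilon = z + \epsilon \hat{z}$, checking that the ray $x + \tau z_\epsilon$ enters $\mathcal{D}$ for small $\tau$ (active constraints by the strictly positive leading term, inactive ones by continuity), and passing to the limit $\epsilon \downarrow 0$ using closedness of the contingent cone. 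The only step you assert without justification is that closedness; it is a standard fact, but since the paper's Definition~\ref{def:tangent-cone} is the $\liminf$ form, a one-line verification would tighten the argument: if $z^j \to z$ with $z^j \in \mathcal{T}_{\mathcal{D}}(x)$, pick $\tau_j < 1/j$ with $\mathrm{dist}(x+\tau_j z^j,\mathcal{D})/\tau_j < 1/j$; then $\mathrm{dist}(x+\tau_j z,\mathcal{D})/\tau_j \leq 1/j + \|z - z^j\| \to 0$, so $z \in \mathcal{T}_{\mathcal{D}}(x)$. With that line added, the proof is complete and self-contained under the (implicit, and reasonable) assumption that the $b_i$ are differentiable at $x$ and continuous nearby.
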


We note that, if $\mathcal{D} = \{x : b(x) \geq 0\}$, then the condition of the lemma holds iff there does not exist $x$ with $b(x) = 0$ and $\frac{\partial b}{\partial x} = 0$. 
We now state Nagumo's Theorem, which gives exact conditions for positive invariance.

\begin{theorem}[Nagumo's Thoerem \cite{aubin2011viability}, Theorem 11.2.3]
\label{theorem:Nagumo}
Suppose that $\mathcal{D}$ is locally compact and $f$ is continuous on $\mathcal{D}$. Then the set $\mathcal{D}$ is positive invariant under dynamics $\dot{x}(t) = f(x(t))$ if and only if $f(x) \in \mathcal{T}_{\mathcal{D}}(x)$ for all $x \in \mathcal{D}$.
\end{theorem}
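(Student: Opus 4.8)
The plan is to prove the two implications separately, treating necessity as the routine direction and reserving the constructive machinery for sufficiency. For necessity, assume $\mathcal{D}$ is positive invariant and fix $x \in \mathcal{D}$. Take the solution $x(t)$ with $x(0) = x$; positive invariance forces $x(\tau) \in \mathcal{D}$ for all small $\tau \geq 0$. Since $\dot{x}(0) = f(x)$, a first-order expansion gives $x(\tau) = x + \tau f(x) + o(\tau)$, so $\text{dist}(x + \tau f(x), \mathcal{D}) \leq \|x + \tau f(x) - x(\tau)\| = o(\tau)$. Dividing by $\tau$ and taking $\liminf_{\tau \to 0}$ yields $0$, which is exactly the membership $f(x) \in \mathcal{T}_{\mathcal{D}}(x)$ by Definition \ref{def:tangent-cone}. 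This direction uses only differentiability of the trajectory at $t=0$ together with the definition of the contingent cone.

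For sufficiency, assume $f(x) \in \mathcal{T}_{\mathcal{D}}(x)$ for all $x \in \mathcal{D}$ and construct a solution that remains in $\mathcal{D}$. The idea is Euler polygonal approximation combined with a compactness argument. Local compactness of $\mathcal{D}$ lets me fix $x_0 \in \mathcal{D}$ and a compact neighborhood $K = \mathcal{D} \cap \bar{B}(x_0, r)$ on which $f$ is bounded, say $\|f\| \leq L$. Unwinding Definition \ref{def:tangent-cone}, the tangency condition says that for every $x \in \mathcal{D}$ and every $\epsilon > 0$ there exist arbitrarily small step sizes $h > 0$ and points $y \in \mathcal{D}$ with $\|x + h f(x) - y\| \leq \epsilon h$. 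I would use this to build, for each mesh size, a piecewise-linear approximate trajectory whose nodes lie in $\mathcal{D}$ and whose defect from solving $\dot{x} = f(x)$ is controlled. Because $f$ is bounded on $K$, the family of approximants is uniformly Lipschitz, hence equicontinuous, so Arzel\`a--Ascoli extracts a uniformly convergent subsequence. Continuity of $f$ lets me pass to the limit in the integral form $x(t) = x_0 + \int_0^t f(x(s))\,ds$, so the limit is a genuine solution, and its values are limits of points in the (locally) closed set $\mathcal{D}$, hence remain in $\mathcal{D}$.

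The main obstacle is the sufficiency direction, and within it the bookkeeping in the Euler construction: I must choose the step sizes and corrected nodes so that the accumulated deviation from $\mathcal{D}$ vanishes in the limit while the approximants still converge to an actual solution of the dynamics. This is where the merely continuous, non-Lipschitz nature of $f$ bites, since solutions need not be unique and I cannot appeal to a contraction or Gronwall estimate; I must instead rely on compactness to produce at least one viable trajectory through each point. A related subtlety worth flagging is that this construction delivers \emph{existence} of a solution staying in $\mathcal{D}$, so recovering the literal ``all solutions stay'' reading of Definition \ref{def:PI} requires either interpreting invariance in the viability sense or invoking uniqueness of trajectories. The remaining estimates — boundedness of $f$ on $K$, equicontinuity of the approximants, and closedness of $\mathcal{D}$ from local compactness — are routine once the defect control is in place.
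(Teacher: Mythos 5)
The paper offers no proof of this result to compare against: Nagumo's Theorem is imported as background, stated with a citation to \cite{aubin2011viability}, and never proved internally. Judged on its own terms, your plan is essentially the canonical proof from the viability literature (the same one found in the cited reference): necessity by first-order expansion of a trajectory together with the definition of the contingent cone, sufficiency by Euler polygons with nodes corrected back onto $\mathcal{D}$, uniform bounds on $f$ over a compact neighborhood (this is where local compactness enters), Arzel\`a--Ascoli, and passage to the limit in the integral equation. The defect bookkeeping you flag is standard but is genuinely where the work lies: tangency gives $\liminf$, not $\lim$, so admissible step sizes exist only along subsequences and the partition must be chosen adaptively rather than uniformly. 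Most importantly, the caveat in your final paragraph is not pedantic, and you are right to insist on it: with $f$ merely continuous, solutions are non-unique, and the tangency condition yields only the existence of \emph{some} solution remaining in $\mathcal{D}$ (viability, or weak invariance), not that \emph{every} solution stays. A one-dimensional example makes this concrete: take $\mathcal{D} = (-\infty,0]$ and $f(x) = \sqrt{|x|}$; then $f(x) \in \mathcal{T}_{\mathcal{D}}(x)$ for all $x \in \mathcal{D}$, yet the solution $x(t) = t^{2}/4$ from $x(0)=0$ exits $\mathcal{D}$ immediately. So under the literal all-solutions reading of Definition \ref{def:PI}, the ``if'' direction is false; the theorem must be read in the viability sense (or under an additional uniqueness hypothesis such as locally Lipschitz $f$), which is how the cited reference states it and how the paper implicitly uses it. With that reading, your proof plan is correct and complete in outline.
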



When the system is controlled, we define the viability property as follows.

\begin{definition}[\cite{aubin2011viability}, Ch. 11.1]
    \label{def:viability}
    A locally compact set $\mathcal{D}$ is \emph{viable} if, for all $x \in \partial \mathcal{D}$, there exists $u \in \mathcal{U}$ such that $(f(x) + g(x)u) \in \mathcal{T}_{\mathcal{D}}(x)$.
\end{definition}

Viability implies that, when the state reaches the boundary of the region $\mathcal{D}$, there exists a control input to ensure that the state remains in $\mathcal{D}$. Viability is a necessary condition for a set to be rendered positive invariant through control. However, it may also be desirable to ensure safety using a control algorithm with a particular structure, for example, a state feedback controller. We define  feedback controlled positive invariance to capture this property.

\begin{definition}
    \label{def:feedback-CPI}
    A locally compact set $\mathcal{D}$ is \emph{feedback controlled positive invariant} if there exists a continuous function $\mu: \mathcal{D} \rightarrow \mathcal{U}$ such that $(f(x) + g(x)\mu(x)) \in \mathcal{T}_{\mathcal{D}}(x)$ for all $x \in \partial \mathcal{D}$. 
\end{definition}

Clearly, feedback controlled positive invariance implies viability. Lastly, we define Control Barrier Functions (CBFs) as follows. Recall that a continuous function $\kappa: \mathbb{R} \rightarrow \mathbb{R}$ is class-K if $\kappa(0) = 0$  and $\kappa$ is strictly increasing.


\begin{definition}
\label{def:CBF}
A function $b$ is a control barrier function for (\ref{eq:dynamics}) if $\frac{\partial b}{\partial x} \neq 0$ for all $x$ with $b(x) = 0$ and there is a class-K function $\kappa$ such that, for all $x$ with $b(x) \geq 0$, there exists $u$ satisfying 
\begin{equation}
\label{eq:CBF-def}
\frac{\partial b}{\partial x}(f(x) + g(x)u) \geq -\kappa(b(x)).
\end{equation}
\end{definition}
The following result establishes positive invariance for systems with CBFs.
\begin{theorem}[\cite{ames2019control}, Theorem 2] 
\label{theorem:CBF-safety}
Suppose that $b$ is a CBF, $b(x(0)) \geq 0$, and $u(t)$ satisfies (\ref{eq:CBF-def}) for all $t$. Then the set $\{x : b(x) \geq 0\}$ is positive invariant.
\end{theorem}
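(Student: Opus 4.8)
The plan is to combine Nagumo's Theorem (Theorem \ref{theorem:Nagumo}) with the CBF inequality to show that the super-level set $\mathcal{D} = \{x : b(x) \geq 0\}$ is positive invariant under the closed-loop dynamics $\dot{x}(t) = f(x(t)) + g(x(t))u(t)$, where $u(t)$ satisfies (\ref{eq:CBF-def}) at all times. Since the closed-loop system is an uncontrolled system once $u(t)$ is fixed as a function of the trajectory, I would apply Nagumo's characterization: it suffices to verify that the closed-loop vector field lies in the tangent cone $\mathcal{T}_{\mathcal{D}}(x)$ at every boundary point $x \in \partial \mathcal{D}$. First I would establish that $\mathcal{D}$ is locally compact (it is closed as the preimage of a closed set under the continuous map $b$, hence locally compact) so that Nagumo's hypotheses apply.

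The second step is to compute the tangent cone to $\mathcal{D}$ via Lemma \ref{lemma:tangent-cone-intersection} in the single-constraint case $M=1$ with $b_1 = b$. By the remark following that lemma, the constraint-qualification condition reduces to the requirement that there is no $x$ with $b(x) = 0$ and $\frac{\partial b}{\partial x} = 0$; but this is precisely guaranteed by the CBF definition, which stipulates $\frac{\partial b}{\partial x} \neq 0$ whenever $b(x) = 0$. Hence the lemma gives, for any $x \in \partial \mathcal{D}$ (where $b(x) = 0$),
\begin{equation*}
\mathcal{T}_{\mathcal{D}}(x) = \left\{ z : \frac{\partial b}{\partial x} z \geq 0 \right\}.
\end{equation*}

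The third and central step is to verify the tangency inclusion $f(x) + g(x)u \in \mathcal{T}_{\mathcal{D}}(x)$ at boundary points. Fixing $x \in \partial \mathcal{D}$, we have $b(x) = 0$, so the class-$\mathcal{K}$ property gives $\kappa(b(x)) = \kappa(0) = 0$. The CBF inequality (\ref{eq:CBF-def}) then reads $\frac{\partial b}{\partial x}(f(x) + g(x)u) \geq -\kappa(0) = 0$, which is exactly the membership condition $z \in \mathcal{T}_{\mathcal{D}}(x)$ with $z = f(x) + g(x)u$. Since $u(t)$ is assumed to satisfy (\ref{eq:CBF-def}) for all $t$, the closed-loop field lies in the tangent cone at every boundary point, and Nagumo's Theorem yields positive invariance of $\mathcal{D}$.

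The main obstacle is the technical interface between Nagumo's Theorem, which is stated for an uncontrolled system, and the closed-loop system driven by a time-varying $u(t)$: strictly speaking one must argue that the trajectory generated by the controlled system is the same as that of an autonomous field whose tangent-cone condition holds, and that solutions exist long enough for the invariance conclusion to be meaningful. I expect the cleanest route is to invoke Nagumo directly on the closed-loop vector field along the given solution, using continuity of $f$, $g$, and $b$; the substantive content is entirely captured by the three steps above, and the remaining care is in checking that the hypotheses of Theorem \ref{theorem:Nagumo} (local compactness of $\mathcal{D}$ and continuity on $\mathcal{D}$) are met so the biconditional applies.
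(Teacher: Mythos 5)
Your proposal has a genuine gap, and it is exactly the issue you flag at the end and then wave away. Note first that the paper does not prove this theorem at all --- it is imported from \cite{ames2019control} --- so the comparison is with the standard proof in that source, which proceeds via the comparison lemma, not Nagumo. The hypothesis of the theorem is trajectory-wise and time-varying: $u(t)$ is an arbitrary time signal, and inequality (\ref{eq:CBF-def}) is only known to hold at the points $(t,x(t))$ visited by the given solution. Nagumo's Theorem as stated (Theorem \ref{theorem:Nagumo}) is a biconditional for a \emph{continuous, autonomous} field $\dot{x}=f(x)$, and it requires the tangency $f(x) \in \mathcal{T}_{\mathcal{D}}(x)$ at \emph{every} point of $\mathcal{D}$; it cannot be invoked ``along the given solution.'' Fixing $u(t)$ as a function of time makes the closed-loop system non-autonomous (and possibly discontinuous in $t$), not autonomous, and converting $u(t)$ into a state feedback $\mu(x)$ would require a well-defined continuous feedback satisfying (\ref{eq:CBF-def}) everywhere, which is not part of the hypothesis.

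The deeper problem is that your argument uses only $\kappa(0)=0$, i.e., the inequality $\frac{d}{dt}b(x(t)) \geq 0$ at instants when the trajectory is on the boundary. That pointwise condition at isolated boundary instants is not sufficient for invariance of a single trajectory: a curve with $b(x(t_0))=0$, $\dot{b}(x(t_0))=0$ and $b(x(t))=-(t-t_0)^{2}$ immediately after is consistent with everything your step 3 establishes, yet it exits the set. Nagumo's proof rules this out only by exploiting tangency of a continuous field on the whole boundary, an ingredient unavailable here. What actually makes the theorem true is the class-K structure of $\kappa$ \emph{away from} the boundary: along the trajectory one has the scalar differential inequality $\frac{d}{dt}b(x(t)) \geq -\kappa(b(x(t)))$, and the comparison lemma (the route taken in \cite{ames2019control}) gives $b(x(t)) \geq y(t) \geq 0$, where $\dot{y}=-\kappa(y)$, $y(0)=b(x(0)) \geq 0$. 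Equivalently, an elementary argument: if $b(x(t_1))<0$, let $t_0$ be the last time $b(x(t)) \geq 0$; on $(t_0,t_1]$ one has $b<0$, hence $-\kappa(b)>0$ and $b(x(\cdot))$ is increasing there, so $b(x(t_1)) \geq b(x(t_0)) = 0$, a contradiction. Your steps 1 and 2 (local compactness of $\mathcal{D}$ and the tangent-cone computation via Lemma \ref{lemma:tangent-cone-intersection}) are fine but end up irrelevant once the proof is routed through the differential inequality.
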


 \subsection{Preliminary Results}
 \label{subsec:preliminary}
 We now give  needed background results on real algebraic geometry. The results below can be found in \cite{bochnak2013real}. We let $\mathbb{R}[x]$ denote the set of polynomials over variable $x$ with real coefficients. 
 \begin{definition}[\cite{bochnak2013real}, Def. 2.1.4]
     \label{def:semialgebraic}
 A set $\mathcal{C} \subseteq \mathbb{R}^{n}$ is \emph{semialgebraic} if there exist integers $s$, $r_{1},\ldots,r_{s}$ and polynomials $b_{ij}(x) : \mathbb{R}^{n} \rightarrow \mathbb{R}$ for $i=1,\ldots,s, j=1,\ldots,r_{i}$ such that 
 \begin{equation}
 \label{eq:semialgebraic-def}
 \mathcal{C} = \bigcup_{i=1}^{s}{\bigcap_{j=1}^{r_{i}}{\{x : b_{ij}(x) \ast_{ij} 0\}}}
 \end{equation}
 where $\ast_{ij} \in \{=,>\}$.
  \end{definition}
The following result describes properties of closed semialgebraic sets.
\begin{lemma}
\label{lemma:closed-semialgebraic}
Any closed semialgebraic set $\mathcal{C}$ can be written in the form 
\begin{equation}
\label{eq:closed-semialgebraic}
\mathcal{C} = \bigcup_{i=1}^{s}{\bigcap_{j=1}^{r_{i}}{\{x : b_{ij}(x) \geq  0\}}}
\end{equation}
for some $s,r_{1},\ldots,r_{s} \in \mathbb{Z}_{>0}$ and polynomials $b_{ij}:\mathbb{R}^{n} \rightarrow \mathbb{R}$ for $i=1,\ldots,r$, $j=1,\ldots,r_{s}$.
\end{lemma}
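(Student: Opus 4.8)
The plan is to recognize this statement as the \emph{finiteness theorem} of real algebraic geometry, preceded by an elementary reduction of equalities to pairs of inequalities. First I would start from Definition \ref{def:semialgebraic}, which writes $\mathcal{C} = \bigcup_{i=1}^{s} \bigcap_{j=1}^{r_{i}} \{x : b_{ij}(x) \ast_{ij} 0\}$ with $\ast_{ij} \in \{=,>\}$. Every equality factor is harmless, since $\{x : b(x) = 0\} = \{x : b(x) \geq 0\} \cap \{x : -b(x) \geq 0\}$, so each such factor can be absorbed into the intersection using only non-strict inequalities. The entire content of the lemma therefore lies in eliminating the strict inequalities $b_{ij}(x) > 0$, and this elimination must exploit the hypothesis that $\mathcal{C}$ is closed: without closedness the claim is false, because $\{x : b(x) > 0\}$ is in general not a finite union of basic closed sets.

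The key step is to invoke the \emph{finiteness theorem} (\cite{bochnak2013real}): every closed semialgebraic subset of $\mathbb{R}^{n}$ is a finite union of \emph{basic closed} semialgebraic sets, i.e.\ sets of the form $\bigcap_{j} \{x : p_{j}(x) \geq 0\}$. Since $\mathcal{C}$ is semialgebraic by Definition \ref{def:semialgebraic} and closed by hypothesis, the theorem applies verbatim and yields exactly the representation (\ref{eq:closed-semialgebraic}). This is the cleanest route, and I expect it to be the intended one given that the preliminaries are drawn from \cite{bochnak2013real}.

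If a more self-contained argument is wanted, I would prove the finiteness theorem via a sign-invariant cell decomposition. Using a cylindrical algebraic decomposition adapted to the finite family $\{b_{ij}\}$, I would partition $\mathbb{R}^{n}$ into finitely many connected semialgebraic cells on each of which every $b_{ij}$ has constant sign; then $\mathcal{C}$ is the union of the subfamily of cells $\{C_{k}\}_{k \in K}$ on which the defining Boolean formula holds. Since closure commutes with finite unions and $\mathcal{C}$ is closed, the non-overshooting is automatic: $C_{k} \subseteq \mathcal{C}$ forces $\overline{C_{k}} \subseteq \overline{\mathcal{C}} = \mathcal{C}$, whence $\mathcal{C} = \bigcup_{k \in K} \overline{C_{k}}$. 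The main obstacle is precisely what remains, namely showing that the closure of each cell is itself a basic closed set (a finite intersection of $\geq$ constraints). Turning the strict sign conditions that cut out an open cell into non-strict conditions describing its closure, without spurious additions, is the technical heart of the finiteness theorem and the step I expect to demand the most care.
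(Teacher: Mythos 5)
Your proposal is correct and matches the paper's treatment: the paper states this lemma as background from \cite{bochnak2013real} with no proof of its own, and your primary route --- reducing equalities to pairs of non-strict inequalities and then invoking the finiteness theorem for closed semialgebraic sets --- is exactly that citation made explicit. Your supplementary cell-decomposition sketch goes beyond what the paper offers, and you correctly flag its genuinely hard step (that closures of sign-invariant cells are basic closed sets), so it does not undermine the main argument.
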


If $s=1$, then $\mathcal{C}$ is a \emph{simple} semialgebraic set. A Sum-of-Squares (SOS) polynomial is a polynomial $f(x)$ such that $$f(x) = \sum_{i=1}^{k}{g_{i}(x)^{2}}$$ for some polynomials $g_{1}(x),\ldots,g_{k}(x)$. We use the notation $f \in SOS$ to mean that $f$ is a sum-of-squares polynomial.  Selecting coefficients of $f(x)$ to ensure that $f(x)$ is SOS can be represented as a semidefinite program, a procedure known as SOS optimization \cite{parrilo2003semidefinite,ahmadi2017improving}. A special case of SOS optimization is the SOS feasibility problem, which is stated as follows. Let $M$ and $N$ be positive integers and let $\eta_{ij},i=1,\ldots,N,j=1,\ldots,M$ and $\phi_{i},i=1,\ldots,N$ be polynomials in $\mathbb{R}^{n}$. The SOS feasibility problem consists of finding polynomials $\alpha_{1}(x),\ldots,\alpha_{M}(x)$ such that $$\left(\sum_{j=1}^{M}{\alpha_{j}(x)\eta_{ij}(x)} + \phi_{i}(x)\right) \in SOS$$
 for all $i=1,\ldots,N$. 
 
 The cone  $\Gamma[q_{1},\ldots,q_{k}]$ associated with polynomials $q_{1},\ldots,q_{k}$ is equal to the set of polynomials $f$ with $$f(x) = p_{0}(x) + \sum_{i=1}^{N}{p_{i}(x)\beta_{i}(x)},$$ where $p_{0},\ldots,p_{N}$ are SOS and $\beta_{1},\ldots,\beta_{N}$ are products of powers of the $q_{i}$'s. For a set of polynomials $f_{1},\ldots,f_{N} \in \mathbb{R}[x]$, we define the ideal generated by the polynomials as $$\mathcal{I}[f_{1},\ldots,f_{N}] = \left\{\sum_{i=1}^{N}{\eta_{i}f_{i}} : \eta_{1},\ldots,\eta_{N} \in \mathbb{R}[x]\right\}.$$ Finally, for a set of polynomials $\mathcal{H}=\{h_{1},\ldots,h_{s}\}$, the monoid generated by $\mathcal{H}$ is given by the products of powers of the $h_{i}$'s, i.e., $$\mathcal{M}[\mathcal{H}] = \left\{\prod_{i=1}^{s}{h_{i}(x)^{r_{i}}} : r_{1},\ldots,r_{s} \in \mathbb{Z}_{\geq 0}\right\}$$ 
 
 The Positivstellensatz, stated as follows, gives equivalent conditions for existence of solutions to systems of polynomial equations and inequalities. These equivalent conditions will be used to formulate equivalent SOS programs for the safety and invariance conditions to be defined in the following sections. 
 \begin{theorem}[Positivstellensatz \cite{bochnak2013real}, Proposition 4.4.1]
 \label{theorem:Psatz}
 Consider a collection of polynomials $\mathcal{F} = \{f_{j}: j=1,\ldots,r\}$, $\mathcal{G} = \{g_{i} : i=1,\ldots,m\}$ and $\mathcal{H} = \{h_{k} : k=1,\ldots,s\}$. Then the set 
 \begin{multline*}
 \left(\bigcap_{j=1}^{r}{\{x : f_{j}(x) \geq 0\}}\right) \cap \left(\bigcap_{i=1}^{m}{\{x : g_{i}(x) = 0\}}\right) \\
 \cap \left(\bigcap_{k=1}^{s}{\{x : h_{k}(x) \neq 0\}}\right)
 \end{multline*}
 is empty if and only if there exist polynomials $f$, $g$, and $h$ satisfying (i) $f(x) \in \Gamma[f_{1},\ldots,f_{r}]$, (ii) $g(x) \in \mathcal{I}[g_{1},\ldots,g_{m}]$,  (iii) $h \in \mathcal{M}[h_{1},\ldots,h_{s}]$, and (iv) $f(x) + g(x) + h(x)^{2} = 0$ for all $x$.
 \end{theorem}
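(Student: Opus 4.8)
The plan is to prove the two directions separately, since they are of completely different character: the reverse implication is an elementary sign argument, whereas the forward implication is the deep part and must be proved by contraposition using the theory of orderings of rings together with a transfer principle for real closed fields. For the reverse direction, suppose polynomials $f \in \Gamma[f_1,\ldots,f_r]$, $g \in \mathcal{I}[g_1,\ldots,g_m]$, and $h \in \mathcal{M}[h_1,\ldots,h_s]$ satisfy $f(x) + g(x) + h(x)^2 = 0$ for all $x$. At any candidate point $x$ lying in the set, each generator satisfies $f_j(x) \geq 0$, so $f(x) \geq 0$ because a cone element is a sum of SOS polynomials times products of powers of the $f_j$, hence nonnegative wherever the $f_j$ are; similarly $g_i(x) = 0$ forces $g(x) = 0$, and each $h_k(x) \neq 0$ forces $h(x)^2 > 0$ since $h$ is a product of powers of the $h_k$. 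Then $f(x) + g(x) + h(x)^2 > 0$, contradicting the identity. Hence no such $x$ exists and the set is empty.

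For the forward direction I would argue the contrapositive: assuming no such certificate exists, I will produce a point in the set. Work in the ring $A = \mathbb{R}[x_1,\ldots,x_n]$ with cone $T = \Gamma[f_1,\ldots,f_r]$, ideal $I = \mathcal{I}[g_1,\ldots,g_m]$, and multiplicative monoid $M = \mathcal{M}[h_1,\ldots,h_s]$. The crux is the \emph{abstract Positivstellensatz}: the nonexistence of an identity $t + i + m^2 = 0$ with $t \in T$, $i \in I$, $m \in M$ is equivalent to the existence of a prime cone (ordering) $P \supseteq T$ of $A$ whose support $\mathrm{supp}(P) = P \cap (-P)$ contains $I$ and is disjoint from $M$. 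I would establish this by a Zorn's-lemma extension: the hypothesis that no certificate exists is precisely what guarantees the cone data is \emph{proper}, and I would enlarge it to a cone that is maximal among proper cones having $I$ in the support and $M$ avoiding the support. I then show that maximality forces this cone to be prime, i.e., to induce a total order on the residue domain. This step, proving primeness from maximality, is the main obstacle: it requires the delicate algebraic lemma that in such a maximal proper cone either $a$ or $-a$ belongs to the cone for every $a \in A$ and that the support is a prime ideal.

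Given such a prime cone $P$, its support $\mathfrak{p} = \mathrm{supp}(P)$ is a prime ideal, so $A/\mathfrak{p}$ is an integral domain and $P$ descends to an ordering on its fraction field $k$. Let $R$ be the real closure of the ordered field $(k,\leq)$. Composing the maps $A \to A/\mathfrak{p} \hookrightarrow k \hookrightarrow R$ sends each coordinate $x_\ell$ to an element $\xi_\ell \in R$, and by construction of $P$ we have $f_j(\xi) \geq 0$, $g_i(\xi) = 0$, and $h_k(\xi) \neq 0$ for all indices, so the defining system has a solution $\xi \in R^n$ over the real closed field $R \supseteq \mathbb{R}$. Finally I would invoke the transfer principle for real closed fields, equivalently the Artin--Lang homomorphism theorem: since the defining data are polynomials with coefficients in $\mathbb{R}$ and $\mathbb{R}$ is a real closed subfield of $R$, solvability of this system of polynomial inequalities, equalities, and inequations over $R$ implies its solvability over $\mathbb{R}$. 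This produces a genuine point $x \in \mathbb{R}^n$ in the set, contradicting emptiness and completing the contrapositive.

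In summary, the difficulty is concentrated entirely in the forward direction and splits into two standard but nontrivial pieces of real algebraic geometry: the ordering-extension argument (Zorn's lemma plus the primeness-from-maximality lemma), which is the genuine algebraic obstacle, and the descent from a real closed overfield to $\mathbb{R}$ via the Tarski--Seidenberg transfer principle, which bridges the abstract ordering to a concrete real point.
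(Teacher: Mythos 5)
The paper does not contain a proof of this statement at all: Theorem \ref{theorem:Psatz} is quoted as background directly from \cite{bochnak2013real}, so there is no in-paper argument to compare yours against. Your sketch is a correct outline of the standard proof given in that very reference: the reverse direction (sign evaluation of cone, ideal, and monoid elements at a hypothetical point) is complete as written, and the forward direction correctly isolates the two genuine ingredients — the abstract Positivstellensatz obtained by a Zorn's-lemma extension to a maximal proper cone, with primeness of the support deduced from maximality, followed by passage to the real closure of the ordered residue field and descent to a point of $\mathbb{R}^{n}$ via the Artin--Lang/Tarski--Seidenberg transfer principle. The two lemmas you flag as nontrivial are exactly where the content lies; filling them in would reproduce the proof of the Positivstellensatz in Bochnak--Coste--Roy, so your proposal matches the approach the paper implicitly defers to.
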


When the $f_{j}$'s, $g_{i}$'s, $h_{k}$'s, and powers $a_{1},\ldots,a_{s}$ of the $h_{k}$'s are given, attempting to find solutions of the equation $f(x) + g(x) + h(x)^{2} = 0$ is an SOS feasibility problem \cite{parrilo2003semidefinite}. This problem can be written as 
\begin{multline*}
    \mbox{\textbf{(Psatz-SOS)}} \quad \left(\sum_{i=1}^{m}{\eta_{i}(x)g_{i}(x)} - \sum_{S \subseteq \{1,\ldots,r\}}{\alpha_{S}(x)\prod_{i \in S}{f_{i}(x)}} \right. \\
   \left. - \prod_{k=1}^{s}{h_{k}(x)^{2a_{k}}}\right) \in SOS \\
    \alpha_{S} \in SOS \ \forall S \subseteq \{1,\ldots,r\}
\end{multline*}
We let $\mbox{Psatz-SOS}(\mathcal{F}, \mathcal{G}, \mathcal{H})$ denote an instance of the above SOS feasibility problem. Finally, Farkas's Lemma is as follows.
 \begin{lemma}[Farkas's Lemma \cite{dinh2014farkas}]
 \label{lemma:Farkas}
 Let $A \in \mathbb{R}^{m \times n}$ and $b \in \mathbb{R}^{m}$. Then exactly one of the following is true:
 \begin{enumerate}
 \item There exists $x \in \mathbb{R}^{n}$ with $Ax \leq b$
 \item There exists $y \in \mathbb{R}^{m}$ with $A^{T}y = 0$, $b^{T}y < 0$, and $y \geq 0$.
 \end{enumerate}
 \end{lemma}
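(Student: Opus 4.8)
The plan is to prove the ``exactly one'' dichotomy in two halves: first that conditions (1) and (2) cannot both hold, and then that at least one of them must hold (equivalently, that the failure of (1) forces (2)). The first half is a one-line algebraic computation. Suppose both held: take $x$ with $Ax \le b$ and $y \ge 0$ with $A^T y = 0$ and $b^T y < 0$. Since $y \ge 0$ and $b - Ax \ge 0$ componentwise, we have $y^T(b - Ax) \ge 0$, hence $b^T y \ge (A^T y)^T x = 0$, contradicting $b^T y < 0$. So at most one alternative holds.

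The substance is the second half: if (1) fails, then (2) holds. I would recast feasibility geometrically. The system $Ax \le b$ is solvable iff $b - Ax \ge 0$ for some $x$, i.e. iff $b$ lies in the set $C = \{Ax + s : x \in \mathbb{R}^{n},\ s \in \mathbb{R}^{m}_{\ge 0}\}$, which is a convex cone (the Minkowski sum of the subspace $\mathrm{im}(A)$ and the nonnegative orthant $\mathbb{R}^{m}_{\ge 0}$). If (1) fails, then $b \notin C$.

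Next I would separate $b$ from $C$. Assuming $C$ is closed and convex with $b \notin C$, strict separation yields a vector $y \in \mathbb{R}^{m}$ with $y^T b < \inf_{c \in C} y^T c$. Because $C$ is a cone containing the origin, the infimum is finite only if $y^T c \ge 0$ for every $c \in C$ (otherwise scaling such a $c$ drives the infimum to $-\infty$), and then the infimum equals $0$, so $y^T b < 0$. Substituting $c = Ax$ for arbitrary $x$ of both signs into $y^T c \ge 0$ forces $y^T A x = 0$ for all $x$, i.e. $A^T y = 0$; substituting the standard basis vectors $c = e_i \in \mathbb{R}^{m}_{\ge 0}$ gives $y_i \ge 0$. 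Together these are exactly condition (2).

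The main obstacle is justifying that $C$ is closed, which is precisely what legitimizes the separation step and is the genuine crux of every Farkas-type result. I would establish it via the classical fact that a finitely generated convex cone is closed, through the Minkowski--Weyl theorem or a Carath\'eodory-style argument bounding the number of generators needed to represent any point of $C$. If one wishes to avoid topology altogether, a fully elementary alternative is Fourier--Motzkin elimination: successively eliminate $x_1,\ldots,x_n$ from $Ax \le b$, observing that each new inequality is a nonnegative combination of the previous ones; if the system is infeasible the elimination terminates in an inconsistency $0 \le -\delta$ with $\delta > 0$, and the accumulated nonnegative multipliers assemble into the desired certificate $y$.
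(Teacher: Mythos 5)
Your proposal is mathematically correct, but note that the paper never proves this statement at all: Farkas's Lemma appears in Section III as a cited background result (attributed to the literature) and is used as a black box in the proof of Lemma \ref{lemma:Psatz-Farkas} and the viability theorems, so there is no in-paper argument to compare yours against. What you have written is the classical separation-based proof, and it is sound: the exclusivity half via $0 \leq y^{T}(b-Ax) = y^{T}b - (A^{T}y)^{T}x = y^{T}b < 0$ is airtight; the existence half correctly recasts feasibility as membership of $b$ in the cone $C = \{Ax + s : x \in \mathbb{R}^{n}, s \geq 0\}$, and your extraction of $A^{T}y = 0$ (by plugging in $\pm$ columns of $A$), $y \geq 0$ (via the $e_{i}$), and $b^{T}y < 0$ (via the cone forcing the infimum to be $0$) from the separating functional is exactly right. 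You also correctly identify the genuine crux — closedness of the finitely generated cone $C$, without which strict separation is not available — and both remedies you cite (conic Carath\'eodory giving $C$ as a finite union of closed simplicial cones, or bypassing topology entirely with Fourier--Motzkin elimination, whose accumulated nonnegative multipliers assemble the certificate $y$ directly) are standard and complete the argument. If this were to be included in the paper, the Fourier--Motzkin route has the advantage of being fully constructive and self-contained, while the separation route generalizes more readily to the convex-analytic statements (e.g., Motzkin's Transposition Theorem) that the paper also invokes without proof in Appendix \ref{appendix:practical} and in the proof of Proposition \ref{prop:CPI-conditions}.
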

Farkas's Lemma gives equivalent conditions for existence of solutions of linear inequalities. 

\section{Problem Formulation: Safety Verification}
\label{sec:verification}
This section presents our approach for safety verification of semi-algebraic sets. We first consider the problem of verifying viability, and then the problem of verifying feedback controlled positive invariance. We discuss several special cases and implications for construction of CBF-based control policies.

As a preliminary, we define the concept of a \emph{practical} semi-algebraic set.

\begin{definition}
\label{def:practical-semialgebraic}
A closed semi-algebraic set $\mathcal{C}$ is practical if $\mathcal{C} = \bigcup_{i=1}^{r}{\mathcal{C}_{i}}$ where each $\mathcal{C}_{i}$ is a closed and simple semi-algebraic set with $$\mathcal{C}_{i} = \bigcap_{j=1}^{r_{i}}{\{x : b_{ij}(x) \geq 0\}}$$ and for all $x \in \mathcal{C}_{i}$, there exists $z \in \mathbb{R}^{n}$ satisfying $\frac{\partial b_{ij}}{\partial x}z + b_{ij}(x) > 0$ for all $j=1,\ldots,r_{i}$.
\end{definition}

In other words, $\mathcal{C}$ is a practical semi-algebraic set if it is a union of closed simple semi-algebraic sets, each of which satisfies the conditions of Lemma \ref{lemma:tangent-cone-intersection}. Techniques for verifying practicality of a semi-algebraic set are discussed in Appendix \ref{appendix:practical}.

\subsection{Viability Verification}
\label{subsec:viability-verification}
The problem studied in this section is formulated as follows.

\begin{problem}
    \label{problem:viability-verification}
    Given a practical semialgebraic set $\mathcal{C}$, verify whether $\mathcal{C}$ is viable under dynamics (\ref{eq:dynamics}) when $f$ and $g$ are polynomials in $x$.
\end{problem}


Our solution to Problem \ref{problem:viability-verification} is as follows. First, we characterize the boundary of a closed semialgebraic set, and then describe the tangent cone to each point in the boundary. Next, using Nagumo's Theorem, we develop necessary and sufficient conditions for viability of the set. Finally, we use the Positivstellensatz to construct an SOS program for verifying viability. All proofs can be found in Appendix \ref{appendix:proofs}.

\begin{theorem}
    \label{prop:semialgebraic-boundary}
    Let $\mathcal{C}$ be a practical semialgebraic set defined by polynomials $b_{ij}(x)$ as in (\ref{eq:closed-semialgebraic}). If $x \in \partial \mathcal{C}$, then there exists (i) $S \subseteq \{1,\ldots,s\}$, (ii) a set of integers $\{j_{i} : i \in \{1,\ldots,r\} \setminus S\}$, where $j_i \in \{1,\ldots,r_{i}\}$, and (iii) a collection of nonempty subsets $T_{i} \subseteq \{1,\ldots,r_{i}\}$ for $i \in S$ such that $b_{ij_{i}}(x) < 0$ for $i \notin S$, $b_{ij}(x) = 0$ for $(i,j) \in S \times T_{i}$, and $b_{ij}(x) > 0$ for $i \in S$, $j \notin T_{i}$.
\end{theorem}

In Theorem \ref{prop:semialgebraic-boundary}, the set $S$ corresponds to the set of $i$ such that $x \in \mathcal{C}_{i}$. For each $i \notin S$, $j_{i}$ is an index such that $b_{ij_{i}}(x) < 0$. For each $i \in S$, the set $T_{i}$ corresponds to the set of $j$ such that $b_{ij}(x) = 0$, while $\{1,\ldots,r_{i}\} \setminus T_{i}$ corresponds to the set of $j$ such that $b_{ij}(x) > 0$, implying that $x$ is in the interior of $\{b_{ij}(x) \geq 0\}$.

\textbf{Example 1:} As an example of the notation in Theorem \ref{prop:semialgebraic-boundary}, let $\mathcal{C} = \mathcal{C}_{1} \cup \mathcal{C}_{2} \subseteq \mathbb{R}^{2}$, where 
\begin{eqnarray*}
\mathcal{C}_{1} &=& \{x: b_{11}(x) \triangleq 9 - x_{1}^{2} - x_{2}^{2} \geq 0\} \\
&& \cap \{x: b_{12}(x) \triangleq -x_{1}-x_{2} \geq 0\} \\
\mathcal{C}_{2} &=& \{x: b_{21}(x) \triangleq 1-x_{1}^{3} \geq 0\} \\
&& \cap \{x: b_{22}(x) \triangleq x_{1}^{2} + x_{2}^{2} - 1 \geq 0\}
\end{eqnarray*}
Let $x_{1} = 0$, $x_{2} = 1$. We have that $x \notin \mathcal{C}_{1}$ with $b_{12}(x) < 0$, so $S = \{2\}$ and $j_{1} = 2$. Considering $\mathcal{C}_{2}$, we have $b_{21}(x) > 0$ and $b_{22}(x) = 0$, and hence $T_{2} = \{2\}$.

We let $\Delta(\mathcal{C})$ denote the set of $x \in \mathcal{C}$ that satisfy the conditions (i)--(iii) in Theorem \ref{prop:semialgebraic-boundary}. Theorem \ref{prop:semialgebraic-boundary} implies that $\partial \mathcal{C} \subseteq \Delta(\mathcal{C})$. The following describes the tangent cone at each point of $\Delta(\mathcal{C})$.

\begin{lemma}
\label{lemma:tangent-cone-semi-algebraic}
Suppose that $\mathcal{C}$ is a practical semi-algebraic set. Let $x \in \Delta(\mathcal{C})$. Let $S$ and $\{T_{i} : i \in S\}$ denote the maximal sets satisfying conditions (i)--(iii) of Theorem \ref{prop:semialgebraic-boundary} at $x$. Then $$\mathcal{T}_{\mathcal{C}}(x) = \bigcup_{i \in S}{\bigcap_{j \in T_{i}}{\left\{z: \frac{\partial b_{ij}}{\partial x}z \geq 0\right\}}}.$$
\end{lemma}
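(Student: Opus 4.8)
The plan is to decompose the tangent cone of the union $\mathcal{C} = \bigcup_{i=1}^{r}\mathcal{C}_{i}$ into the tangent cones of the individual simple sets $\mathcal{C}_{i}$ and then invoke Lemma~\ref{lemma:tangent-cone-intersection} on each piece. By maximality, $S = \{i : x \in \mathcal{C}_{i}\}$ and, for $i \in S$, $T_{i} = \{j : b_{ij}(x) = 0\}$ are exactly the full active-index sets. First I would establish the localized identity $\mathcal{T}_{\mathcal{C}}(x) = \bigcup_{i \in S}\mathcal{T}_{\mathcal{C}_{i}}(x)$. Writing $\mathrm{dist}(y,\mathcal{C}) = \min_{i}\mathrm{dist}(y,\mathcal{C}_{i})$, the key observation is that for $i \notin S$ we have $x \notin \mathcal{C}_{i}$; since each $\mathcal{C}_{i}$ is closed (Definition~\ref{def:practical-semialgebraic}), $\mathrm{dist}(x,\mathcal{C}_{i}) \geq \delta$ for some $\delta > 0$, so $\mathrm{dist}(x + \tau z, \mathcal{C}_{i}) \geq \delta - \tau\|z\|$ stays bounded away from $0$ for small $\tau$. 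Hence for $\tau$ sufficiently small (depending on $z$) the minimum defining $\mathrm{dist}(x + \tau z, \mathcal{C})$ is attained at an index in $S$, and the sets $\mathcal{C}_{i}$ with $i \notin S$ are irrelevant to the tangent cone at $x$.

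With this reduction I would prove the identity by two inclusions. The inclusion $\supseteq$ is immediate from monotonicity: $\mathcal{C}_{i} \subseteq \mathcal{C}$ gives $\mathrm{dist}(x + \tau z, \mathcal{C}) \leq \mathrm{dist}(x + \tau z, \mathcal{C}_{i})$, so any $z$ with $\liminf_{\tau \to 0}\mathrm{dist}(x + \tau z, \mathcal{C}_{i})/\tau = 0$ also meets the defining condition of Definition~\ref{def:tangent-cone} for $\mathcal{C}$. For $\subseteq$, take $z \in \mathcal{T}_{\mathcal{C}}(x)$ and a sequence $\tau_{k} \downarrow 0$ along which $\mathrm{dist}(x + \tau_{k}z, \mathcal{C})/\tau_{k} \to 0$; by the previous paragraph the minimizing index lies in $S$ for large $k$, and since $S$ is finite some fixed $i^{\ast} \in S$ recurs infinitely often, yielding a subsequence along which $\mathrm{dist}(x + \tau_{k}z, \mathcal{C}_{i^{\ast}})/\tau_{k} \to 0$, i.e.\ $z \in \mathcal{T}_{\mathcal{C}_{i^{\ast}}}(x)$.

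Finally I would compute each $\mathcal{T}_{\mathcal{C}_{i}}(x)$ for $i \in S$ via Lemma~\ref{lemma:tangent-cone-intersection}. By Definition~\ref{def:practical-semialgebraic}, each $\mathcal{C}_{i} = \bigcap_{j=1}^{r_{i}}\{x : b_{ij}(x) \geq 0\}$ satisfies the constraint-qualification hypothesis of that lemma, so there exists $z$ with $\frac{\partial b_{ij}}{\partial x}z + b_{ij}(x) > 0$ for all $j$; since $T_{i} \neq \emptyset$, for an active index $j \in T_{i}$ this reads $\frac{\partial b_{ij}}{\partial x}z > 0$, so moving from $x$ in direction $-z$ exits $\mathcal{C}_{i}$ and places $x$ on $\partial\mathcal{C}_{i}$, exactly the regime where the lemma applies. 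It then gives $\mathcal{T}_{\mathcal{C}_{i}}(x) = \bigcap_{j \in T_{i}}\{z : \frac{\partial b_{ij}}{\partial x}z \geq 0\}$, and substituting into the union identity yields the claimed formula. I expect the main obstacle to be the $\subseteq$ direction of the union identity: the $\liminf$ in Definition~\ref{def:tangent-cone} does not commute with the finite minimum over $i$, so the single recurring index $i^{\ast}$ must be extracted by a pigeonhole argument on the finite set $S$ rather than termwise.
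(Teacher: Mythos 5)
Your proof is correct and follows essentially the same route as the paper's: it establishes the decomposition $\mathcal{T}_{\mathcal{C}}(x) = \bigcup_{i \in S}\mathcal{T}_{\mathcal{C}_{i}}(x)$ by distance/liminf estimates and then applies Lemma \ref{lemma:tangent-cone-intersection} to each simple set $\mathcal{C}_{i}$. If anything you are more careful than the paper: you explicitly dispose of the indices $i \notin S$ using closedness of $\mathcal{C}_{i}$, and you verify $x \in \partial\mathcal{C}_{i}$ before invoking Lemma \ref{lemma:tangent-cone-intersection}, both of which the published proof leaves implicit (its $\subseteq$ direction argues by contradiction where you use a pigeonhole/subsequence extraction, but the two arguments are interchangeable).
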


Lemma \ref{lemma:tangent-cone-semi-algebraic} implies that, if $\mathcal{C}$ is practical, then the tangent cone can be written as a union of the tangent cones of the simple semi-algebraic sets comprising $\mathcal{C}$. In the preceding example, we would have $\mathcal{T}_{\mathcal{C}}(x) = \{z : x_{1}z_{1} + x_{2}z_{2} \geq 0\}$. 

Intuitively, $x(t)$ must remain within the region $\mathcal{C}$ for all time. If $x(t) = (0 \ 1)$, then $x(t)$ is in the exterior of $\mathcal{C}_{1}$, and so the control must be chosen to ensure that $x(t)$ remains in $\mathcal{C}_{2}$. Furthermore, since $b_{21}(x) > 0$, it is only required to choose a control input at time $t$ to ensure that $x$ continues to satisfy $b_{22}(x) \geq 0$.

Before presenting our SOS approach to viability verification, we introduce the following preliminary result. This preliminary result uses Farkas lemma to prove that existence of a control $u \in \mathcal{U}$ satisfying the conditions of Nagumo's Theorem is equivalent to the non-existence of solutions to a system of polynomial inequalities.


\begin{lemma}
\label{lemma:Psatz-Farkas}
Let $R$ be a positive integer, and suppose that  $\Theta_{1}(x),\ldots,\Theta_{R}(x) \in \mathbb{R}^{N \times m}$ and $\psi_{1}(x),\ldots,\psi_{R}(x) \in \mathbb{R}^{N}$ are polynomial functions of $x$. Let $\mathcal{P}$ and $\mathcal{Q}$ be finite sets of scalar polynomial functions of $x$. Then the following are equivalent:
\begin{enumerate}
    \item For any $x$ with $p(x) > 0$ for all $p \in \mathcal{P}$ and $q(x) = 0$ for all $q \in \mathcal{Q}$, there exist $i \in \{1,\ldots,R\}$ and $u \in \mathbb{R}^{m}$ such that $\Theta_{i}(x)u \leq \psi_{i}(x)$
    \item There exist polynomials $\lambda \in \Lambda$, $\phi \in \Phi$, and $\sigma \in \Gamma$ with $\lambda + \phi^{2} + \sigma = 0$, where $\Lambda, \Phi, \Gamma \subseteq \mathbb{R}[x_{1},\ldots,x_{n},y_{ij} : i=1,\ldots,R, j=1,\ldots,N]$ are defined by
    \begin{IEEEeqnarray*}{rCl}
        \Gamma &=& \Gamma[(y_{ij} : i=1,\ldots,R, j=1,\ldots,N),\\
        && (-\psi_{i}(x)^{T}y : i=1,\ldots,R), (p(x) : p \in \mathcal{P})]\\
    \Lambda &=& \mathcal{I}[(\Theta_{ij}(x)^{T}y_{i} : i=1,\ldots,R,j=1,\ldots,m), \\
    && (q(x): q \in \mathcal{Q})] \\
    \Phi &=& \mathcal{M}[(\psi_{i}(x)^{T}y_{i} : i=1,\ldots,R), (p(x) : p \in \mathcal{P})] 
    \end{IEEEeqnarray*}
    and $\Theta_{i1}(x),\ldots,\Theta_{im}(x)$ are the columns of $\Theta_{i}(x)$.
\end{enumerate}
\end{lemma}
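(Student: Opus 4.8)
The plan is to prove the equivalence by reducing condition (1) to an \emph{emptiness} statement for a semialgebraic set and then invoking the Positivstellensatz (Theorem \ref{theorem:Psatz}), using Farkas's Lemma (Lemma \ref{lemma:Farkas}) as the bridge that converts the existence of a control input $u$ into the existence of dual multipliers $y$. Since condition (1) is a ``for all $x$, there exist $(i,u)$'' statement, while the Positivstellensatz certifies emptiness, the first move is to negate (1) and rewrite its negation as non-emptiness of a single semialgebraic set in the joint variables $(x,y)$.

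First I would fix an admissible $x$ (so $p(x)>0$ for all $p\in\mathcal{P}$ and $q(x)=0$ for all $q\in\mathcal{Q}$) and a fixed index $i$, and apply Farkas's Lemma to the linear system $\Theta_{i}(x)u \leq \psi_{i}(x)$ in the variable $u$. Taking $A=\Theta_{i}(x)$ and $b=\psi_{i}(x)$, the lemma states that this system is feasible in $u$ if and only if there is \emph{no} $y_{i}\geq 0$ with $\Theta_{i}(x)^{T}y_{i}=0$ and $\psi_{i}(x)^{T}y_{i}<0$; the $m$ scalar equations comprising $\Theta_{i}(x)^{T}y_{i}=0$ are exactly $\Theta_{ij}(x)^{T}y_{i}=0$ for $j=1,\ldots,m$. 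Next I would dispatch the quantifier over $i$: condition (1) asserts that for every admissible $x$ \emph{some} $i$ yields a feasible $u$, so its negation asserts that there is an admissible $x$ for which \emph{every} $i$ yields an infeasible system, i.e., for every $i$ there is a dual certificate $y_{i}$. Since the $y_{i}$ are independent across $i$, these certificates combine into a single vector $y=(y_{ij})$, so the negation of (1) is precisely the statement that the set
\[
\{(x,y) : p(x)>0\ (p\in\mathcal{P}),\ q(x)=0\ (q\in\mathcal{Q}),\ y_{ij}\geq 0,\ \Theta_{ij}(x)^{T}y_{i}=0,\ \psi_{i}(x)^{T}y_{i}<0\}
\]
is nonempty.

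Finally I would apply the Positivstellensatz to this set, so that condition (1) holds if and only if the set is empty if and only if an emptiness certificate exists. Here the generator bookkeeping must match the stated definitions exactly: the nonstrict inequalities $y_{ij}\geq 0$ together with the inequality relaxations $-\psi_{i}(x)^{T}y_{i}\geq 0$ and $p(x)\geq 0$ of the strict constraints generate the cone $\Gamma$; the equalities $q(x)=0$ and $\Theta_{ij}(x)^{T}y_{i}=0$ generate the ideal $\Lambda$; and the $\neq 0$ parts of the strict constraints, namely $\psi_{i}(x)^{T}y_{i}$ and $p(x)$, generate the monoid $\Phi$ that supplies the squared term. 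Matching $f\leftrightarrow\sigma$, $g\leftrightarrow\lambda$, and $h\leftrightarrow\phi$, the certificate $f+g+h^{2}=0$ becomes $\lambda+\phi^{2}+\sigma=0$, which is exactly condition (2).

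The step I expect to be the main obstacle is the careful translation of the quantifier structure, specifically ensuring that the existential ``$\exists i$'' in (1) dualizes correctly to a \emph{universal} ``for every $i$ there is a dual certificate'' under negation, and that these per-$i$ certificates legitimately combine into one joint existence over $(x,y)$. A secondary technical point requiring care is the standard but error-prone encoding of each strict inequality $p>0$ as the conjunction $p\geq 0$ and $p\neq 0$, which forces $p$ (and likewise $\psi_{i}^{T}y_{i}$) to appear in both the cone $\Gamma$ and the monoid $\Phi$; keeping the signs and placements consistent with the definitions of $\Gamma$, $\Lambda$, and $\Phi$ is what makes the final identification of (2) clean.
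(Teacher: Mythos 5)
Your proposal is correct and follows essentially the same route as the paper's proof: apply Farkas's Lemma pointwise in $x$ for each index $i$ to dualize feasibility of $\Theta_{i}(x)u \leq \psi_{i}(x)$ into non-existence of multipliers $y_{i}$, combine the per-$i$ certificates into a single semialgebraic emptiness statement in the joint variables $(x,y)$, and then invoke the Positivstellensatz with exactly the generator assignment you describe ($\Gamma$ from the nonstrict and relaxed inequalities, $\Lambda$ from the equalities, $\Phi$ from the $\neq 0$ parts of the strict constraints). Your handling of the quantifier dualization and the encoding of each strict inequality as a $\geq 0$ condition plus a $\neq 0$ condition is in fact more explicit than the paper's own, quite terse, argument.
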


We let $\mathcal{B}(\{\Theta_{1},\ldots,\Theta_{R}\},\{\psi_{1},\ldots,\psi_{R}\},\mathcal{P},\mathcal{Q})$ denote the set of polynomials $(\lambda,\eta,\sigma)$ satisfying the conditions of Lemma \ref{lemma:Psatz-Farkas}. For any $\Theta_{1},\ldots,\Theta_{R}$, $\psi_{1},\ldots,\psi_{R}$, $\mathcal{P}$, and $\mathcal{Q}$, we can check whether $\mathcal{B}(\{\Theta_{1},\ldots,\Theta_{R}\},\{\psi_{1},\ldots,\psi_{R}\},\mathcal{P},\mathcal{Q})$ is nonempty by solving the problem Psatz-SOS$(\mathcal{F},\mathcal{G},\mathcal{H})$, where
\begin{eqnarray*}
    \mathcal{F} &=& \{y_{ij} : i=1,\ldots,R, j=1,\ldots,N\} \\
    && \cup \{-\psi_{i}(x)^{T}y : i=1,\ldots,R\} \cup \{p(x) : p \in \mathcal{P}\} \\
    \mathcal{G} &=& \{\Theta_{ij}(x)^{T}y_{i} : i=1,\ldots,R,j=1,\ldots,m\} \\
    && \cup \{q(x) : q \in \mathcal{Q}\}\} \\
    \mathcal{H} &=& \{\psi_{i}(x)^{T}y_{i} : i=1,\ldots,R\} \cup \{p(x) : p \in \mathcal{P}\}
\end{eqnarray*}

Before stating the main result, we introduce some notation. For any finite set $\{1,\ldots,s\}$ and any $S \subseteq \{1,\ldots,s\}$, we let $\Pi(S)$ denote the set of mappings $\pi: S \rightarrow \mathbb{Z}_{+}$ with $\pi(i) \in \{1,\ldots,r_{i}\}$. We let $Z(S) = \prod_{i \in S}{2^{\{1,\ldots,r_{i}\}}}$, i.e., collections of subsets $(T_{i} : i \in S)$. 

Let $\mathcal{C}$ be a practical semi-algebraic set defined as in (\ref{eq:closed-semialgebraic}). For any $S \subseteq \{1,\ldots,s\}$, $T = (T_{i} : i \in S) \in Z(S)$, and $l \in S$ we define $\hat{\Theta}_{S,T,l}(x)$ to be a matrix with rows $-\frac{\partial b_{lj}}{\partial x}g(x)$ for $j \in T_{l}$, and define $\hat{\psi}_{S,T,l}(x)$ to be a vector with entries $\frac{\partial b_{lj}}{\partial x}f(x)$ for  $j \in T_{l}$.

We now state the main result on viability of semi-algebraic sets.

\begin{theorem}
    \label{theorem:semialgebraic-viability}
    Let $\mathcal{C}$ be a practical semi-algebraic set defined as in (\ref{eq:closed-semialgebraic}). Suppose that a system has dynamics (\ref{eq:dynamics}) with $f,g$ polynomials, and the set of inputs $u$ satisfies $\mathcal{U} = \{u: Au \leq c\}$. Then  $\mathcal{C}$ is viable if and only if for every $S \subseteq \{1,\ldots,s\}$, $\pi \in \Pi(\{1,\ldots,s\}\setminus S)$, and  $(T_{i} : i \in S) \in Z(S)$, the set $\mathcal{B}(\{\Theta_{S,T,l} : l \in S\}, \{\psi_{S,T,l} : l \in S\}, \mathcal{P}_{S,T,\pi}, \mathcal{Q}_{S,T,\pi})$
     is nonempty, where 
    \begin{IEEEeqnarray*}{rCl}
    \Theta_{S,T,l} &=& \left(
    \begin{array}{c}
    \hat{\Theta}_{S,T,l}(x) \\
    A
    \end{array}
    \right), \quad \psi_{S,T,l}(x) = \left(
    \begin{array}{c}
    \hat{\psi}_{S,T,l}(x) \\
    c
    \end{array}
    \right) \\
    \mathcal{P}_{S,T,\pi} &=& \{-b_{i\pi(i)} : i \notin S\} \cup \{b_{ij} : i \in S, j \notin T_{i}\} \\
    \mathcal{Q}_{S,T,\pi} &=& \{b_{ij}: i \in S, j \in T_{i}\}
    \end{IEEEeqnarray*}
\end{theorem}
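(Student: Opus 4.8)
The plan is to reduce the geometric viability condition of Definition~\ref{def:viability} to a finite family of instances of statement~(1) of Lemma~\ref{lemma:Psatz-Farkas}, one per combinatorial boundary type, and then to invoke the equivalence (1)$\iff$(2) of that lemma to replace each instance by nonemptiness of the corresponding set $\mathcal{B}$. By Definition~\ref{def:viability}, $\mathcal{C}$ is viable iff for every $x \in \partial\mathcal{C}$ there is $u \in \mathcal{U}$ with $f(x) + g(x)u \in \mathcal{T}_{\mathcal{C}}(x)$. First I would invoke Proposition~\ref{prop:semialgebraic-boundary} to get $\partial\mathcal{C} \subseteq \Delta(\mathcal{C})$ and partition $\Delta(\mathcal{C})$ by the boundary data $(S,\pi,T)$: for fixed $S$, $\pi \in \Pi(\{1,\ldots,s\}\setminus S)$ and $T=(T_i : i\in S)\in Z(S)$, the points realizing that data are exactly $\{x : p(x) > 0 \ \forall p \in \mathcal{P}_{S,T,\pi}, \ q(x)=0 \ \forall q \in \mathcal{Q}_{S,T,\pi}\}$, since $b_{i\pi(i)}(x)<0$ forces $x \notin \mathcal{C}_i$ for $i\notin S$, the equalities $b_{ij}(x)=0$ ($j\in T_i$) mark the active constraints, and $b_{ij}(x)>0$ ($j\notin T_i$) the inactive ones. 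A point of $\mathcal{C}$ realizes only the data given by its true membership set $S=\{i : x\in\mathcal{C}_i\}$ and its true active sets $T_i$, so the classes are consistent and cover $\Delta(\mathcal{C})$.

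Next I would rewrite the tangent-cone membership pointwise. Fixing a class $(S,\pi,T)$ and any $x$ in it, Lemma~\ref{lemma:tangent-cone-semi-algebraic} gives $\mathcal{T}_{\mathcal{C}}(x)=\bigcup_{l\in S}\bigcap_{j\in T_l}\{z : \frac{\partial b_{lj}}{\partial x}z \geq 0\}$. Hence some $u\in\mathcal{U}$ satisfies $f(x)+g(x)u\in\mathcal{T}_{\mathcal{C}}(x)$ iff there exist $l\in S$ and $u$ with $\frac{\partial b_{lj}}{\partial x}(f(x)+g(x)u)\geq 0$ for all $j\in T_l$ together with $Au\leq c$; moving the known $f$-terms to the right-hand side and stacking the actuation rows, this is precisely $\exists l\in S,\ \exists u : \Theta_{S,T,l}(x)u \leq \psi_{S,T,l}(x)$, with $\Theta_{S,T,l},\psi_{S,T,l}$ as defined before the theorem (after padding the $\Theta_{S,T,l}$ with zero rows, and $\psi_{S,T,l}$ with zeros, to a common row dimension $N$, which does not affect feasibility of the individual systems). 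Quantified over all $x$ in the class, this is exactly statement~(1) of Lemma~\ref{lemma:Psatz-Farkas} with index set $S$ and constraint sets $\mathcal{P}=\mathcal{P}_{S,T,\pi}$, $\mathcal{Q}=\mathcal{Q}_{S,T,\pi}$. Applying (1)$\iff$(2) converts the class condition into nonemptiness of $\mathcal{B}(\{\Theta_{S,T,l}\},\{\psi_{S,T,l}\},\mathcal{P}_{S,T,\pi},\mathcal{Q}_{S,T,\pi})$, and quantifying over all $(S,\pi,T)$ yields the stated equivalence in both directions.

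The step I expect to be most delicate is reconciling the index set $\Delta(\mathcal{C})$ over which the SOS conditions range with the set $\partial\mathcal{C}$ appearing in the definition of viability, since in general $\Delta(\mathcal{C})\supsetneq\partial\mathcal{C}$ (e.g.\ a point on the shared boundary of two overlapping pieces may be interior to $\mathcal{C}$). The resolution is that any $x\in\Delta(\mathcal{C})\setminus\partial\mathcal{C}$ is interior to $\mathcal{C}$, so its true tangent cone is $\mathbb{R}^{n}$; because Lemma~\ref{lemma:tangent-cone-semi-algebraic} certifies that the union-of-cones formula equals the true tangent cone at \emph{every} point of $\Delta(\mathcal{C})$, the formula equals $\mathbb{R}^{n}$ there and the existence-of-$u$ requirement holds for any $u\in\mathcal{U}$. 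Thus enforcing the per-class conditions on all of $\Delta(\mathcal{C})$ is equivalent to enforcing viability on $\partial\mathcal{C}$, which is what makes the ``only if'' direction go through. I would handle the degenerate terms similarly: a class with some $T_l=\emptyset$ contributes the full space $\mathbb{R}^{n}$ to the tangent-cone union, so $\Theta_{S,T,l}=A$, $\psi_{S,T,l}=c$, and $\mathcal{B}$ is nonempty whenever $\mathcal{U}\neq\emptyset$, while a class with $S=\emptyset$ describes points lying outside $\mathcal{C}$ and hence carries no viability requirement; the substantive quantification is therefore over nonempty $S$, which is exactly the data attached to genuine boundary points since $\partial\mathcal{C}\subseteq\mathcal{C}$ forces the membership set to be nonempty.
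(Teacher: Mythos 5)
Your proposal is correct and follows essentially the same route as the paper's proof: decompose $\partial\mathcal{C}\subseteq\Delta(\mathcal{C})$ into combinatorial classes via Proposition~\ref{prop:semialgebraic-boundary}, identify the tangent cone on each class via Lemma~\ref{lemma:tangent-cone-semi-algebraic}, and convert each per-class existence-of-$u$ condition into nonemptiness of $\mathcal{B}$ via Lemma~\ref{lemma:Psatz-Farkas}. Your resolution of the gap between $\Delta(\mathcal{C})$ and $\partial\mathcal{C}$ (interior points have tangent cone $\mathbb{R}^{n}$, so the requirement is vacuous there) is exactly the argument the paper uses in its ``only if'' direction, and your explicit treatment of the degenerate classes $T_{l}=\emptyset$ and $S=\emptyset$ is merely more careful than the paper's, not a different approach.
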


From Theorem \ref{theorem:semialgebraic-viability}, the viability of a semi-algebraic set can be verified by solving a collection of SOS feasibility problems to verify that $\mathcal{B}(\{\Theta_{S,T,l} : l \in S\}, \{\psi_{S,T,l} : l \in S\}, \mathcal{P}_{S,T,\pi}, \mathcal{Q}_{S,T,\pi})$ is nonempty for all $S$, $T$, $l$, and $\pi$. 

The results of this section imply that, if the degree of the polynomials in the (Psatz-SOS) optimization are allowed to be arbitrarily large, then a solution is guaranteed to exist if and only if  $\mathcal{C}$ is viable. Note that, in practice, numerically solving the SOS programs will require limiting the degree of the polynomials in the optimization problem. Hence, while existence of a feasible solution implies safety of the system, non-existence of feasible solutions does not necessarily imply that the system is unsafe, only that there does not exist a safety certificate with the desired degree.

We observe that the number of SOS programs that must be solved as well as the dimension of each program may be large, especially for high-dimensional polynomials with high-degree functions $b_{ij}$. In the following section, we describe how to derive simplified SOS programs for special cases of semi-algebraic sets.

\subsection{Special Cases of the Semi-Algebraic Framework}
\label{subsec:special-cases}
 In what follows, we provide derivations for several such special cases of semi-algebraic sets. Each of these cases can be motivated by different types of CBF-based control. First, we consider the case where $s=1$ and $r_{1}=1$, i.e., when the set $\mathcal{C} = \{x : b(x) \geq 0\}$ is the super-level set of a single polynomial. This  corresponds to control policies with a single CBF constraint. 
 Second, we consider the case where $s=1$ and $r_{1} > 1$, i.e., the set $\mathcal{C}$ is a simple semi-algebraic set. This case corresponds to control policies with multiple CBF constraints, with High-Order Control Barrier Functions (HOCBFs) as a special case. 
 Finally, we consider the case where $s > 1$ and $r_{i} =1$ for $i=1,\ldots,s$. This case arises when there are multiple different CBF policies for different regions of the state space and the controller must switch between them.
\subsubsection{Super-Level Set of Single Polynomial} For simplicity, we first consider the case where $\mathcal{U} = \mathbb{R}^{m}$, meaning there are no constraints on the control input. We have the following result that follows  from Theorem \ref{theorem:semialgebraic-viability}.

\begin{corollary}
\label{prop:unconstrained-single-poly-viability}
Suppose that $\mathcal{C} = \{x : b(x) \geq 0\}$ for a polynomial $b$. Then $\mathcal{C}$ is viable if and only if $\mathcal{B}(0,0,\mathcal{P},\mathcal{Q})$ is empty, where 
\begin{IEEEeqnarray*}{rCl}
\mathcal{P} &=& \left\{-\frac{\partial b}{\partial x}f(x)\right\} \\
\mathcal{Q} &=& \left\{b,\frac{\partial b}{\partial x}g_{1}(x),\ldots,\frac{\partial b}{\partial x}g_{m}(x)\right\}
\end{IEEEeqnarray*}
\end{corollary}

\begin{proof}
    Since $s=1$, $r_{1}=1$, and there are no input constraints, we have that the conditions of Theorem \ref{theorem:semialgebraic-viability} only need to be checked for $S = T = \{1\}$. For simplicity, we omit the subscripts to obtain 
    \begin{IEEEeqnarray*}{rCl}
        \Theta &=& -\frac{\partial b}{\partial x}g(x), \quad \psi = \frac{\partial b}{\partial x}f(x) \\
        \mathcal{P} &=& \emptyset, \mathcal{Q} = \{b\}
    \end{IEEEeqnarray*}
    We can further simplify as follows. We know that, since $u$ is unconstrained and $\psi$ is a scalar, there is no solution $u$ to $\Theta(x) u \leq \psi$ if and only if $\Theta(x) = 0$ and $\psi$ is negative. Hence $\mathcal{C}$ is viable if and only if $\mathcal{B}(0,0,\mathcal{P},\mathcal{Q})$ as defined in the statement of the proposition is empty.
\end{proof}

We observe that verifying viability of $\mathcal{C}$ from Corollary \ref{prop:unconstrained-single-poly-viability} is equivalent to checking feasibility of the constraint 
\begin{multline*}
\eta(x)b(x) + \sum_{i=1}^{m}{\theta_{i}(x)\frac{\partial b}{\partial x}g_{i}(x)} - \alpha_{1}(x)\frac{\partial b}{\partial x}f(x) + \alpha_{0}(x) \\
+ \left(\frac{\partial b}{\partial x}f(x)\right)^{2v} = 0
\end{multline*}
for some integer $v$, polynomials $\eta(x),\theta_{1}(x),\ldots,\theta_{m}(x)$, and SOS polynomials $\alpha_{1}(x)$ and $\alpha_{0}(x)$. This is equivalent to a single SOS constraint.

In order to incorporate actuation constraints, we have the following generalization of Corollary \ref{prop:unconstrained-single-poly-viability}.

\begin{corollary}
\label{prop:constrained-single-poly-viability}
If the set $\mathcal{C} = \{x: b(x) \geq 0\}$ for a polynomial $b$ and $\mathcal{U} = \{u : Au \leq c\}$, then $\mathcal{C}$ is viable if and only if the set $\mathcal{B}(\Theta, \psi, \mathcal{P}, \mathcal{Q})$ is nonempty, where 
\begin{IEEEeqnarray*}{rCl}
\Theta &=& \left(
\begin{array}{c}
-\frac{\partial b}{\partial x}g(x) \\
A
\end{array}
\right), \quad \psi = \left(
\begin{array}{c}
\frac{\partial b}{\partial x}f(x) \\
c
\end{array}
\right) \\
\mathcal{P} &=& \emptyset, \quad \mathcal{Q} = \{b\}
\end{IEEEeqnarray*}
\end{corollary}

\begin{proof}
    The proof follows directly from Theorem \ref{theorem:semialgebraic-viability} with $s=1$ and $S = T = \{1\}$.
\end{proof}

\textbf{Example 2:} Consider the polynomial system
\begin{eqnarray*}
    \dot{x}_{1}(t) &=& -x_{1}(t)^{3} + u(t) \\
    \dot{x}_{2}(t) &=& x_{1} + 4x_{2}(t)^{2}
\end{eqnarray*}
with $\mathcal{C} = \{x : x_{1}^{2} - x_{2}^{3} + \beta \geq 0\}$ for some $\beta > 0$. Since $\frac{\partial b}{\partial x} = 0$ if and only if  $x_{1} = x_{2} = 0$, the set $\mathcal{C}$ is practical. We have $\frac{\partial b}{\partial x}g(x) = 2x_{1}$ and $\frac{\partial b}{\partial x}f(x) = -2x_{1}^{4} - 3x_{1}x_{2}^{2} - 12x_{2}^{2}$. Choosing $\eta(x) = 0$, $\alpha_{1}(x) = 12x_{2}^{2}$, and $\theta(x) = 12x_{1}^{3}x_{2}^{2} + 18x_{2}^{4} - 4.5x_{1}x_{2}^{4}$ yields
$$\eta(x) + \theta\frac{\partial b}{\partial x}g(x) - \alpha_{1}(x)\frac{\partial b}{\partial x}f(x) - \left(\frac{\partial b}{\partial x}f(x)\right)^{2} = 0 \in SOS,$$ implying that $\mathcal{C}$ is viable. 

Suppose that we add a constraint that $u \in \mathcal{U} = [-1,1]$. Then the problem would reduce to checking whether the set $\mathcal{B}(\Theta, \psi, \mathcal{P}, \mathcal{Q})$ is nonempty, where 
\begin{eqnarray*}
    \Theta &=& \left(
    \begin{array}{c}
    -2x_{1} \\
    1 \\
    -1
    \end{array}
    \right), \quad \psi = \left(
    \begin{array}{c}
    -2x_{1}^{4} - 3x_{1}x_{2}^{2} - 12x_{2}^{2} \\
    1 \\
    1
    \end{array}
    \right) \\
    \mathcal{P} &=& \emptyset, \quad \mathcal{Q} = \{x_{1}^{2} - x_{2}^{3} + \beta\}
\end{eqnarray*}
This can be verified by solving Psatz-SOS$(\mathcal{F},\mathcal{G},\mathcal{H})$, where 
\begin{eqnarray*}
\mathcal{F} &=& \{y_{1},y_{2}, y_{3}, y_{1}(2x_{1}^{4}+3x_{1}x_{2}^{2}+12x_{2}^{2}) -y_{2} - y_{3}\} \\
\mathcal{G} &=& \{-2x_{1}y_{1} + y_{2} - y_{3}, x_{1}^{2} - x_{2}^{3} + \beta\} \\
\mathcal{H} &=& \{y_{1}(2x_{1}^{4}+3x_{1}x_{2}^{2}+12x_{2}^{2}) -y_{2} - y_{3}\}
\end{eqnarray*}

\subsubsection{Simple Semi-Algebraic Sets} We next consider the case of simple semi-algebraic sets of the form $\mathcal{C} = \{x : b_{i}(x) \geq 0, i=1,\ldots,r\}$, where $b_{1},\ldots,b_{r}$ are polynomials. The following result applies Theorem \ref{theorem:semialgebraic-viability} to this case.

\begin{theorem}
\label{prop:simple-semialgebraic-viability}
Suppose that $\mathcal{C} = \{x : b_{i}(x) \geq 0, i=1,\ldots,r\}$ for some polynomials $b_{1},\ldots,b_{r}$. Then the set $\mathcal{C}$ is viable if and only if, for every $T=\{i_{1},\ldots,i_{l}\} \subseteq \{1,\ldots,r\}$, the set $\mathcal{B}(\theta_{T},\psi_{T},\mathcal{P}_{T},\mathcal{Q}_{T})$ is nonempty, where 
\begin{IEEEeqnarray}{rCl}
\label{eq:simple-semi-algebraic-1}
    \Theta_{T} &=& \left(
    \begin{array}{c}
    -\frac{\partial b_{i_{1}}}{\partial x}g(x) \\
    \vdots \\
    -\frac{\partial b_{i_{l}}}{\partial x}g(x) \\
    A
    \end{array}
    \right), \quad \psi_{T}(x) = \left(
    \begin{array}{c}
        \frac{\partial b_{i_{1}}}{\partial x}f(x) \\
        \vdots \\
        \frac{\partial b_{i_{l}}}{\partial x}f(x) \\
        c
    \end{array}
    \right) \\
    \label{eq:simple-semi-algebraic-2}
    \mathcal{P}_{T} &=& \{b_{i} : i \notin T\}, \quad \mathcal{Q}_{T} = \{b_{i} : i \in T\}
\end{IEEEeqnarray}
\end{theorem}

\begin{proof}
    When $s=1$, we only consider the case where $S = \{1\}$. Hence the set of permutations $\pi$ in Theorem \ref{theorem:semialgebraic-viability} is empty and $Z(\{1\}) = 2^{\{1,\ldots,r\}}$. Applying the conditions of Theorem \ref{theorem:semialgebraic-viability} (and omitting the $S$ subscript to simplify notations), we have that $\Theta_{T}$ and $\psi_{T}$ are defined as in (\ref{eq:simple-semi-algebraic-1}). Since the set of permutations $\pi$ is empty, $\mathcal{P}_{T}$ and $\mathcal{Q}_{T}$ reduce to (\ref{eq:simple-semi-algebraic-2}).
\end{proof}

\textbf{Example 3:} Suppose that $x(t) \in \mathbb{R}^{2}$ and 
\begin{eqnarray*}
\dot{x}_{1}(t) &=& -x_{1} + u \\
\dot{x}_{2}(t) &=& x_{1} + 4x_{2}
\end{eqnarray*}
Consider the set $\mathcal{C} = \{x : ||x||_{1} \leq 1\}$, where $||\cdot||_{1}$ denotes the $1$-norm. Suppose that there are no constraints on the control $u$, i.e., $\mathcal{U} = \mathbb{R}$. We have that $\mathcal{C} = \bigcap_{i=1}^{4}{\{x: b_{i}(x) \geq 0\}}$, where $b_{1}(x) = 1-x_{1}-x_{2}$, $b_{2}(x) = 1-x_{1} + x_{2}$, $b_{3}(x) = 1+x_{1}-x_{2}$, and $b_{4}(x) = 1+x_{1}+x_{2}$. Among all of the possible subsets T of $\{1,\ldots,4\}$ with $|T| > 1$, only four sets $T \subseteq \{1,\ldots,4\}$ have nonempty $\bigcap_{i \in T}{\{x : b_{i}(x) = 0\}}$. These sets are $\{1,2\}$, $\{1,3\}$, $\{2,4\}$, $\{3,4\}$. In order to ensure viability, the conditions of Proposition \ref{prop:simple-semialgebraic-viability} must be checked for all such sets $T$. As an example for $T = \{1,2\}$, we have 
\begin{eqnarray*}
\Theta_{T} &=& \left(
\begin{array}{c}
-1 \\
1
\end{array}
\right), \quad \psi_{T} = \left(
\begin{array}{c}
-4x_{2} \\
2x_{1} + 4x_{2}
\end{array}
\right) \\
\mathcal{P}_{T} &=& \{1 + x_{1}-x_{2}, 1+ x_{1} + x_{2}\}, \\ \mathcal{Q}_{T} &=& \{1-x_{1}-x_{2},1-x_{1}+x_{2}\}
\end{eqnarray*}
Note that, for this choice of $\mathcal{T}$, the set $\mathcal{B}(\Theta_{T},\psi_{T}, \mathcal{P}_{T}, \mathcal{Q}_{T})$ is nonempty, however, for $T = \{2,4\}$, the point $(0 \ 1)$ lies on the boundary $\{b_{2}(x) = 0\} \cap \{b_{4}(x) = 0\}$ and yet fails the conditions of Nagumo's Theorem, implying that $\mathcal{C}$ is not viable.

Proposition \ref{prop:simple-semialgebraic-viability} provides exact SOS conditions for verifying viability of a simple semi-algebraic set, however, the number of constraints grows exponentially in the number of functions $r$. An important special case of simple semi-algebraic sets that do \emph{not} experience exponential growth in computation is sets derived from High-Order Control Barrier Functions (HOCBFs)~\cite{xiao2021high}, which we define as follows. Recall that a nonlinear system 
\begin{IEEEeqnarray}{rCl}
\label{eq:HOCBF-dynamics-1}
\dot{x}(t) &=& f(x(t)) + g(x(t))u(t) \\
\label{eq:HOCBF-dynamics-2}
z(t) &=& h(x(t))
\end{IEEEeqnarray}
has relative degree $d$ if $L_{g}L_{f}^{i-1}h(x) = 0$ for $i=1,2,\ldots,(d-1)$ and $L_{g}L_{f}^{d-1}h(x) \neq 0$ for all $x$, where $L_{f}$ and $L_{g}$ denote the Lie derivatives of $h$ with respect to $f$ and $g$, respectively.

\begin{definition}[\cite{xiao2021high}]
    Suppose that the system $\dot{x}(t) = f(x(t)) + g(x(t))u(t)$ has relative degree $d$ with respect to a function $h$. The functions $\{b_{0},\ldots,b_{d-1}\}$ define a High-Order Control Barrier Function if
    \begin{IEEEeqnarray*}{rCl}
    b_{0}(x) &=& h(x) \\
    b_{i}(x) &=& L_{f}b_{i-1} + \kappa_{i}(b_{i-1}(x))
    \end{IEEEeqnarray*}
    for some class-$K$ functions $\kappa_{1},\ldots,\kappa_{d-1}$, where $L_{f}b_{i}$ is the Lie derivative of $b_{i}$ with respect to $f$.
\end{definition}

The following result gives equivalent conditions for verifying viability of HOCBF-defined semi-algebraic sets.

\begin{theorem}
\label{prop:HOCBF-viability}
Suppose that $b_{0},\ldots,b_{d-1}$ are polynomial functions that define an HOCBF for system (\ref{eq:HOCBF-dynamics-1})--(\ref{eq:HOCBF-dynamics-2}). Define matrix $\Theta(x) \in \mathbb{R}^{(p+1) \times m}$ and vector $\psi(x) \in \mathbb{R}^{p+1}$ by
\begin{equation}
\label{eq:HOCBF-statement}
    \Theta(x) = \left(
    \begin{array}{c}
    -\frac{\partial b_{d-1}}{\partial x}g(x) \\
    A
    \end{array}
    \right), \quad \psi(x) = \left(
    \begin{array}{c}
    \frac{\partial b_{d-1}}{\partial x}f(x) \\
    c
    \end{array}
    \right)
\end{equation}
Then the set $\mathcal{C} = \{x: b_{i}(x) \geq 0, i=0,\ldots,(d-1)\}$ is viable if and only if there exist polynomials $\lambda, \phi, \sigma \in \mathbb{R}[x_{1},\ldots,x_{n},y_{1},\ldots,y_{p+1}]$ that satisfy $\lambda + \phi^{2} + \sigma = 0$ and
\begin{IEEEeqnarray*}{rCl}
\lambda &\in& \mathcal{I}[\Theta_{1}(x)^{T}y,\ldots,\Theta_{m}(x)^{T}y, b_{d-1}] \\
\phi &\in& \mathcal{M}[\psi(x)^{T}y] \\
\sigma &\in& \Gamma[y_{1},\ldots,y_{p+1},-\psi(x)^{T}y,b_{0}(x),\ldots,b_{d-2}]
\end{IEEEeqnarray*}
\end{theorem}

\begin{proof}
Suppose that we apply the conditions of Proposition \ref{prop:simple-semialgebraic-viability} to the region $\mathcal{C}$. We have that the relative degree property and the definition of $b_{0},\ldots,b_{d-1}$ imply that the constraint $\frac{\partial b_{i}}{\partial x}(f(x)+g(x)u) \geq 0$ is automatically satisfied for $i=0,\ldots,(d-2)$ whenever $x \in \mathcal{C}$, and hence the corresponding rows of $\Theta_{T}$ and $\psi_{T}$ can be omitted, resulting in $\Theta_{T} = \Theta$ and $\psi_{T} = \psi$ for all $T \subseteq \{0,\ldots,(d-1)\}$, where $\Theta$ and $\psi$ are defined as in (\ref{eq:HOCBF-statement}). Furthermore, when $(d-1) \notin T$, the viability conditions are automatically satisfied since the set $\{u: Au \leq c\}$ is assumed to be nonempty.

Based on the above, we have that viability is satisfied if and only if, for any $x$ with $b_{d-1}(x) = 0$ and $b_{i}(x) \geq 0$ for $i=0,\ldots,(d-1)$, there exists $u$ with $\Theta(x)u \leq \psi(x)$. Equivalently, for any $x$ with $b_{d-1}(x) =0$ and $b_{i}(x) \geq 0$ for $i=0,\ldots,(d-2)$, there does not exist $y \in \mathbb{R}^{p+1}_{\geq 0}$ with $\Theta(x)^{T}y = 0$ and $\psi(x)^{T}y < 0$. Viability of the HOCBF is therefore satisfied if and only if there do not exist $x \in \mathbb{R}^{n}$ and $y \in \mathbb{R}^{p+1}$ with $y_{1},\ldots,y_{p+1} \geq 0$, $b_{i}(x) \geq 0$ for $i=0,\ldots,(d-2)$, $b_{d-1}(x) = 0$, $\Theta(x)^{T}y=0$, and $\psi(x)^{T}y < 0$. These conditions are equivalent to the conditions of the proposition by Theorem \ref{theorem:Psatz}.
\end{proof}


\subsubsection{Union of Polynomial Super-Level Sets}
Finally, we consider the case where the region $\mathcal{C}$ is defined by a union of polynomial super-level sets, i.e., $$\mathcal{C} = \bigcup_{i=1}^{s}{\{x: b_{i}(x) \geq 0\}}.$$ This is the special case where $s > 1$ and $L_{i} = 1$ for $i=1,\ldots,s$. We have the following result.

\begin{corollary}
\label{prop:union-viability}
Suppose that $\mathcal{C} = \bigcup_{i=1}^{s}{\{x : b_{i}(x) \geq 0\}}$ for some polynomials $b_{1},\ldots,b_{s}$. Then the set $\mathcal{C}$ is viable if and only if $\mathcal{B}(\{\Theta_{l} : l \in S\}, \{\psi_{l} : l \in S\}, \mathcal{P}_{S}, \mathcal{Q}_{S})$ is nonempty for all $S \subseteq \{1,\ldots,s\}$, where
\begin{IEEEeqnarray*}{rCl}
\Theta_{l} &=& \left(
\begin{array}{c}
-\frac{\partial b_{l}}{\partial x}g(x) \\
A
\end{array}
\right), \quad \psi_{l} = \left(
\begin{array}{c}
\frac{\partial b_{l}}{\partial x}f(x) \\
c
\end{array}
\right) \\
\mathcal{P}_{S} &=& \{-b_{i} : i \notin S\}, \quad \mathcal{Q}_{S} = \{b_{i} : i \in S\}
\end{IEEEeqnarray*}
\end{corollary}

\begin{proof}
    The proof follows from Theorem \ref{theorem:semialgebraic-viability} and the fact that, when $L_{i}=1$ for $i=1,\ldots,s$, there is a unique mapping $\pi$ given by $\pi(i) = 1$ for all $i \notin S$ and the set $T_{i} = \{b_{i}\}$ for all $i \in S$.
\end{proof}

\subsection{Verifying Inclusion in a Given Safe Region}
\label{subsec:safe-inclusion}
Safety constraints are typically expressed in the form $x(t) \in \mathcal{S}$ for all $t$, where $\mathcal{S}$ is a region of the state space. In order to ensure that safety requirements are met, we must verify that the invariant set $\mathcal{C}$ is contained in $\mathcal{S}$. We describe a procedure for verifying such conditions when $\mathcal{C}$ is semi-algebraic and $\mathcal{S} = \{x : h(x) \geq 0\}$ for some polynomial $h(x)$. 

\begin{lemma}
    \label{lemma:containment}
Let $\mathcal{C} = \bigcup_{i=1}^{s}{\bigcap_{j=1}^{r_{i}}{\{x: b_{ij}(x) \geq 0\}}}$.    The set $\mathcal{C} \subseteq \mathcal{S}$ if and only if, for each $i=1,\ldots,s$, there does not exist $x$ satisfying $b_{ij}(x) \geq 0$ for all $j=1,\ldots,r_{i}$ and $h(x) < 0$.
\end{lemma}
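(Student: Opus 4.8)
The plan is to prove the equivalence by distributing the containment $\mathcal{C} \subseteq \mathcal{S}$ across the union that defines $\mathcal{C}$, reducing it to a separate containment for each simple piece, and then negating. Write $\mathcal{C}_{i} = \bigcap_{j=1}^{r_{i}}\{x : b_{ij}(x) \geq 0\}$, so that $\mathcal{C} = \bigcup_{i=1}^{s}\mathcal{C}_{i}$ and $\mathcal{S} = \{x : h(x) \geq 0\}$. The first step is the elementary observation that a union is contained in $\mathcal{S}$ exactly when each member is: $\bigcup_{i=1}^{s}\mathcal{C}_{i} \subseteq \mathcal{S}$ if and only if $\mathcal{C}_{i} \subseteq \mathcal{S}$ for every $i = 1,\ldots,s$. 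The second step rewrites each piecewise containment $\mathcal{C}_{i} \subseteq \mathcal{S}$ as a non-existence statement: since membership in $\mathcal{C}_{i}$ means $b_{ij}(x) \geq 0$ for all $j$, and failure of membership in $\mathcal{S}$ means $h(x) < 0$, the containment $\mathcal{C}_{i} \subseteq \mathcal{S}$ holds precisely when there is no $x$ with $b_{ij}(x) \geq 0$ for all $j=1,\ldots,r_{i}$ and $h(x) < 0$. Chaining the two steps yields the stated characterization.

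For completeness I would spell out the two directions explicitly. For the forward direction, assume $\mathcal{C} \subseteq \mathcal{S}$ and fix $i$; any witness $x$ to the existence condition would satisfy $x \in \mathcal{C}_{i} \subseteq \mathcal{C} \subseteq \mathcal{S}$, forcing $h(x) \geq 0$ and contradicting $h(x) < 0$, so no such witness exists. For the reverse direction, assume no such witness exists for any $i$, take an arbitrary $x \in \mathcal{C}$, and choose an index $i$ with $x \in \mathcal{C}_{i}$; then $b_{ij}(x) \geq 0$ for all $j$, so by the non-existence hypothesis we cannot have $h(x) < 0$, giving $h(x) \geq 0$ and hence $x \in \mathcal{S}$. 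Since $x$ was arbitrary, $\mathcal{C} \subseteq \mathcal{S}$.

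There is no substantive obstacle in this lemma; it is purely set-theoretic, and the only care required is book-keeping with the union/intersection structure, in particular handling $\mathcal{C}$ as a union so that the containment decouples across $i$. Its value is preparatory rather than technical: each of the $s$ resulting non-existence conditions asserts that a basic semi-algebraic set cut out by the inequalities $b_{ij}(x) \geq 0$ together with $h(x) < 0$ is empty. This is exactly the emptiness format to which the Positivstellensatz (Theorem \ref{theorem:Psatz}) applies, after recording $h(x) < 0$ as the pair $-h(x) \geq 0$ and $h(x) \neq 0$, so the safety-inclusion test $\mathcal{C} \subseteq \mathcal{S}$ can ultimately be certified by solving $s$ instances of Psatz-SOS.
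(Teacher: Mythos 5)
Your proof is correct and is exactly the straightforward set-theoretic argument the paper has in mind (the paper simply states ``the proof is straightforward'' and omits it): containment of a union decouples into containment of each piece $\mathcal{C}_{i}$, and each piecewise containment is the negation of the existence of a witness with $b_{ij}(x) \geq 0$ for all $j$ and $h(x) < 0$. Your closing remark on how the resulting emptiness conditions feed into the Positivstellensatz also matches the paper's use of the lemma in Proposition \ref{prop:containment}.
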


The proof is straightforward. As a result, we can obtain a set of $s$ SOS programs for verifying inclusion.

\begin{proposition}
\label{prop:containment}
The set $\mathcal{C}$ is contained in $\mathcal{S}$ if and only if, for all $i=1,\ldots,s$, there exists $\sigma \in \Gamma[b_{i1},\ldots,b_{ir_{i}},-h]$ and an integer $l$ such that $$\sigma + h^{2l} = 1.$$ 
\end{proposition}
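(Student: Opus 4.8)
The plan is to prove the proposition by reducing the containment $\mathcal{C} \subseteq \mathcal{S}$ to a finite family of emptiness assertions and then converting each assertion into a Positivstellensatz certificate. By Lemma \ref{lemma:containment}, $\mathcal{C} \subseteq \mathcal{S}$ holds if and only if, for each $i \in \{1,\ldots,s\}$, the basic semialgebraic set $\{x : b_{i1}(x) \geq 0, \ldots, b_{ir_{i}}(x) \geq 0, \ h(x) < 0\}$ is empty. It therefore suffices to show, for a fixed $i$, that this set is empty if and only if there exist $\sigma \in \Gamma[b_{i1},\ldots,b_{ir_{i}},-h]$ and an integer $l$ satisfying the displayed identity; the full claim then follows by ranging over $i=1,\ldots,s$ and collecting the resulting $s$ SOS feasibility programs. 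Note that, unlike viability, containment does not invoke the tangent-cone description, so practicality of $\mathcal{C}$ (Definition \ref{def:practical-semialgebraic}) is not needed here.

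For a fixed $i$, the first step is to recast the strict inequality $h(x) < 0$ in the format admitted by Theorem \ref{theorem:Psatz}, which accepts only constraints of the form $\geq 0$, $=0$, and $\neq 0$. I would encode $h(x) < 0$ as the conjunction $-h(x) \geq 0$ together with $h(x) \neq 0$, applying the Positivstellensatz to the system with $\mathcal{F} = \{b_{i1},\ldots,b_{ir_{i}},-h\}$ (the nonnegative generators), $\mathcal{G} = \emptyset$, and $\mathcal{H} = \{h\}$. Invoking Theorem \ref{theorem:Psatz}, emptiness is then equivalent to the existence of $f$ in the cone $\Gamma[b_{i1},\ldots,b_{ir_{i}},-h]$, $g$ in the trivial ideal $\mathcal{I}[\emptyset]=\{0\}$, and a monoid element of $\mathcal{M}[h]$, necessarily of the form $h^{l}$, satisfying the Positivstellensatz equation $f + g + (h^{l})^{2}=0$. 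Setting $\sigma := f$ and identifying the squared monoid term as $h^{2l}$ produces an identity relating $\sigma$ and $h^{2l}$; I would then carry out the normalization required to bring this identity into exactly the displayed form. The ``only if'' direction is precisely the Positivstellensatz, while for the ``if'' direction the plan is to evaluate the identity at any candidate violating point $x_{0} \in \mathcal{C}_{i} \setminus \mathcal{S}$: such an $x_{0}$ satisfies every generator of the cone, so $\sigma(x_{0}) \geq 0$, and $h(x_{0}) \neq 0$ forces $h(x_{0})^{2l} > 0$, and I would show these sign constraints are incompatible with the identity, thereby ruling out the existence of $x_{0}$.

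The step I expect to be the main obstacle is the bookkeeping that pins down the exact right-hand side of the certificate. The delicacy lies entirely in how the strict inequality $h < 0$ is threaded through the $\mathcal{F}/\mathcal{H}$ encoding and which monoid element of $\mathcal{M}[h]$ survives in the Positivstellensatz equation, since this governs both the constant appearing in the displayed identity and the degenerate case $l=0$; I would verify this normalization explicitly rather than quote it, and in particular confirm that the sign-counting argument in the sufficiency direction goes through under the stated form $\sigma + h^{2l} = 1$. A secondary, routine point is checking that membership $\sigma \in \Gamma[b_{i1},\ldots,b_{ir_{i}},-h]$ is exactly the cone whose nonnegativity domain is $\{x : b_{ij}(x) \geq 0\ \forall j\} \cap \{x : h(x) \leq 0\}$, which is what validates the inequality $\sigma(x_{0}) \geq 0$ used above, and that the per-$i$ certificates assemble into the claimed family of $s$ SOS feasibility problems.
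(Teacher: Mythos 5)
Your reduction (Lemma \ref{lemma:containment} followed by Theorem \ref{theorem:Psatz} with $\mathcal{F} = \{b_{i1},\ldots,b_{ir_{i}},-h\}$, $\mathcal{G} = \emptyset$, $\mathcal{H} = \{h\}$) is exactly the route the paper takes, and it is correct as far as it goes: it yields the identity $\sigma + h^{2l} = 0$ with $\sigma \in \Gamma[b_{i1},\ldots,b_{ir_{i}},-h]$. The gap is precisely the step you deferred, the ``normalization'' into the displayed form $\sigma + h^{2l} = 1$: no such normalization exists, because the two identities are not equivalent. One direction is fine --- since $1$ is SOS, $\sigma \in \Gamma$ and $\sigma + h^{2l} = 0$ give $(\sigma + 1) \in \Gamma$ and $(\sigma + 1) + h^{2l} = 1$ --- but the converse fails, and with it the sufficiency half of the proposition as displayed. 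Your own sign-counting argument makes this visible: at a violating point $x_{0}$ you get $\sigma(x_{0}) \geq 0$ and $h(x_{0})^{2l} > 0$, which is incompatible with the sum being $0$ but entirely compatible with the sum being $1$. Indeed, with $l = 0$ the displayed identity is satisfied trivially by $\sigma = 0$; and even demanding $l \geq 1$ does not help: take $s = 1$, $r_{1} = 1$, $b_{11}(x) = 1 - x^{2}$, $h(x) = -x^{2}/2$, so that $\mathcal{C} = [-1,1] \not\subseteq \mathcal{S} = \{0\}$, yet $\sigma(x) = 1 - x^{4}/4 = (1 + x^{2}/2)\bigl((1-x^{2}) + x^{2}/2\bigr)$ lies in $\Gamma[b_{11},-h]$ and satisfies $\sigma + h^{2} = 1$.

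In short, your argument, carried to completion, proves the statement with right-hand side $0$, i.e., $\sigma + h^{2l} = 0$; that is what actually ``follows directly from Theorem \ref{theorem:Psatz}'' (which is the paper's entire proof), and the $1$ in the displayed equation appears to be an error in the statement rather than something any proof could recover. So treat the discrepancy you flagged not as bookkeeping to verify but as a correction to make: state and prove the certificate with right-hand side $0$, where your necessity direction is the Positivstellensatz verbatim and your sufficiency sign argument closes correctly; your secondary check (that every generator of $\Gamma[b_{i1},\ldots,b_{ir_{i}},-h]$ is nonnegative at a violating point, hence $\sigma(x_{0}) \geq 0$) is right and is all that direction needs.
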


The proof follows directly from Theorem \ref{theorem:Psatz}.

\subsection{Computational Complexity  of Problem \ref{problem:viability-verification}}
\label{subsec:complexity}
The characterization of viability for a single polynomial level set enables us to analyze the complexity of Problem \ref{problem:viability-verification} as follows.

\begin{theorem}
\label{prop:CBF-complexity}
Problem \ref{problem:viability-verification} is NP-hard.
\end{theorem}

\begin{proof}
If there is a polynomial-time algorithm for Problem \ref{problem:viability-verification}, then the algorithm can be used to check the conditions of Corollary \ref{prop:unconstrained-single-poly-viability}, since viability of a set $\{x: b(x) \geq 0\}$ for some polynomial $b(x)$ is a special case of viability of a practical semialgebraic set $\mathcal{C}$. Hence, in order to prove that Problem \ref{problem:viability-verification} is NP-hard, it suffices to prove that checking the conditions of Corollary \ref{prop:unconstrained-single-poly-viability} is NP-hard. We will show that there exists a polynomial time reduction of the NP-hard problem of checking nonnegativity of a polynomial to the problem of verifying conditions of Corollary \ref{prop:unconstrained-single-poly-viability}. 

Let $r: \mathbb{R}^{n} \rightarrow \mathbb{R}$ be a polynomial with degree four or more. Define $b: \mathbb{R}^{n+1} \rightarrow \mathbb{R}$ by $b(x) = x_{n+1}r(x_{1},\ldots,x_{n})$ where $x = (x_{1},\ldots,x_{n},x_{n+1})^{T}$. Define $g: \mathbb{R}^{n+1} \rightarrow \mathbb{R}^{(n+1) \times m}$ by $g(x) \equiv 0$. Define $f: \mathbb{R}^{n+1} \rightarrow \mathbb{R}^{n+1}$ as $f(x) = (f_{1}(x),\ldots,f_{n+1}(x)^{T}$ with 
\begin{displaymath}
f_{i}(x) = \left\{
\begin{array}{ll}
0, & i \neq n+1 \\
1, & i = n+1
\end{array}
\right.
\end{displaymath}

We have that $\frac{\partial b}{\partial x}g(x) = 0$ and $\frac{\partial b}{\partial x}f(x) = \frac{\partial b}{\partial x_{n+1}}f_{n+1}(x) = r(x_{1},\ldots,x_{n})$. Suppose that there exists $x$ such that $b(x) = 0$ and $\frac{\partial b}{\partial x}f(x) < 0$. Then $x_{n+1}r(x_{1},\ldots,x_{n}) = 0$ and $r(x_{1},\ldots,x_{n}) < 0$, implying that $r$ is \emph{not} nonnegative. Conversely, if no such $x$ exists, then in particular there is no $(x_{1},\ldots,x_{n})^{T}$ with $b(x_{1},\ldots,x_{n},0) = 0$ and $\frac{\partial b}{\partial x}f(x) = r(x_{1},\ldots,x_{n}) < 0$, implying that $r$ is nonnegative. Hence $r$ is nonnegative if and only if $b$ is  CBF, implying NP-hardness of Problem \ref{problem:viability-verification}.
\end{proof}

\subsection{Verifying Feedback Controlled Positive Invariance}
\label{subsec:feedback-control}
The preceding subsections considered viability analysis of semialgebraic sets, however, they did not establish existence of continuous control policies that ensure positive invariance. In this section, we present sufficient conditions for feedback controlled positive invariance. We  consider the case of a simple semi-algebraic set.

\begin{theorem}
\label{theorem:simple-CPI}
Suppose that $\mathcal{C} = \{x: b_{i}(x) \geq 0 \ \forall i=1,\ldots,r\}$ for some polynomials $b_{1},\ldots,b_{r}$ and that $\mathcal{C}$ is closed,  bounded, and locally compact. Suppose further that, for any $T \subseteq \{1,\ldots,r\}$ and $x$ with $b_{i}(x) = 0$ for $i \in T$, there exists $u$ such that $Au \leq c$ and $\frac{\partial b_{i}}{\partial x}(f(x)+g(x)u) > 0$ for all $i \in S$. Then $\mathcal{C}$ is feedback controlled positive invariant.
\end{theorem}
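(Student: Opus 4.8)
The plan is to build an explicit continuous feedback $\mu$ by gluing together local constant controls via a partition of unity. Since $\mathcal{C}$ is a simple semi-algebraic set ($s=1$), Lemma \ref{lemma:tangent-cone-semi-algebraic} tells us that for $x \in \partial\mathcal{C}$ the tangent cone is $\mathcal{T}_{\mathcal{C}}(x) = \bigcap_{i : b_i(x)=0}\{z : \frac{\partial b_i}{\partial x}z \geq 0\}$. Thus, by Definition \ref{def:feedback-CPI}, it suffices to produce a continuous $\mu : \mathcal{C} \to \mathcal{U}$ with $\frac{\partial b_i}{\partial x}(f(x)+g(x)\mu(x)) \geq 0$ for every $x \in \partial\mathcal{C}$ and every $i$ active at $x$ (i.e.\ with $b_i(x)=0$). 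The hypothesis supplies, at each such $x$, a control satisfying these inequalities \emph{strictly}; the whole difficulty is to assemble these pointwise choices into one globally continuous function.

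First I would establish a local constancy property. Fix any $x_0 \in \mathcal{C}$, let $A(x_0) = \{i : b_i(x_0)=0\}$ be its active set, and let $u_{x_0} \in \mathcal{U}$ be the strict control guaranteed by the hypothesis applied with $T = A(x_0)$, so that $\frac{\partial b_i}{\partial x}(f(x_0)+g(x_0)u_{x_0}) > 0$ for all $i \in A(x_0)$. Two continuity observations then carve out an open neighborhood $N_{x_0} \subseteq \mathbb{R}^n$ of $x_0$: (a) for each inactive $i \notin A(x_0)$ we have $b_i(x_0) > 0$, so $b_i$ stays positive near $x_0$, which forces $A(x) \subseteq A(x_0)$ for every $x \in N_{x_0} \cap \mathcal{C}$; and (b) each map $x \mapsto \frac{\partial b_i}{\partial x}(f(x)+g(x)u_{x_0})$ is continuous (as $f,g$ are continuous and $b_i$ is polynomial), so the finitely many strict inequalities for $i \in A(x_0)$ persist on $N_{x_0}$. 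Consequently the \emph{single} constant control $u_{x_0}$ satisfies $\frac{\partial b_i}{\partial x}(f(x)+g(x)u_{x_0}) > 0$ for every $x \in N_{x_0} \cap \mathcal{C}$ and every $i$ active at that $x$. The strictness in the hypothesis is exactly what makes $N_{x_0}$ open and hence usable.

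Next I would invoke compactness and a partition of unity. The sets $\{N_{x_0}\}_{x_0 \in \mathcal{C}}$ form an open cover of the compact set $\mathcal{C}$, so there is a finite subcover $N_{x_1},\ldots,N_{x_K}$. Let $\{\rho_k\}_{k=1}^{K}$ be a continuous partition of unity subordinate to it, with $\rho_k \geq 0$, $\operatorname{supp}\rho_k \subseteq N_{x_k}$, and $\sum_{k} \rho_k(x) = 1$ on $\mathcal{C}$, and set $\mu(x) = \sum_{k=1}^{K} \rho_k(x)\,u_{x_k}$. Then $\mu$ is continuous, and since each $u_{x_k} \in \mathcal{U}$ and $\mathcal{U} = \{u : Au \leq c\}$ is convex, the convex combination $\mu(x)$ again lies in $\mathcal{U}$. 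Because the controls enter affinely and $\sum_k \rho_k(x)=1$, for $x \in \partial\mathcal{C}$ and $i$ active at $x$ we get $\frac{\partial b_i}{\partial x}(f(x)+g(x)\mu(x)) = \sum_{k}\rho_k(x)\,\frac{\partial b_i}{\partial x}(f(x)+g(x)u_{x_k})$. Every summand with $\rho_k(x)>0$ has $x \in N_{x_k}\cap\mathcal{C}$ with $i$ active, hence is strictly positive by the local property, so the convex combination is $> 0 \geq 0$. This places $f(x)+g(x)\mu(x)$ in $\mathcal{T}_{\mathcal{C}}(x)$ and yields feedback controlled positive invariance.

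The main obstacle is precisely the gluing step: the active-constraint set is discontinuous across the boundary strata of $\mathcal{C}$, so the set-valued map $x \mapsto \{u \in \mathcal{U} : \frac{\partial b_i}{\partial x}(f(x)+g(x)u)\geq 0,\ i \in A(x)\}$ need not be lower hemicontinuous and a direct continuous selection (e.g.\ via Michael's theorem) is not immediate. The key structural fact that resolves this is that within $\mathcal{C}$ the active set can only shrink under small perturbations, so a control that is strictly admissible at $x_0$ stays admissible throughout a neighborhood; this is what converts the pointwise hypothesis into an open cover on which a partition of unity can operate. The paper's alternative route via semi-algebraic triangulations is another way to organize this neighborhood structure, but the compactness-plus-partition-of-unity argument above delivers the continuous $\mu$ directly.
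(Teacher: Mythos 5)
Your proof is correct, and it reaches the conclusion by a genuinely different gluing mechanism than the paper. Both arguments share the same first step: at each $x_0 \in \mathcal{C}$ the hypothesis supplies a control $u_{x_0} \in \mathcal{U}$ satisfying the active constraints strictly, and continuity plus the fact that the active set can only shrink near $x_0$ make $u_{x_0}$ admissible, as a constant control, throughout an open neighborhood $N_{x_0}$. From there the paper takes a heavier route: it covers each stratum $\mathcal{C}_T = \mathcal{C} \cap \{x : b_i(x) = 0 \ \forall i \in T\}$ by finitely many such balls, extracts a uniform radius $\overline{d}$, intersects $\mathcal{C}$ with balls of radius $\overline{d}/2$, and invokes the semi-algebraic triangulation theorem (Theorem \ref{theorem:semi-algebraic-SA}) to obtain a simplicial complex compatible with these pieces; the feedback is then defined at the vertices and extended by barycentric interpolation, with continuity following from the simplicial structure. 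You instead take a finite subcover of the compact set $\mathcal{C}$ by the neighborhoods $N_{x_k}$ and blend the constant controls with a continuous partition of unity: continuity of $\mu$ is immediate, membership in $\mathcal{U}$ follows from convexity of the polytope, and positivity of $\frac{\partial b_i}{\partial x}(f(x)+g(x)\mu(x))$ on active constraints holds because every term with $\rho_k(x) > 0$ is strictly positive --- exactly the same convexity-plus-affinity-in-$u$ argument the paper runs inside each simplex. Your route is more elementary (it needs no real-algebraic-geometry machinery, only compactness and partitions of unity) and proves the same statement; what the triangulation buys the paper is extra structure on $\mu$ (a piecewise-barycentric, semi-algebraic controller adapted to the boundary strata), none of which is required by Definition \ref{def:feedback-CPI}. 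One small caveat shared by both proofs: concluding from strict positivity on active constraints that $f(x)+g(x)\mu(x) \in \mathcal{T}_{\mathcal{C}}(x)$ uses the tangent cone characterization of Lemma \ref{lemma:tangent-cone-intersection}, whose constraint qualification must be checked; it does follow from the theorem's hypothesis (take $z = \epsilon(f(x)+g(x)u)$ with $\epsilon > 0$ small so that the inactive constraints $b_i(x) > 0$ are not violated), and your write-up would be airtight if you stated this explicitly rather than leaving it implicit as the paper does.
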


A proof can be found in the appendix. A similar approach to Section \ref{subsec:viability-verification} can be used to verify feedback controlled positive invariance through sum-of-squares optimization. 

\begin{theorem}
\label{prop:CPI-conditions}
Let $\mathcal{C} = \bigcap_{i=1}^{r}{\{x : b_{i}(x) \geq 0\}}$ be a simple practical semi-algebraic set. For any $S \subseteq \{1,\ldots,r\}$, define $\Theta_{S}(x)$ to be a matrix with rows $-\frac{\partial b_{i}}{\partial x}g(x)$ and $\psi_{S}(x)$ to be a vector with entries $\frac{\partial b_{i}}{\partial x}f(x)$ for $i \in S$. Suppose that, for all $S \subseteq \{1,\ldots,r\}$, the following conditions hold:
\begin{itemize}
    \item There do not exist $x \in \mathbb{R}^{n}$, $y \in \mathbb{R}^{|S|}$, and $z \in \mathbb{R}^{p}$ satisfying (i) $\Theta_{S}(x)^{T}y + A^{T}z = 0$, (ii) $\psi_{S}(x)^{T}y + c^{T}z < 0$, (iii) $b_{i}(x) = 0$ for all $i \in S$, and (iv) $b_{i}(x) > 0$ for all $i \in \{1,\ldots,r\} \setminus S$.
    \item There do not exist $x \in \mathbb{R}^{n}$, $y \in \mathbb{R}^{|S|}$, and $z \in \mathbb{R}^{p}$ satisfying (i) $\Theta_{S}(x)^{T}y + A^{T}z = 0$, (ii) $\psi_{S}(x)^{T}y + c^{T}z <\leq 0$, (iii) $b_{i}(x) = 0$ for all $i \in S$,  (iv) $b_{i}(x) > 0$ for all $i \in \{1,\ldots,r\} \setminus S$, and (v) $y \neq 0$
\end{itemize}
Then $\mathcal{C}$ is feedback controlled positive invariant.
\end{theorem}

A proof can be found in the appendix. We next turn to proving the existence of CBF-based control laws. 

\begin{theorem}
\label{theorem:CBF-policy-existence}
Suppose that the set $\mathcal{C} = \{x: b_{i}(x) \geq 0, i=1,\ldots,r\}$ is compact and the functions $b_{1},\ldots,b_{r}$ satisfy the conditions of Proposition \ref{prop:CPI-conditions}. Then there exist class-K functions $\kappa_{1},\ldots,\kappa_{r}: \mathbb{R} \rightarrow \mathbb{R}$ such that, for any function $J: \mathbb{R}^{n} \times \mathcal{U} \rightarrow \mathbb{R}$, the policy
\begin{equation}
    \label{eq:CBF-policy}
    \begin{array}{ll}
    \mbox{minimize} & J(x,u) \\
    \mbox{s.t.} & \frac{\partial b_{i}}{\partial x}(f(x) + g(x)u) \geq -\kappa_{i}(b_{i}(x)), i=1,\ldots,r \\
    & Au \leq c
    \end{array}
\end{equation}
renders $\mathcal{C}$ positive invariant, provided the minimum value of (\ref{eq:CBF-policy}) exists.
\end{theorem}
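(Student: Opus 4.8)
The plan is to separate the proof into two independent parts: \emph{feasibility}, namely that for a suitable choice of the $\kappa_i$ the constraint set of (\ref{eq:CBF-policy}) is nonempty at every $x\in\mathcal{C}$, so that the minimization is well posed; and \emph{safety}, namely that any control satisfying all $r$ barrier inequalities keeps the trajectory in $\mathcal{C}$. The safety part follows from the comparison lemma exactly as in the proof of Theorem \ref{theorem:CBF-safety}: along any closed-loop solution the selected input satisfies $\frac{\partial b_i}{\partial x}(f(x)+g(x)u)\geq-\kappa_i(b_i(x))$ for every $i$ simultaneously, so $\dot{b}_i\geq-\kappa_i(b_i)$, and since $b_i(x(0))\geq 0$ and $\kappa_i$ is class-K, $b_i(x(t))\geq 0$ for all $t$; intersecting over $i$ keeps $x(t)\in\mathcal{C}$. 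All the work therefore lies in feasibility. Before that I would record the consequence of the hypotheses: by Proposition \ref{prop:CPI-conditions} together with Farkas's Lemma (Lemma \ref{lemma:Farkas}), for every $x\in\mathcal{C}$ with active set $A(x)=\{i:b_i(x)=0\}$ there exists $u\in\mathcal{U}$ with $\frac{\partial b_i}{\partial x}(f(x)+g(x)u)>0$ for all $i\in A(x)$; the two non-existence conditions of the proposition are precisely the dual certificates ruled out by Farkas, the second ($y\neq 0$) upgrading the inequality from $\geq$ to strict.

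The key to feasibility is the following uniform claim, which I would state and then reduce the theorem to: \emph{there exist $\beta_0>0$ and $R>0$ such that for every $x\in\mathcal{C}$ there is $u\in\mathcal{U}$ with $\|u\|\leq R$ and $\frac{\partial b_i}{\partial x}(f(x)+g(x)u)\geq 0$ for every $i$ with $b_i(x)\leq\beta_0$.} Granting the claim, set $\mathcal{U}_0=\{u\in\mathcal{U}:\|u\|\leq R\}$ and let $M=\max_{x\in\mathcal{C},\,u\in\mathcal{U}_0,\,i}\left|\frac{\partial b_i}{\partial x}(f(x)+g(x)u)\right|$, which is finite (and may be taken positive) by compactness of $\mathcal{C}$ and $\mathcal{U}_0$ and continuity of the polynomial data. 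I would then choose the linear class-K functions $\kappa_i(s)=(M/\beta_0)\,s$. At any $x\in\mathcal{C}$, picking the witness $u$ from the claim, the indices with $b_i(x)\leq\beta_0$ satisfy $-\kappa_i(b_i(x))\leq 0\leq\frac{\partial b_i}{\partial x}(f(x)+g(x)u)$, while the indices with $b_i(x)>\beta_0$ satisfy $\kappa_i(b_i(x))>M\geq-\frac{\partial b_i}{\partial x}(f(x)+g(x)u)$; in both cases the barrier inequality holds, so (\ref{eq:CBF-policy}) is feasible.

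The main obstacle is proving the uniform claim, which I would do by a compactness-and-contradiction argument. Suppose it fails; then for each $k$ there is $x_k\in\mathcal{C}$ such that no $u\in\mathcal{U}$ with $\|u\|\leq k$ satisfies $\frac{\partial b_i}{\partial x}(f(x_k)+g(x_k)u)\geq 0$ for all $i$ in $S_k=\{i:b_i(x_k)\leq 1/k\}$. By compactness of $\mathcal{C}$, pass to a subsequence with $x_k\to x^\ast\in\mathcal{C}$, and let $A^\ast=\{i:b_i(x^\ast)=0\}$. For $i\notin A^\ast$ we have $b_i(x^\ast)>0$, hence $b_i(x_k)>1/k$ for large $k$, so $S_k\subseteq A^\ast$ eventually. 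Applying the strict-inward consequence of the hypotheses at $x^\ast$ with active set $A^\ast$ yields a \emph{fixed} $u^\ast\in\mathcal{U}$ with $\frac{\partial b_i}{\partial x}(f(x^\ast)+g(x^\ast)u^\ast)>0$ for all $i\in A^\ast$. Since strict inequalities are open and the data are continuous, $\frac{\partial b_i}{\partial x}(f(x_k)+g(x_k)u^\ast)>0$ for all $i\in A^\ast\supseteq S_k$ and all large $k$, and $\|u^\ast\|\leq k$ once $k\geq\|u^\ast\|$, contradicting the choice of $x_k$.

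This contradiction establishes the claim, and the use of a single finite witness $u^\ast$ is exactly what allows the argument to go through even when $\mathcal{U}$ is unbounded, since only a bounded neighborhood of controls is ever needed. Combining the uniform claim, the explicit construction of the $\kappa_i$ that it enables, and the componentwise comparison-lemma safety argument then completes the proof: the policy (\ref{eq:CBF-policy}) is feasible on all of $\mathcal{C}$, so its minimizer (when it exists) satisfies every barrier inequality, and therefore renders $\mathcal{C}$ positive invariant.
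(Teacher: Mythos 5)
Your proposal is correct, but it reaches feasibility by a genuinely different route than the paper. The paper's proof makes the same reduction---it suffices to exhibit $\kappa_{1},\ldots,\kappa_{r}$ so that (\ref{eq:CBF-policy-proof-1})--(\ref{eq:CBF-policy-proof-2}) are feasible at every $x \in \mathcal{C}$---but its witness is the \emph{continuous} feedback law $\mu$ whose existence Proposition \ref{prop:CPI-conditions} guarantees (through the semi-algebraic triangulation construction behind Theorem \ref{theorem:simple-CPI}): it sets $q_{i}(x) = -\frac{\partial b_{i}}{\partial x}(f(x)+g(x)\mu(x))$, bounds $q_{i} \leq K_{i}$ by compactness of $\mathcal{C}$ and continuity of $\mu$, notes that $b_{i}$ is bounded below by some $z_{i} > 0$ on the set where $q_{i}$ vanishes (since $\mu$ is strictly inward on $\{b_{i}(x)=0\}$), and assembles piecewise-linear $\kappa_{i}$ with slope $K_{i}/z_{i}$ near the origin. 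You never invoke the continuous selection at all: you extract from the hypotheses only the pointwise strict-inward control---via the transposition theorem, which is Motzkin rather than plain Farkas, exactly as in the paper's own proof of Proposition \ref{prop:CPI-conditions}, though you correctly identify the role played by each of the two dual conditions---and you make that pointwise statement uniform by a compactness-and-contradiction argument producing the margin $\beta_{0}$ and control bound $R$, after which the linear choices $\kappa_{i}(s) = (M/\beta_{0})s$ suffice. What each approach buys: the paper's route is shorter once Theorem \ref{theorem:simple-CPI} is in hand, since a single continuous witness makes the compactness bounds immediate; your route is more elementary and self-contained, bypassing the triangulation machinery entirely, handling a possibly unbounded $\mathcal{U}$ explicitly through $R$, and yielding simpler class-K functions. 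Both arguments share the same final gloss: once feasibility on all of $\mathcal{C}$ is established, positive invariance of the closed loop follows from the componentwise comparison argument underlying Theorem \ref{theorem:CBF-safety}, with well-posedness of the closed-loop trajectory implicitly covered by the theorem's proviso that the minimum of (\ref{eq:CBF-policy}) exists.
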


\begin{proof}
    It suffices to show that there exist $\kappa_{1},\ldots,\kappa_{r}$ such that for any $x \in \mathcal{C}$, there exists $u$ satisfying the constraints
    \begin{IEEEeqnarray}{rCl}
    \label{eq:CBF-policy-proof-1}
    \frac{\partial b_{1}}{\partial x}(f(x) + g(x)u) &\geq& -\alpha_{1}(b_{1}(x)) \\
    &\vdots & \\
    \frac{\partial b_{r}}{\partial x}(f(x) + g(x)u) &\geq& -\alpha_{r}(b_{r}(x)) \\
    \label{eq:CBF-policy-proof-2}
    Au &\leq& c 
    \end{IEEEeqnarray}
    If the conditions of Proposition \ref{prop:CPI-conditions} are  satisfied, there exists a continuous feedback control policy $\mu: \mathbb{R}^{n} \rightarrow \mathcal{U}$ that renders $\mathcal{C}$ positive invariant. Our approach will be to show that we can construct $\kappa_{1},\ldots,\kappa_{r}$ such that $\mu(x)$ is a solution to (\ref{eq:CBF-policy-proof-1})--(\ref{eq:CBF-policy-proof-2}) for all $x \in \mathcal{C}$. 

    Define $q_{i}(x) = -\frac{\partial b_{i}}{\partial x}(f(x)+g(x)\mu(x))$. Since $\mathcal{C}$ is compact, there exists $K_{i}$ such that $q_{i}(x) \leq K_{i}$ for all $x \in \mathcal{C}$. By construction of $\mu$, we have that $\frac{\partial b_{i}}{\partial x}(f(x) + g(x)\mu(x)) > 0$ for all $x$ with $b_{i}(x) = 0$. Hence, if we define $$z_{i} = \inf{\{b_{i}(x) : q_{i}(x) = 0, x \in \mathcal{C}\}},$$ we have that $z_{i} > 0$ by compactness of $\mathcal{C}$ and the fact that $\{x : q_{i}(x) = 0\}$ is closed. Combining these ideas, we can choose 
    \begin{displaymath}
    \kappa_{i}(z) = \left\{
    \begin{array}{ll}
    \frac{K_{i}}{z_{i}}z, & z \in [0,z_{i}] \\
    K_{i} + \epsilon(z-z_{i}), & z > z_{i}
    \end{array}
    \right.
    \end{displaymath}
    Clearly $\kappa_{i}$ is class-K. If $b_{i}(x) \in [0,z_{i}]$, then $$\frac{\partial b_{i}}{\partial x}(f(x) + g(x)\mu_{i}(x)) = -q_{i}(x) \geq 0 \geq -\kappa_{i}(b_{i}(x)).$$ If $b_{i}(x) > z_{i}$, then $$\frac{\partial b_{i}}{\partial x}(f(x) + g(x)\mu_{i}(x)) \geq -K_{i} > -(K_{i} + \epsilon(z-z_{i})) = -\kappa_{i}(b_{i}(x)).$$ Hence this choice of $\kappa_{i}$ ensures that (\ref{eq:CBF-policy-proof-1})--(\ref{eq:CBF-policy-proof-2}) hold everywhere in $\mathcal{C}$, completing the proof.
\end{proof}

\subsection{Generalization to Trigonometric Dynamics}
\label{subsec:trig-functions}
We next show that our approach can be generalized to a class of systems with trigonometric functions in the constraints and dynamics. We make the following assumptions regarding the dynamics.

\begin{assumption}
\label{assumption:trig}
Consider a system (\ref{eq:dynamics}) and semialgebraic set $\mathcal{C}$ such that there exist $l \in \{1,\ldots,n\}$ such that only trigonometric functions of the states $x_{l+1},\ldots,x_{n}$ appear in the dynamics and safety constraints. More precisely, let $L = 2(n-l)$ and define $w_{1},\ldots,w_{L}$ by $w_{2k-1} = \sin{x_{l+k}}$ and $w_{2k} = \cos{x_{l+k}}$. We assume that there exist polynomials $\overline{f}: \mathbb{R}^{l+L} \rightarrow \mathbb{R}^{n}$, $\overline{g}: \mathbb{R}^{l+L} \rightarrow \mathbb{R}^{n \times m}$, and $\overline{b}_{ij} : \mathbb{R}^{l+L} \rightarrow \mathbb{R}$ for $i=1,\ldots,s$, $j=1,\ldots,r_{i}$ such that 
\begin{IEEEeqnarray*}{rCl}
f(x) &=& \overline{f}(x_{1},\ldots,x_{l},w_{1},\ldots,w_{L}) \\
g(x) &=& \overline{g}(x_{1},\ldots,x_{l},w_{1},\ldots,w_{L}) \\
b_{ij}(x) &=& \overline{b}_{ij}(x_{1},\ldots,x_{l},w_{1},\ldots,w_{L})
\end{IEEEeqnarray*}
\end{assumption}
Our approach will be to convert the system (\ref{eq:dynamics}) to a polynomial system with an extended state space so that the techniques of this section can be used for verifying safety. We define a system with state $(\hat{x}(t),\hat{w}(t))$ where $\hat{x}(t) \in \mathbb{R}^{l}$ and $\hat{w}(t) \in \mathbb{R}^{2(n-l)}$ by 
\begin{IEEEeqnarray}{rCl}
\label{eq:trig-equiv-1}
\dot{\hat{x}}_{i}(t) &=& \overline{f}_{i}(\hat{x},\hat{w}), \ i=1,\ldots,l \\
\dot{\hat{w}}_{2k-1} &=& \hat{w}_{2k}\overline{f}_{l+k}(\hat{x},\hat{w}), \ k=1,\ldots,(n-l) \\
\label{eq:trig-equiv-2}
\dot{\hat{w}}_{2k} &=& -\hat{w}_{2k-1}\overline{f}_{l+k}(\hat{x},\hat{w}), k=1,\ldots,(n-l)
\end{IEEEeqnarray}
We then have the following result.

\begin{theorem}
\label{theorem:trig-verify}
Let $\mathcal{C}$ be defined by $$\mathcal{C} = \bigcup_{i=1}^{s}{\bigcap_{j=1}^{r_{i}}{\{x : b_{ij}(x) \geq 0\}}}$$ where the $b_{ij}$'s satisfy Assumption \ref{assumption:trig}. Define $\hat{\mathcal{C}}$ by 
\begin{multline*}
\hat{\mathcal{C}} \triangleq \left(\bigcup_{i=1}^{s}{\bigcap_{j=1}^{r_{i}}{\{(\hat{x},\hat{w}) : \overline{b}_{ij}(\hat{x},\hat{w}) \geq 0\}}}\right) \\
\cap \left(\bigcap_{k=1}^{n-1}{\{(\hat{x},\hat{w}) : \hat{w}_{2k-1}^{2} + \hat{w}_{2k}^{2} = 1\}}\right).
\end{multline*}
Then $\mathcal{C}$ is positive invariant under (\ref{eq:dynamics}) if and only if $\hat{\mathcal{C}}$ is positive invariant under (\ref{eq:trig-equiv-1})--(\ref{eq:trig-equiv-2}).
\end{theorem}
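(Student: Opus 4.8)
The plan is to prove the equivalence by establishing a correspondence between solution trajectories of the original system and solution trajectories of the lifted polynomial system (\ref{eq:trig-equiv-1})--(\ref{eq:trig-equiv-2}) that lie on the ``trigonometric identity'' manifold, and then transporting membership in $\mathcal{C}$ and $\hat{\mathcal{C}}$ back and forth along this correspondence. Since positive invariance in Definition \ref{def:PI} is stated for an autonomous system, I treat the control as a fixed (closed-loop) input so that the drift $\overline{f}$ appearing in (\ref{eq:trig-equiv-1})--(\ref{eq:trig-equiv-2}) represents the closed-loop vector field; any $g(x)u$ term lifts in exactly the same way as $f$, so the argument is unaffected by its presence.

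First I would establish the crucial structural fact that the manifold $M = \bigcap_{k=1}^{n-l}\{(\hat{x},\hat{w}) : \hat{w}_{2k-1}^{2} + \hat{w}_{2k}^{2} = 1\}$ is invariant under the lifted dynamics. Differentiating $\hat{w}_{2k-1}^{2} + \hat{w}_{2k}^{2}$ along (\ref{eq:trig-equiv-1})--(\ref{eq:trig-equiv-2}) gives $2\hat{w}_{2k-1}(\hat{w}_{2k}\overline{f}_{l+k}) + 2\hat{w}_{2k}(-\hat{w}_{2k-1}\overline{f}_{l+k}) = 0$, so each quantity $\hat{w}_{2k-1}^{2} + \hat{w}_{2k}^{2}$ is conserved and a trajectory starting on $M$ remains on $M$. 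Because $\hat{\mathcal{C}} \subseteq M$ by definition, this reduces the analysis of $\hat{\mathcal{C}}$ to the angular coordinates on $M$.

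Next I would set up the lift and the projection. Given a solution $x(t)$ of the original dynamics, define $\hat{x}_{i}(t) = x_{i}(t)$ for $i \leq l$ and $\hat{w}_{2k-1}(t) = \sin x_{l+k}(t)$, $\hat{w}_{2k}(t) = \cos x_{l+k}(t)$; a direct differentiation using $\dot{x}_{l+k} = \overline{f}_{l+k}(\hat{x},\hat{w})$ shows this is a solution of (\ref{eq:trig-equiv-1})--(\ref{eq:trig-equiv-2}) lying on $M$. Conversely, given a lifted solution on $M$, write $\hat{w}_{2k-1} = \sin\theta_{k}$ and $\hat{w}_{2k} = \cos\theta_{k}$; differentiating shows $\dot{\theta}_{k} = \overline{f}_{l+k}(\hat{x},\hat{w})$, so $x(t) = (\hat{x}_{1},\ldots,\hat{x}_{l},\theta_{1},\ldots,\theta_{n-l})$ solves the original system. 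Since $f$, $g$, and every $b_{ij}$ depend on $x_{l+1},\ldots,x_{n}$ only through the trigonometric functions $w$, the identity $b_{ij}(x(t)) = \overline{b}_{ij}(\hat{x}(t),\hat{w}(t))$ holds along each corresponding pair of trajectories, whence $x(t) \in \mathcal{C}$ if and only if $(\hat{x}(t),\hat{w}(t)) \in \hat{\mathcal{C}}$. Both directions then follow: for the forward implication, take $(\hat{x}(0),\hat{w}(0)) \in \hat{\mathcal{C}} \subseteq M$, project it to $x(0) \in \mathcal{C}$, invoke positive invariance of $\mathcal{C}$, and lift back using invariance of $M$; the reverse implication is symmetric.

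The main obstacle I anticipate is the bookkeeping around the trajectory correspondence rather than any single hard estimate. The trigonometric lift $x_{l+k} \mapsto (\sin x_{l+k},\cos x_{l+k})$ is not injective, so care is needed to verify that projecting and lifting genuinely return solutions of the respective systems, not merely matching set membership pointwise. In addition, because $f$ is assumed only continuous, solutions need not be unique, so the argument must be phrased as transporting an \emph{arbitrary} solution of one system to a solution of the other and must carefully respect the qualifier ``when the solution exists'' in Definition \ref{def:PI}, rather than relying on a well-defined flow map.
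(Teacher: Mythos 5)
Your proposal is correct and follows essentially the same route as the paper's proof: lifting trajectories of (\ref{eq:dynamics}) via $\hat{w}_{2k-1}=\sin x_{l+k}$, $\hat{w}_{2k}=\cos x_{l+k}$, using the conservation law $\frac{d}{dt}(\hat{w}_{2k-1}^{2}+\hat{w}_{2k}^{2})=0$ to confine lifted trajectories to the circle manifold, and transporting membership between $\mathcal{C}$ and $\hat{\mathcal{C}}$ along the correspondence (the paper phrases both directions contrapositively, which is an immaterial difference). If anything, you are more careful than the paper: your explicit angle-lifting construction for projecting a lifted solution back down, and your attention to non-uniqueness of solutions when $f$ is merely continuous, address a step the paper handles only implicitly by asserting that the trajectory from the matched initial condition $x(0)$ satisfies $\sin x_{l+k}(t)=\hat{w}_{2k-1}(t)$ for all $t$.
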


\begin{proof}
    Suppose first that $\mathcal{C}$ is not positive invariant, and let $x(0) \in \mathcal{C}$ be the initial state of a trajectory that exits $\mathcal{C}$. Let $\hat{x}(0)$ be defined by $\hat{x}_{i}(0) = x_{i}(0)$ for $i=1,\ldots,l$, $\hat{w}_{2k-1}(0) = \sin{x_{l+k}(0)}$, and $\hat{w}_{2k}(0) = \cos{x_{l+k}(0)}$. We then have $(\hat{x}(0),\hat{w}(0)) \in \hat{\mathcal{C}}$. Furthermore, by construction of (\ref{eq:trig-equiv-1})--(\ref{eq:trig-equiv-2}), we have that $w_{2k-1}(t) = \sin{x_{l+k}(t)}$ and $w_{2k} = \cos{x_{l+k}(t)}$ for all $t \geq 0$. Hence if $x(t) \notin \mathcal{C}$ for some time $t$, then $(\hat{x}(t),\hat{w}(t)) \notin \hat{\mathcal{C}}$.

    Now, suppose that $\hat{\mathcal{C}}$ is not positive invariant, and let $(\hat{x}(0), \hat{w}(0)) \in \hat{\mathcal{C}}$ be an initial state such that $(\hat{x}(t),\hat{w}(t)) \notin \hat{\mathcal{C}}$ for some time $t$. For all $k$, we have that 
    \begin{IEEEeqnarray*}{rCl}
    \IEEEeqnarraymulticol{3}{l}{
    \frac{d}{dt}(w_{2k-1}^{2} + w_{2k}^{2})} \\
    &=& 2\hat{w}_{2k-1}\hat{w}_{2k}\overline{f}_{k+l}(\hat{x},\hat{w}) - 2\hat{w}_{2k}\hat{w}_{2k-1}\overline{f}_{k+l}(\hat{x},\hat{w}) = 0,
    \end{IEEEeqnarray*}
    and hence we must have $$(\hat{x}(t),\hat{w}(t)) \notin \bigcup_{i=1}^{s}{\bigcap_{j=1}^{r_{i}}{\{(\hat{x},\hat{w}) : \overline{b}_{ij}(\hat{x},\hat{w}) \geq 0\}}}.$$ Choosing $x(0)$ such that $x_{i}(0) = \hat{x}_{i}(0)$ for $i=1,\ldots,l$, $\sin{x_{k+l}(0)} = \hat{w}_{2k-1}(0)$, and $\cos{x_{k+l}(0)} = \hat{w}_{2k}(0)$ for $k=1,\ldots,(n-l)$, we have that $\cos{x_{k}(t)} = \hat{w}_{2k}(t)$ and $\sin{x_{k}(t)} = \hat{w}_{2k-1}(t)$ for all $t \geq 0$. We then have $x(t) \notin \mathcal{C}$, completing the proof.
\end{proof}


\subsection{Applications of Our Approach}
\label{subsec:verification-applications}
We first motivate our verification framework by discussing how verifying safety of some previously proposed safe control methodologies can be viewed as special cases of our approach.

\subsubsection{Polynomial CBF and HOCBF} CBFs have seen widespread applications in autonomous driving \cite{chen2017obstacle}, aerospace \cite{breeden2021guaranteed}, and robotics \cite{kurtz2021control}. In many of these applications, the safety constraints can be encoded as polynomial constraints on the state space. These polynomial constraints are then used as control barrier functions, for example, the linear constraint in the adaptive cruise control scenario of \cite{ames2016control}, or the minimum-distance constraint of \cite{chen2017obstacle}. In this case, the set $\{x: b(x) \geq 0\}$, where $b(x)$ denotes the CBF, is semi-algebraic, and verifying that the CBF-based control policy ensures safety is equivalent to verifying invariance of this set. HOCBFs have been proposed as CBF constructions for high relative degree systems, and can also be verified using the approach of this paper (see Proposition \ref{prop:HOCBF-viability}).

\subsubsection{Neural Control Barrier Functions} A Neural Control Barrier Function (NCBF) is a CBF $b$ that is represented by a feedforward neural network \cite{dawson2023safe}. For certain widely-used activation functions such as ReLU and sigmoid, the set $\mathcal{C} = \{x: b(x) \geq 0\}$ is equal to a union of linear polytopes~\cite{zhang2024exact}, and hence is semi-algebraic. 

\subsubsection{Value Function Approximation} One line of work attempts to compute CBFs for a system by solving the Hamilton Jacobi Bellman equation to obtain a reach-avoid value function, which can then be used as a  CBF \cite{wabersich2023data}. In practice, since solving the HJB equation involves discretizing the state space and using polynomial interpolation, the approximate CBF $b$ defines a semi-algebraic set. We present two specific examples as follows.

In \cite{tonkens2022refining}, the authors present an iterative algorithm for computing the reach-avoid value function, and prove that it converges to a viable set. The algorithm for estimating the value function first discretizes the state space as a grid, solves a discrete HJB equation at each grid point, and computes an estimated barrier function as the interpolation of the value function.  Suppose that the state space is partitioned into a collection of hyperrectangles $\mathcal{R}_{i}$ where $\mathcal{R}_{i}$ has vertices $x_{i1},\ldots,x_{iM}$. Suppose that, on each hypperrectangle, the CBF $b$ is approximated by $h_{i}(x,V_{i1},\ldots,V_{iM})$, where $h_{i}$ is a polynomial and $V_{i1},\ldots,V_{iM}$ are the approximations of the value function at $x_{i1},\ldots,x_{iM}$ (for example, linear interpolation is used in \cite{tonkens2022refining}).  The safe region to be verified is then given by $$\mathcal{C} = \bigcup_{i=1}^{M}{\left\{\mathcal{R}_{i} \cap \{x: h_i(x,V_{1},\ldots,V_{M}) \geq 0\}\right\}},$$ which is semi-algebraic.

In \cite{kumar2023fast}, an alternative approach to value function-based synthesis was proposed. In this approach, a discrete-time model $x_{t+1} = \hat{f}(x_{t},u_{t})$ is used.   A trajectory $\overline{x}_{0},\ldots,\overline{x}_{T} \in \mathbb{R}^{n}$ and a sequence of inputs $\overline{u}_{0},\ldots,\overline{u}_{T} \in \mathbb{R}^{m}$ are computed, with $\overline{x}_{t+1} = \hat{f}(\overline{x}_{t},\overline{u}_{t})$ for $t=0,\ldots,(T-1)$ and $\overline{x}_{T}$ lying in a known invariant set. The reach-avoid value function $V$ is then approximated using differential dynamic programming, resulting in a set of quadratic approximations $$V_{t}(x) = \overline{V}_{t} + z_{t}^{T}(x-\overline{x}_{t}) + \frac{1}{2}(x-\overline{x}_{t})^{T}Z_{t}(x-\overline{x}_{t})$$ for some $\overline{V}_{t} \in \mathbb{R}$, $z_{t} \in \mathbb{R}^{n}$, and $Z_{t} \in \mathbb{R}^{n \times n}$. The set $\mathcal{C} = \bigcup_{t=0}^{T}{\{x : V_{t}(x) \geq 0\}}$ then serves as a candidate invariant set. Since this set is a union of super-level sets of quadratic functions, it is semi-algebraic. Viability of this set under a continuous-time model of the form (\ref{eq:dynamics}) can then be performed.

\section{Problem Formulation: CBF Synthesis}
\label{sec:synthesis}
This section considers the following problem.
\begin{problem}
\label{problem:synthesis}
Given a safety constraint $\mathcal{S} = \{x: h(x) \geq 0\}$, construct a CBF $b: \mathbb{R}^{n} \rightarrow \mathbb{R}$  that guarantees positive invariance and $\mathcal{C} \subseteq \mathcal{S}$.
\end{problem}

We present two approaches to synthesizing CBFs. The first approach uses an alternating descent method. The second approach exploits the existence of equilibria to construct CBFs for local neighborhoods. 

\subsection{Alternating Descent Approach}
\label{subsec:ad-synthesis}
Our first approach is based on the SOS constraints introduced in Section \ref{sec:verification}. When the function $b(x)$ is unknown, the constraints become  non-convex. We therefore propose an alternating-descent heuristic, in which we alternate between updating the CBF candidate $b(x)$ and the polynomials used for verification. We describe our approach for synthesizing CBFs for polynomial dynamics, although a similar approach can be exploited for systems with  trigonometric functions in their dynamics.

We initialize a function $b^{0}(x)$ arbitrarily (e.g., $b^{0}(x) = h_{i}(x)$ for some $i$) and initialize parameters $\rho_{0}$ and $\rho_{0}^{\prime}$ to be infinite. At step $k$, we solve the following SOS program with variables $\rho$, $\alpha^{k}(x)$, $\eta^{k}(x)$, $\theta_{1}^{k}(x),\ldots,\theta_{m}^{k}(x)$, $w(x)$, and $\beta^{k}(x)$:

\begin{equation}
\label{eq:alternating-descent-1}
\begin{array}{ll}
\mbox{minimize} & \rho \\
\mbox{s.t.} & \alpha^{k}(x)\frac{\partial b^{k-1}}{\partial x}f(x) + \sum_{i=1}^{m}{\theta_{i}^{k}(x)\left[\frac{\partial b^{k-1}}{\partial x}g(x)\right]_{i}} \\
& + \eta^{k}(x)b^{k-1}(x) + \rho \Lambda(x) - 1 \in SOS \\
& -\beta^{k}(x)b^{k-1}(x) + w(x)h(x) - 1 \in SOS \\
& \beta^{k}(x), \alpha^{k}(x) \in SOS
\end{array}
\end{equation}

In (\ref{eq:alternating-descent-1}), $\Lambda(x)$ is an SOS polynomial chosen to ensure that the first constraint is SOS for sufficiently large $\rho$. We let $\rho_{k}$ denote the value of $\rho$ returned by solving (\ref{eq:alternating-descent-1}). The procedure terminates if $\rho_{k} \leq 0$ or if $|\rho_{k} - \rho_{k-1}^{\prime}| < \epsilon$. Otherwise, we solve the SOS problem with variables $\rho$ and $b^{k}$ given by

\begin{equation}
\label{eq:alternating-descent-2}
\begin{array}{ll}
\mbox{minimize} & \rho \\
\mbox{s.t.} & \alpha^{k}(x)\frac{\partial b^{k}}{\partial x}f(x) + \sum_{i=1}^{m}{\theta_{i}^{k}(x)\left[\frac{\partial b^{k}}{\partial x}g(x)\right]_{i}} \\
& + \eta^{k}(x)b^{k}(x) + \rho\Lambda(x)-1 \in SOS \\
& -\beta^{k}(x)b^{k}(x) + w(x)h(x) - 1 \in SOS 
\end{array}
\end{equation}
We let $\rho_{k}^{\prime}$ denote the value of $\rho$ returned by the optimization. The procedure terminates if $\rho \leq 0$ or if $|\rho_{k}-\rho_{k}^{\prime}| < \epsilon$. A pseudocode description is given as Algorithm \ref{algo:synthesis}.
\begin{algorithm}[h]
\caption{CBF Synthesis}
\begin{algorithmic}[1]
    \State \textbf{Input:} Initial candidate CBF $b^{0}$, dynamics functions $f$ and $g$, safety constraint $h$, tolerance $\epsilon$.
    \State \textbf{Output:} Either a valid CBF $b$ or \textbf{fail} if CBF cannot be found
    \Procedure{CBF Synthesis}{$b^{0},f,g,h,\epsilon$}
    \State $k \leftarrow 0$
    \State $\rho_{0} \leftarrow 1$, $\rho_{0}^{\prime} \leftarrow (1+2\epsilon)$ 
    \While{$\rho_{k} > 0$ \textbf{and} $\rho_{k}^{\prime} > 0$ \textbf{and} $|\rho_{k}-\rho_{k}^{\prime}| > \epsilon$}
    \State $(\rho_{k},\alpha^{k},\theta_{1}^{k},\ldots,\theta_{m}^{k},\eta^{k},\beta^{k}) \rightarrow$ solution to \eqref{eq:alternating-descent-1}
    \State $(\rho_{k}^{\prime},b^{k})\leftarrow $ solution to \eqref{eq:alternating-descent-2}
    \EndWhile
    \If{$\rho_{k} \leq 0$ \textbf{or} $\rho_{k}^{\prime} \leq 0$}
    \State \Return{$b^{k}$}
    \Else
    \State \Return{\textbf{fail}}
    \EndIf
   \EndProcedure
\end{algorithmic}
\label{algo:synthesis}
\end{algorithm}

The following theorem establishes correctness of this approach.
\begin{theorem}
\label{theorem:alternating-descent-CBF}
If the procedure described above terminates at stage $k$ with $\rho_{k} \leq 0$, then $b^{k-1}(x)$ is a CBF and $\{b^{k-1}(x) \geq 0\}$ is controlled positive invariant. If the procedure terminates at stage $k$ with $\rho_{k}^{\prime} \leq 0$, then $b^{k}(x)$ is a CBF and $\{b^{k}(x) \geq 0\}$ is controlled positive invariant. The procedure converges in a finite number of iterations.
\end{theorem}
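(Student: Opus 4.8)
The plan is to handle the two certificate claims and the finiteness claim by different means: the first two reduce to the Positivstellensatz machinery of Section~\ref{sec:verification}, while finiteness follows from a monotonicity argument on the optimal values produced by the alternation.

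First I would read off what the two SOS constraints certify. For $\rho\le 0$ the first constraint is exactly a Positivstellensatz certificate (Theorem~\ref{theorem:Psatz}) that the \emph{bad set} $\{x : b^{k-1}(x)=0,\ \tfrac{\partial b^{k-1}}{\partial x}g_i(x)=0\ \forall i,\ \tfrac{\partial b^{k-1}}{\partial x}f(x)<0\}$ is empty: since $\alpha^k,\Lambda\in SOS$ and $\rho\le 0$, the certified (hence nonnegative) polynomial would, at any such point, reduce to $\alpha^k(x)\tfrac{\partial b^{k-1}}{\partial x}f(x)+\rho\Lambda(x)-1\le -1<0$, a contradiction. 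Emptiness of this set is precisely the viability condition of Proposition~\ref{prop:unconstrained-single-poly-viability}. Combined with the standing practicality hypothesis, which forces $\tfrac{\partial b^{k-1}}{\partial x}\neq 0$ on $\{b^{k-1}=0\}$, this yields both clauses of Definition~\ref{def:CBF} once the class-$K$ function is built as in the proof of Theorem~\ref{theorem:CBF-policy-existence}; hence $b^{k-1}$ is a CBF and, by Theorem~\ref{theorem:CBF-safety}, $\{b^{k-1}\ge 0\}$ is controlled positive invariant. The second constraint $-\beta^k b^{k-1}+wh-1\in SOS$ is, by the same appeal to Theorem~\ref{theorem:Psatz}, a certificate that $\{x:b^{k-1}(x)\ge 0\}\cap\{x:h(x)<0\}$ is empty, i.e. $\mathcal C\subseteq\mathcal S$, in the spirit of Proposition~\ref{prop:containment}. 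The case $\rho_k'\le 0$ is identical after replacing $b^{k-1}$ by the minimizer $b^k$ of~(\ref{eq:alternating-descent-2}), whose constraints are the same two certificates read as conditions on $b^k$.

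For finiteness I would show the interleaved sequence $\rho_1\ge\rho_1'\ge\rho_2\ge\rho_2'\ge\cdots$ is monotonically non-increasing by a warm-start argument. In~(\ref{eq:alternating-descent-2}) the verification polynomials $\alpha^k,\eta^k,\theta^k,w,\beta^k$ are frozen at their values from~(\ref{eq:alternating-descent-1}), so $b^k=b^{k-1}$ is feasible and attains objective $\rho_k$, giving $\rho_k'\le\rho_k$; conversely, in step~(\ref{eq:alternating-descent-1}) of stage $k+1$ the function $b^k$ is frozen and the tuple attaining $\rho_k'$ is feasible, giving $\rho_{k+1}\le\rho_k'$. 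Initial feasibility is inherited from the initialization (e.g. $b^0=h$ makes the containment certificate trivial) and the warm-start preserves it at every step. A monotone non-increasing real sequence is either unbounded below, in which case some $\rho_k$ or $\rho_k'$ reaches $\le 0$ and the procedure stops, or it is bounded below and hence Cauchy, so the consecutive gaps $\rho_k-\rho_{k-1}'$ and $\rho_k-\rho_k'$ tend to $0$ and eventually fall below $\epsilon$, again triggering termination.

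The main obstacle I anticipate is the finiteness half rather than the certificate half. The certificate steps are bookkeeping over Theorem~\ref{theorem:Psatz} once the sign of the $\rho\Lambda$ slack is pinned down; the delicate point is verifying that feasibility is genuinely carried across the alternation (so that the optimal values are directly comparable) and that the two distinct stopping tests, $\rho\le 0$ and the $\epsilon$-gap, jointly capture every limiting behaviour of a bounded monotone sequence. A secondary subtlety is the nondegeneracy clause of Definition~\ref{def:CBF}: viability by itself permits $\tfrac{\partial b}{\partial x}=0$ on the boundary, so I would lean on the practicality assumption (Definition~\ref{def:practical-semialgebraic}) to exclude this, rather than attempt to extract it from the SOS program.
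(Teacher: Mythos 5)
Your proposal is correct and follows essentially the same route as the paper: the paper likewise reads the first SOS constraint with $\rho_k \leq 0$ as a certificate that no $x$ satisfies $b^{k-1}(x)=0$, $\frac{\partial b^{k-1}}{\partial x}g(x)=0$, $\frac{\partial b^{k-1}}{\partial x}f(x)\leq 0$ (invoking Proposition~\ref{prop:unconstrained-single-poly-viability}), and proves finiteness by exactly your warm-start argument, i.e.\ the previous solution of each program is feasible for the next, so $\rho_{1}\geq\rho_{1}^{\prime}\geq\rho_{2}\geq\rho_{2}^{\prime}\geq\cdots$ is monotone and either becomes nonpositive or converges so that the $\epsilon$-gap test fires. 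Your explicit handling of the nondegeneracy clause of Definition~\ref{def:CBF} and of the class-$K$ construction is in fact more careful than the paper's proof, which passes directly from the viability certificate to the CBF conclusion without comment.
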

\begin{proof}
We will show that if $\rho_{k} \leq 0$, then $b^{k-1}(x)$ is a CBF and $\{b^{k-1}(x) \geq 0\}$ is controlled positive invariant. If $\rho_{k} \leq 0$, then by construction of (\ref{eq:alternating-descent-1}) there exist SOS polynomials $\bar{\alpha}(x)$ and $\bar{\beta}(x)$ such that
\begin{IEEEeqnarray}{rCl}
\nonumber
\overline{\alpha}(x) &=& \alpha^{k}(x)\frac{\partial b^{k-1}}{\partial x}f(x) + \sum_{i=1}^{m}{\theta_{i}^{k}(x)\left[\frac{\partial b^{k-1}}{\partial x}g(x)\right]_{i}} \\
\label{eq:alternating-descent-proof-1}
&& +\eta^{k}(x)b^{k-1}(x) + \rho_{i}\Lambda(x) - 1\\
\label{eq:alternating-descent-proof-2}
\overline{\beta}(x) &=& -\beta^{k}(x)b^{k-1}(x) + w(x)h(x) - 1
\end{IEEEeqnarray}
Rearranging (\ref{eq:alternating-descent-proof-1}) and setting $\overline{\theta} = -\theta_{i}^{k}$ and $\overline{\eta} = -\eta^{k}$ yields
\begin{multline*}
\sum_{i=1}^{m}{\overline{\theta}_{i}(x)\left[\frac{\partial b^{k-1}}{\partial x}g(x)\right]_{i}} + \overline{\eta}(x)b^{k-1}(x) \\
+ \overline{\alpha}(x) - \rho_{k}\Lambda(x) - \alpha^{k}(x)\frac{\partial b^{k-1}}{\partial x}f(x) + 1 = 0
\end{multline*}
Since $\rho_{k} \leq 0$ and $\Lambda(x)$ is SOS, $\overline{\alpha}(x) - \rho_{k}\Lambda(x)$ is SOS. Hence,  there does not exist any $x$ satisfying $$\frac{\partial b^{k-1}}{\partial x}f(x) \leq 0, \ b^{k-1}(x) = 0, \ \frac{\partial b^{k-1}}{\partial x}g(x) = 0$$ and $b^{k-1}(x)$ is a CBF by Proposition \ref{prop:unconstrained-single-poly-viability}. The case where $\rho_{k}^{\prime} \leq 0$ is similar.

To prove convergence, we observe that $\rho_{k-1}^{\prime}$, $\alpha^{k-1}(x)$, $\eta^{k-1}(x)$, $\theta_{1}^{k-1}(x),\ldots,\theta_{m}^{k-1}(x)$, and $\beta^{k-1}(x)$ comprise a feasible solution to (\ref{eq:alternating-descent-1}) and $\rho_{k}$, $b^{k-1}(x)$ are feasible solutions to (\ref{eq:alternating-descent-2}). Hence $\rho_{1} \geq \rho_{1}^{\prime} \geq \cdots \geq \rho_{k} \geq \rho_{k}^{\prime} \geq \cdots$, i.e., the sequence is monotone nonincreasing. If the sequence is unbounded, then there exists $K$ such that $\rho_{k}, \rho_{k}^{\prime} \leq 0$ for all $k > K$, and hence the algorithm terminates after $K$ iterations. If the sequence is bounded below by zero, then it converges by the monotone convergence theorem and hence $|\rho_{k}-\rho_{k}^{\prime}| < \epsilon$ and $|\rho_{k} - \rho_{k-1}^{\prime}| < \epsilon$ for $k$ sufficiently large.
\end{proof}

We next consider synthesis of HOCBFs. We initialize functions $b^{0}(x),\psi_{0}^{0}(x),\ldots,\psi_{r}^{0}(x)$ arbitrarily. We initialize parameters $\rho_{0}$ and $\rho_{0}^{\prime}$ to be infinite. We then solve an SOS problem with variables $\rho$, $\eta^{k}(x)$, $\theta_{1}^{k}(x),\ldots,\theta_{m}^{k}(x)$, $\{\lambda_{S}(x) : S \subseteq \{0,\ldots,r+1\}\}$, $\{\beta_{S}^{k}(x) : S \subseteq \{0,\ldots,r\}\}$, and $w^{k}(x)$ under two constraints. The first constraint ensures viability and is given by
\begin{multline}
\label{eq:alternating-HOCBF-constraint-1}
\rho\Lambda(x) + \eta^{k}(x)\psi_{r}^{k-1}(x) + \sum_{i=1}^{m}{\theta_{i}^{k}(x)L_{f}^{r-1}b^{k-1}(x)} \\
 + \sum_{S  \subseteq \{0,\ldots,(r+1)\}}{\left[\lambda_{S}(x)(L_{f}^{r}b(x) + O(b(x)))\right.} \\
 \cdot \left.\prod_{l \in S}{\psi_{l}^{k}(x)}\right] 
  + (L_{f}^{r}b(x) + O(b(x)))^{2s} \in SOS
  \end{multline}
The second constraint ensures containment within the safe region and is given by
\begin{multline}
\label{eq:alternating-HOCBF-constraint-2}
\sum_{S  \subseteq \{0,\ldots,r\}}{\beta_{S}^{k}(x)\prod_{l \in S}{\psi_{l}^{k-1}(x)}} + w^{k}(x)h(x) + 1 \in SOS
\end{multline}
Hence the overall SOS program can be stated as
\begin{equation}
\label{eq:alternating-descent-HOCBF-1}
\begin{array}{ll}
\mbox{minimize} & \rho \\
\mbox{s.t.} &  (\ref{eq:alternating-HOCBF-constraint-1}), (\ref{eq:alternating-HOCBF-constraint-2}) \\
 & \lambda_{S}^{k}(x), \beta_{S}^{k}(x) \in SOS
 \end{array}
 \end{equation}

We let $\rho_{k}$ denote the value of $\rho$ returned by solving (\ref{eq:alternating-descent-HOCBF-1}). The procedure terminates if $\rho_{k} \leq 0$ or if $|\rho_{k}-\rho_{k-1}^{\prime}| < \epsilon$. Otherwise, the algorithm proceeds by solving the SOS problem with variables $\rho$, $b^{k}$, and $\psi_{0}^{k},\ldots,\psi_{r}^{k}$ and constraints
\begin{multline}
\label{eq:alternating-HOCBF-constraint-3}
\rho\Lambda(x) + \eta^{k}\psi_{r}^{k}(x) + \sum_{i=1}^{m}{\theta_{i}^{k}(x)L_{f}^{r-1}b^{k}(x)} \\
  + \sum_{S  \subseteq \{0,\ldots,(r+1)\}}{\lambda_{S}(x)(L_{f}^{r}b^{k}(x) + O(b^{k}(x)))}\\
  \cdot \prod_{l \in S}{\psi_{l}^{k}(x)} \in SOS
  \end{multline}
  and
  \begin{multline}
  \label{eq:alternating-HOCBF-constraint-4}
\sum_{S  \subseteq \{0,\ldots,r\}}{\beta_{S}^{k}(x)\prod_{l \in S}{\psi_{l}^{k-1}(x)}} + w^{k}(x)h(x) + 1 \in SOS
\end{multline}
The SOS program is stated as
 \begin{equation}
 \label{eq:alternating-descent-HOCBF-2}
 \begin{array}{ll}
 \mbox{minimize} & \rho \\
 \mbox{s.t.} &  (\ref{eq:alternating-HOCBF-constraint-3}), (\ref{eq:alternating-HOCBF-constraint-4})
 \end{array}
 \end{equation}
 The procedure terminates if $\rho^{\prime} \leq 0$ or if $|\rho_{k}-\rho_{k}^{\prime}| < \epsilon$. Similar to Theorem \ref{theorem:alternating-descent-CBF}, it can be shown that this approach terminates within finite time, and returns an HOCBF if $\rho \leq 0$ or $\rho^{\prime} \leq 0$ at convergence.
 
 \subsection{CBF Computation at Equilibria}
 \label{subsec:fixed-points}
 We next describe how structures such as the existence of fixed points  can be leveraged to construct CBFs. We  consider the case where the system $\dot{x}(t) = f(x) + g(x)u$ has a fixed point $(x^{\ast},u^{\ast})$ satisfying $f(x^{\ast}) + g(x^{\ast})u^{\ast} = 0$, and have the following result.
 
 \begin{lemma}
 \label{lemma:fixed-point}
 Suppose that (\ref{eq:dynamics}) has $(x^{\ast},u^{\ast})$ as a fixed point and the linearization around $(x^{\ast},u^{\ast})$ is controllable. Then there exists $\epsilon > 0$ and a positive definite matrix $P$ such that $b(x) = \epsilon - (x-x^{\ast})^{T}P(x-x^{\ast})$ is a CBF.
 \end{lemma}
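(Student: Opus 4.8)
The plan is to exploit the controllability of the linearization to build a local control-Lyapunov ellipsoid whose sublevel sets serve as the superlevel sets of the CBF. Write $\xi = x - x^{\ast}$ and let $A = \frac{\partial}{\partial x}[f(x) + g(x)u^{\ast}]\big|_{x=x^{\ast}}$ and $B = g(x^{\ast})$ be the linearization matrices. Since $(A,B)$ is controllable, I would choose a feedback gain $K$ so that $A_{cl} = A - BK$ is Hurwitz, then solve the Lyapunov equation $A_{cl}^{T}P + P A_{cl} = -Q$ for a fixed $Q \succ 0$ (say $Q=I$); standard Lyapunov theory supplies a unique $P \succ 0$. This $P$ is the matrix in the statement, and the candidate is $b(x) = \epsilon - \xi^{T}P\xi$ with $\epsilon$ fixed at the end.

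I would then verify the two requirements of Definition \ref{def:CBF}. The gradient $\frac{\partial b}{\partial x} = -2\xi^{T}P$ vanishes only at $\xi = 0$, i.e. at $x = x^{\ast}$; but there $b(x^{\ast}) = \epsilon > 0$, so $\frac{\partial b}{\partial x}\neq 0$ at every point of $\{b=0\}$, for any $\epsilon$. For the control inequality, I would test the feedback $u(x) = u^{\ast} - K\xi$ and study the closed-loop field $F(x) := f(x) + g(x)u(x)$. The fixed-point relation gives $F(x^{\ast}) = 0$, and a short computation (the only surviving term at $x^\ast$ is $g(x^\ast)K$) yields $\frac{\partial F}{\partial x}(x^{\ast}) = A - BK = A_{cl}$, so $F(x) = A_{cl}\xi + R(\xi)$ with $R(\xi) = o(|\xi|)$ as $\xi \to 0$, using the $C^{1}$ regularity of $f,g$ near $x^{\ast}$ that is implicit in the existence of the linearization.

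Along this feedback, $\dot b = -2\xi^{T}PF(x)$, and substituting the expansion together with the Lyapunov equation gives
$$\dot b = -\xi^{T}\!\left(PA_{cl}+A_{cl}^{T}P\right)\!\xi - 2\xi^{T}PR(\xi) = \xi^{T}Q\xi - 2\xi^{T}PR(\xi).$$
Since $\xi^{T}Q\xi \ge \lambda_{\min}(Q)|\xi|^{2}$ and $|2\xi^{T}PR(\xi)| \le 2\|P\|\,|\xi|\,|R(\xi)| = o(|\xi|^{2})$, there is a neighborhood $\mathcal{N}$ of $x^{\ast}$ on which $\dot b \ge \tfrac12\lambda_{\min}(Q)|\xi|^{2}\ge 0$. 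I would then pick $\epsilon>0$ small enough that the ellipsoid $\{x : b(x)\ge 0\} = \{x : \xi^{T}P\xi \le \epsilon\}$ lies inside $\mathcal{N}$. For every such $x$ the feedback $u(x)$ gives $\frac{\partial b}{\partial x}(f(x)+g(x)u(x)) = \dot b \ge 0 \ge -\kappa(b(x))$ for \emph{any} class-$\mathcal{K}$ function $\kappa$ (since $\kappa(b(x))\ge 0$ on the superlevel set), so $b$ is a CBF; this mirrors the single-polynomial characterization in Proposition \ref{prop:unconstrained-single-poly-viability}.

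The main obstacle is the linearization-domination step: controlling $R(\xi)$ uniformly so that the quadratic term $\xi^{T}Q\xi$ strictly dominates on a full neighborhood rather than merely along each ray, which is where the differentiability hypothesis and the compactness of the unit sphere in $\xi$ enter. A secondary point, should one insist on $u(x)\in\mathcal{U}$, is that the test feedback stays admissible: as $u(x)\to u^{\ast}$ when $\xi\to 0$, shrinking $\epsilon$ further guarantees $u(x)\in\mathcal{U}$ whenever $u^{\ast}$ is interior to $\mathcal{U}$, though Definition \ref{def:CBF} as stated imposes no such constraint.
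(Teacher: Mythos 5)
Your proposal is correct and follows essentially the same route as the paper's proof: linearize at $(x^{\ast},u^{\ast})$, use controllability to obtain a stabilizing feedback $u^{\ast}-K(x-x^{\ast})$, take $P$ from the resulting Lyapunov equation, and shrink $\epsilon$ so that the ellipsoid $\{b \geq 0\}$ lies in the neighborhood where the quadratic Lyapunov decrease (equivalently $\dot{b}\geq 0$) holds for the nonlinear system. The only difference is that you spell out details the paper leaves implicit --- the $o(|\xi|^{2})$ remainder estimate, the nonvanishing-gradient condition of Definition~\ref{def:CBF}, and the fact that any class-$K$ function $\kappa$ then works --- which strengthens rather than changes the argument.
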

 
 \begin{proof}
 Let $\dot{x}(t) = Fx(t) + Gu(t)$ be the linearization of (\ref{eq:dynamics}) in a neighborhood of $(x^{\ast},u^{\ast})$. Let $\tilde{u}(t) = u^{\ast}-K(x-x^{\ast})$ be an exponentially stabilizing controller of the linearized system. Then there is a quadratic Lyapunov function $V(x) = (x-x^{\ast})^{T}P(x-x^{\ast})$ such that (\ref{eq:dynamics}) is asymptotically stable in a neighborhood of $x^{\ast}$, i.e., $$\frac{\partial V}{\partial x}(f(x) + g(x)\tilde{u}(t)) \leq 0.$$ This is equivalent to $\frac{\partial b}{\partial x}(f(x) + g(x)\tilde{u}) \geq 0$ for $(x-x^{\ast})$ sufficiently small. 
 \end{proof}
 
 The preceding lemma implies that we can construct a CBF by computing a fixed point $(x^{\ast},u^{\ast})$ and a stabilizing linear controller $\overline{u} = -K\overline{x}$ of the linearized system and solving the Lyapunov equation $\overline{F}^{T}P + P\overline{F} + N = 0$ where $\overline{F} = (F-GK)$ and $N$ is a positive definite matrix. It remains to choose the parameter $\epsilon$, which can be performed via the following procedure.
 
 \begin{theorem}
 \label{theorem:fixed-point}
 Suppose that $(x^{\ast},u^{\ast})$ is a fixed point of (\ref{eq:dynamics}), $P$ is a solution to the Lyapunov equation for a stabilizing controller of the linearized sytsem at $(x^{\ast},u^{\ast})$, and the SOS constraints
 \begin{multline}
 \label{eq:fixed-point-1}
 \alpha(x)(-2(x-x^{\ast})^{T}Pf(x)) + \sum_{i=1}^{m}{\theta_{i}(x)[-2(x-x^{\ast})^{T}Pg(x)]_{i}} \\
 + \eta(x)(\epsilon - (x-x^{\ast})^{T}P(x-x^{\ast})) \in SOS
 \end{multline}
 and
 \begin{equation}
 \label{eq:fixed-point-2}
 h(x)-\beta(x)(\epsilon - (x-x^{\ast})^{T}P(x-x^{\ast})) \in SOS
 \end{equation}
 hold for some polynomials $\theta_{1},\ldots,\theta_{m}$, $\eta(x)$ and SOS polynomials $\alpha(x)$ and $\beta(x)$. Then $b(x) = \epsilon - (x-x^{\ast})^{T}P(x-x^{\ast}))$ is a CBF and $\{b(x) \geq 0\}$ is in the safe region.
 \end{theorem}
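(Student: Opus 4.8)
The plan is to treat the two displayed SOS constraints as independent certificates: (\ref{eq:fixed-point-1}) certifies that $b(x) = \epsilon - (x-x^*)^T P(x-x^*)$ is a control barrier function, and (\ref{eq:fixed-point-2}) certifies the inclusion $\{b \geq 0\} \subseteq \mathcal{S}$. First I would record that for this quadratic $b$ we have $\frac{\partial b}{\partial x} = -2(x-x^*)^T P$, so that $\frac{\partial b}{\partial x} f(x) = -2(x-x^*)^T P f(x)$ and $\left[\frac{\partial b}{\partial x} g(x)\right]_i = [-2(x-x^*)^T P g(x)]_i$. With these identifications (\ref{eq:fixed-point-1}) becomes an instance of the SOS feasibility condition displayed immediately after Proposition \ref{prop:unconstrained-single-poly-viability}, the free multipliers $\eta,\theta_i$ absorbing signs, while (\ref{eq:fixed-point-2}) is the standard S-procedure inclusion certificate.

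For the barrier claim I would invoke Proposition \ref{prop:unconstrained-single-poly-viability} together with Theorem \ref{theorem:Psatz}. Viability of $\{b \geq 0\}$ under unconstrained inputs is equivalent to the nonexistence of $x$ with $b(x) = 0$, $\frac{\partial b}{\partial x} g_i(x) = 0$ for all $i$, and $\frac{\partial b}{\partial x} f(x) < 0$, and the Positivstellensatz turns this nonexistence into feasibility of the identity $\eta b + \sum_i \theta_i \frac{\partial b}{\partial x} g_i - \alpha_1 \frac{\partial b}{\partial x} f + \alpha_0 + (\frac{\partial b}{\partial x} f)^{2v} = 0$ with $\alpha_0,\alpha_1$ SOS, which is (\ref{eq:fixed-point-1}) rearranged. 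Since $P \succ 0$, the gradient $\frac{\partial b}{\partial x}$ vanishes only at $x = x^*$, where $b(x^*) = \epsilon > 0$ lies in the interior; hence $\frac{\partial b}{\partial x} \neq 0$ on $\{b = 0\}$ and the regularity requirement of Definition \ref{def:CBF} holds. Because $\{b \geq 0\}$ is a compact ellipsoid, I would upgrade viability to the full barrier inequality by the $\kappa$-construction used in the proof of Theorem \ref{theorem:CBF-policy-existence}, and then apply Theorem \ref{theorem:CBF-safety} to obtain positive invariance.

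The inclusion claim is direct. Fix any $x$ with $b(x) \geq 0$; since $\beta$ and $h - \beta b$ are both SOS, $h(x) = (h(x) - \beta(x) b(x)) + \beta(x) b(x) \geq 0$, so $x \in \mathcal{S}$ and therefore $\{b \geq 0\} \subseteq \mathcal{S}$. This is the $s = 1$ instance of the reasoning in Lemma \ref{lemma:containment}. The fixed-point and Lyapunov hypotheses enter through Lemma \ref{lemma:fixed-point}, which guarantees that for $\epsilon$ small the stabilizing controller $\tilde u = u^* - K(x-x^*)$ makes the barrier inequality feasible on the ellipsoid, so the two SOS programs admit solutions in the first place.

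The delicate point is the sufficiency of the first constraint. As literally written, (\ref{eq:fixed-point-1}) only asserts that $\alpha \frac{\partial b}{\partial x} f + \sum_i \theta_i \frac{\partial b}{\partial x} g_i + \eta b$ equals some SOS polynomial; evaluating this at a putative violating point $x_0$ (where $b = 0$ and $\frac{\partial b}{\partial x} g_i = 0$) only forces $\alpha(x_0) = 0$ rather than a contradiction, and indeed $\alpha = 0$ renders the constraint vacuous. To close the gap I would read (\ref{eq:fixed-point-1}) as the complete Positivstellensatz certificate above, retaining the term $(\frac{\partial b}{\partial x} f)^{2v}$ (equivalently, forcing $\alpha$ to be bounded below by a positive constant). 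The identity then reads, at $x_0$, $\alpha(x_0) \frac{\partial b}{\partial x}(x_0) f(x_0) + (\frac{\partial b}{\partial x}(x_0) f(x_0))^{2v} = \sigma(x_0) \geq 0$ with strictly negative left side, the required contradiction. Verifying that the synthesized certificate carries this strengthened form, and that it remains sufficient for the barrier property, is the crux; the remaining steps are routine.
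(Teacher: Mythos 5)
Your proposal follows the same route as the paper's own (very terse) proof: the CBF claim is delegated to Proposition \ref{prop:unconstrained-single-poly-viability} applied with $\frac{\partial b}{\partial x} = -2(x-x^{\ast})^{T}P$, and the inclusion claim is exactly your computation $h = (h-\beta b) + \beta b \geq 0$ on $\{b \geq 0\}$ (the paper phrases this as ``rearranging terms gives (\ref{eq:fixed-point-2})''). Your additional checks---that $P \succ 0$ forces $\frac{\partial b}{\partial x} \neq 0$ on $\{b=0\}$ because the gradient vanishes only at $x^{\ast}$, where $b(x^{\ast}) = \epsilon > 0$, and that compactness of the ellipsoid allows the passage from viability to the class-K inequality of Definition \ref{def:CBF} via the construction in Theorem \ref{theorem:CBF-policy-existence}---do not appear in the paper's proof but are needed for a complete argument, so including them is an improvement rather than a deviation.

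Your ``delicate point'' identifies a genuine defect, and the defect is in the paper, not in your reasoning. As literally written, (\ref{eq:fixed-point-1}) is satisfied by $\alpha = \eta = \theta_{i} \equiv 0$, so it certifies nothing; your observation that a putative violating point $x_{0}$ only forces $\alpha(x_{0}) = 0$, with no contradiction, is exactly right. A valid certificate must carry the monoid term: compare the identity displayed after Proposition \ref{prop:unconstrained-single-poly-viability}, which contains $\left(\frac{\partial b}{\partial x}f(x)\right)^{2v}$, and the constraint (\ref{eq:alternating-descent-1}) in the alternating-descent method, which contains the constant $-1$ (the monoid element with $v=0$) and whose correctness proof in Theorem \ref{theorem:alternating-descent-CBF} uses precisely that term to obtain the contradiction. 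The paper's proof of Theorem \ref{theorem:fixed-point} silently treats (\ref{eq:fixed-point-1}) as if it were this full Positivstellensatz certificate, which it is not. Your proposed repair---retaining $\left(\frac{\partial b}{\partial x}f(x)\right)^{2v}$, or equivalently subtracting a positive constant so that the multiplier $\alpha$ cannot degenerate on the critical set $\{b=0,\ \frac{\partial b}{\partial x}g=0\}$---is the correct fix, and it is what any implementation (including the paper's own simulation in Section \ref{sec:simulation}) must actually solve for the conclusion to hold.
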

 
 \begin{proof}
 If (\ref{eq:fixed-point-1}) holds, then $b$ is a CBF by Propostion \ref{prop:unconstrained-single-poly-viability}. By nonnegativity of SOS polynomials, we have that $b(x) \geq 0$ implies $h(x) \geq 0$ if $h(x) = \beta(x)b(x) + \beta_{0}(x)$ for SOS polynomials $\beta$ and $\beta_{0}$. Rearranging terms gives (\ref{eq:fixed-point-2}).
 \end{proof}

\section{Simulation Study}
\label{sec:simulation}
We conducted two simulation studies to evaluate our proposed approach. The first study considered the problem of verifying safety of a quadrotor UAV. The second study considered the problem of synthesizing a CBF for a power converter. Finally, we considered equilibrium-based CBF synthesis for a linearized quadrotor with actuation constraints.

\subsection{Case Study: Quadrotor UAV Safety Verification}
\label{subsec:sim-verification}
We consider a two-dimensional quadrotor model. In what follows, we describe the system dynamics, safety constraint, and verification results.

\subsubsection{Quadrotor Dynamics}
The quadrotor has a time-varying state $x(t) \in \mathbb{R}^{6}$. The states $x_{1}$ and $x_{2}$ represent horizontal and vertical position, $x_{3}$ represents yaw angle, and $x_{4},x_{5},x_{6}$ represent horizontal, vertical, and yaw velocities, respectively. The dynamics are given by
\begin{equation}
    \label{eq:quadrotor-dynamics}
    \left(
    \begin{array}{c}
    \dot{x}_{1} \\
    \dot{x}_{2} \\
    \dot{x}_{3} \\
    \dot{x}_{4} \\
    \dot{x}_{5} \\
    \dot{x}_{6}
    \end{array}
    \right) = \left(
    \begin{array}{c}
    x_{4} \\
    x_{5} \\
    x_{6} \\
    0 \\
    -g \\
    0
    \end{array}
    \right) + \left(
    \begin{array}{cc}
    0 & 0 \\
    0 & 0 \\
    0 & 0 \\
    -\frac{\sin{x_{3}}}{M_{q}} & -\frac{\sin{x_{3}}}{M_{q}} \\
    \frac{\cos{x_{3}}}{M_{q}} & \frac{\cos{x_{3}}}{M_{q}} \\
    \frac{L_{r}}{I_{n}} & \frac{L_{r}}{I_{n}}
    \end{array}
    \right)\left(
    \begin{array}{c}
    u_{1} \\
    u_{2}
    \end{array}
    \right)
\end{equation}
where $g=9.8$ is the gravitational constant, $M_{q} = 0.486$ is the mass of the quadrotor, $I_{n} = 0.00383$ is the moment of inertia, and $L_{r} = 0.25$ is the length of the rotor arm.

\subsubsection{Safety Constraints} We consider the problem of synthesizing an invariant set in order to enforce a constraint on the minimum altitude of five meters for  the quadrotor, so that the safe region is defined by $\{x : x_{2} \geq 5\}$. This set is clearly not controlled positive invariant, since $\dot{x}_{2} = x_{5}$, which does not contain the control input $u$. In order to construct a safety constraint, we first use the approach of Section \ref{subsec:trig-functions} to convert \eqref{eq:quadrotor-dynamics} to an equivalent polynomial system
\begin{equation}
    \label{eq:quadrotor-polynomial}
    \left(
    \begin{array}{c}
    \dot{\overline{x}}_{1} \\
    \dot{\overline{x}}_{2}\\
    \dot{\overline{x}}_{3}\\
    \dot{\overline{x}}_{4}\\
    \dot{\overline{x}}_{5}\\
    \dot{\overline{x}}_{6}\\
    \dot{\overline{x}}_{7}
    \end{array}
    \right) = \left(
    \begin{array}{c}
    \overline{x}_{4} \\
    \overline{x}_{5} \\
    \overline{x}_{6}\overline{x}_{7} \\
    0 \\
    -g \\
    0 \\
    -\overline{x}_{3}\overline{x}_{6}
    \end{array}
    \right) + \left(
    \begin{array}{cc}
    0 & 0 \\
    0 & 0 \\
    0 & 0 \\
    -\frac{x_{3}}{M_{q}} & -\frac{x_{3}}{M_{q}} \\
    \frac{x_{7}}{M_{q}} & \frac{x_{7}}{M_{q}} \\
    \frac{L_{r}}{I_{n}} & -\frac{L_{r}}{I_{n}} \\
    0 & 0
    \end{array}
    \right)\left(
    \begin{array}{c}
    u_{1} \\
    u_{2}
    \end{array}
    \right)
\end{equation}
where $\overline{x}_{3} = \sin{x_{3}}$ and $\overline{x}_{7} = \cos{x_{3}}$. In order to ensure that $\bar{x}_{2} \geq 5$, we include an additional safety constraint $\bar{x}_{5} - k(\bar{x}_{2}-5) \geq 0$ for some $k \geq 0$. This constraint ensures that, when $x_{2} = 5$ (the quadrotor reaches the boundary of the safe region), we have $\dot{x}_{2} \geq 0$. Examining $\dot{\bar{x}}_{5}$, we find that $L_{g}\overline{x}_{5} = 0$ when $\bar{x}_{7} = 0$. We therefore add a third safety constraint $\{\bar{x}_{7} \geq \delta\}$ for some $\delta > 0$. Finally, to ensure that the constraint on $\bar{x}_{7}$ is satisfied, we add the constraint $\{-\bar{x}_{3}\bar{x}_{6} - \phi (x_{7}-\delta) \geq 0\}$. Introducing symmetric constraints when $\delta < 0$ gives the semi-algebraic invariant set candidate 
\begin{multline}
\label{eq:quadrotor-invariant}
    \mathcal{C} = \left(\{\bar{x}_{2} \geq 5\} \cap \{\bar{x}_{5} - k(\bar{x}_{2}-\beta) \geq 0\} \right.\\
    \left.\cap \{\bar{x}_{7} \geq \delta\} \cap \{-\bar{x}_{3}\bar{x}_{6}-\phi(\bar{x}_{7}-\delta) \geq 0\}\right) \\
    \cup\left(\{\bar{x}_{2} \geq 5\} \cap \{\bar{x}_{5} - k(\bar{x}_{2}-\beta) \geq 0\}\right. \\
    \left.\cap \{-\bar{x}_{7}- \delta \geq 0\} \cap \{\bar{x}_{3}\bar{x}_{6}-\phi(\bar{x}_{7}-\delta) \geq 0\}\right)
\end{multline}

\subsubsection{Verification Procedure} We can apply the verification procedure of Section \ref{subsec:viability-verification} to check that $\mathcal{C}$ is viable. We have can write $\mathcal{C}$ in the form \eqref{eq:closed-semialgebraic} with $s=2$, $r_{1}=r_{2}=4$, 
\begin{IEEEeqnarray}{rCl}
b_{11}(x) &=& \bar{x}_{2}-5, \quad b_{12}(x) = \overline{x}_{5}-k(\overline{x}_{2}-5) \\
b_{13}(x) &=& \bar{x}_{7}-\delta, \quad b_{14}(x) = -\bar{x}_{3}\bar{x}_{6}-\phi(\bar{x}_{7}-\delta) \\
b_{21}(x) &=& \bar{x}_{2}-5, \quad b_{22}(x) = \overline{x}_{5}-k(\overline{x}_{2}-5) \\
b_{23}(x) &=& -\bar{x}_{7}-\delta, \quad b_{24}(x) = \bar{x}_{3}\bar{x}_{6}-\phi(\bar{x}_{7}-\delta)
\end{IEEEeqnarray}
The possible values of $S$ are $\{1\}$, $\{2\}$, and $\{1,2\}$. For $S = \{1,2\}$, the set $\mathcal{C}_{1} \cap \mathcal{C}_{2}$ is empty, making verification trivial. When $S = \{1\}$, the set $\Pi(\{1,2\} \setminus \{1\}) = \{(2,1), (2,2), (2,3), (2,4)\}$ and $Z(S) = 2^{\{1,2,3,4\}}$ (that is, the set of subsets of $\{1,2,3,4\}$). The case where $S = \{2\}$ is similar. Altogether, we have 64 SOS programs to solve for each $S$, for a total of 128 SOS programs. However, we observe that some programs can be removed immediately as they involve verifying empty sets. Specifically, since $b_{11}(x) = b_{21}(x)$ and $b_{12}(x) = b_{22}(x)$, the cases where $\pi(2) = 1$ and $\pi(2) = 2$ (for $S =\{1\}$) and $\pi(1)=1$ and $\pi(1) = 2$ (for $S = \{1\}$) can be ruled out. With this reduction, there are 32 SOS programs to be solved.

The verification problem was solved on a MacBook M1 Pro with 16GB of memory using the Yalmip SOS solver \cite{lofberg2004yalmip}. The total runtime was 3260 seconds.

\subsection{Case Study: Power Converter Safety Synthesis}
\label{subsec:sim-synthesis}
We consider a power converter with state $x(t) \in \mathbb{R}^{3}$ and dynamics given by \cite{schneeberger2023sos}
\begin{multline}
    \label{eq:power-converter-dynamics}
    \left(
    \begin{array}{c}
    \dot{x}_{1} \\
    \dot{x}_{2} \\
    \dot{x}_{3} 
    \end{array}
    \right) = \left(
    \begin{array}{c}
    -0.05x_{1} - 57.9x_{2} + 0.00919x_{3} \\
    1710x_{1} + 314x_{3} \\
    -0.271x_{1} - 314x_{2}
    \end{array}
    \right) \\
    + \left(
    \begin{array}{cc}
    0.05-57.9x_{2} & -57.9x_{3} \\
    1710+1710x_{1} & 0 \\
    0 & 1710 + 1710x_{1}
    \end{array}
    \right)\left(
    \begin{array}{c}
    u_{1} \\
    u_{2}
    \end{array}
    \right)
\end{multline}
The safety constraint is given by $\{(x_{1}/20)^{2} + x_{2}^{2} + x_{3}^{2} - 1.2^2 \leq 0\}$. The goal of this case study is to evaluate the basin of attraction of the  alternating-descent synthesis approach of Section \ref{subsec:ad-synthesis}. Specifically, we construct a CBF $b$ that satisfies the positive invariance conditions with $\{b(x) \geq 0\}$ contained in the safe region. We then randomly perturb $b(x)$ to obtain a new function $b_{0}(x)$ and initialize the alternating descent algorithm with starting point $b_{0}(x)$. The level of perturbation of $b(x)$ that can be tolerated while still returning a valid CBF will be used to evaluate the CBF synthesis algorithm.

In order to construct a CBF $b$, we linearize \eqref{eq:power-converter-dynamics} around the equilibrium point $(0,0,0)$. We then use the equilibrium-based synthesis of Section \ref{subsec:fixed-points} to compute a quadratic CBF. The CBF is given by $b(x) = c - x^{T}Sx$, where 
\begin{IEEEeqnarray*}{rCl}
    S &=& \left(
    \begin{array}{ccc}
    0.0251 & -0.000239 & -7.30 \times 10^{-5} \\
    -0.000239 & 0.000600 & -5.17 \times 10^{-5} \\
    -7.30 \times 10^{-5} & -5.17 \times 10^{-5}
    \end{array}
    \right)\\
     c &=& 7.587 \times 10^{-4}
\end{IEEEeqnarray*}

We then perturb this condition by setting $b_{0}(x) = c - x^{T}S_{0}x$, where $S_{0} = (I + \Delta )\odot S$, $I$ is the 3-by-3 identity matrix, $\odot$ denotes the entrywise matrix product, and $\Delta$ is a matrix with i.i.d. Gaussian entries that have mean zero and variance $\sigma$. The parameter $\sigma$ was varied in order to test the effect of different perturbation levels. For each $\sigma \in \{1,2,3,\cdots,20\}$, the alternating-descent algorithm was repeated 50 times. The fraction of the time that the algorithm terminated with a valid CBF is given in Table \ref{table:alternating-results}. 
\begin{table}[h]
    \caption{Success probability of alternating synthesis}
    \label{table:alternating-results}
    \begin{tabular}{|c|c|c|c|c|c|c|c|}
    \hline
    $\sigma$ & 1 & 4 & 7 & 10 & 13 & 16 & 20 \\
    \hline
    Success prob. & 0.72 & 0.6 & 0.72 & 0.6 & 0.62 & 0.78 & 0.74 \\
    \hline
    \end{tabular}
\end{table}
We observe that the algorithm converges even for large perturbations of the initial condition. One possible reason for this behavior is that the system is open-loop stable, which reduces the CBF synthesis problem to synthesis of a Lyapunov function. In order to evaluate our synthesis algorithm on an open-loop unstable system, we perturbed the dynamics to obtain
\begin{multline}
    \label{eq:power-converter-dynamics-2}
    \left(
    \begin{array}{c}
    \dot{x}_{1} \\
    \dot{x}_{2} \\
    \dot{x}_{3} 
    \end{array}
    \right) = \left(
    \begin{array}{c}
    -0.057x_{1} - 50.9x_{2} + 0.0096x_{3} \\
    1952x_{1} + 314x_{3} \\
    -0.2982x_{1} - 365.2x_{2}
    \end{array}
    \right) \\
    + \left(
    \begin{array}{cc}
    0.05-57.9x_{2} & -57.9x_{3} \\
    1710+1710x_{1} & 0 \\
    0 & 1710 + 1710x_{1}
    \end{array}
    \right)\left(
    \begin{array}{c}
    u_{1} \\
    u_{2}
    \end{array}
    \right)
\end{multline}
We followed the same synthesis procedure and found that the the proposed alternating descent approach converged to a valid CBF 70\% of the time with $\sigma=1$. 

\subsection{Synthesis for Linearized Quadrotor}

We simulated our approach on a quadrotor UAV with state $x(t) \in \mathbb{R}^{n}$ and control input $u(t) \in \mathbb{R}^{m}$. We have

In order to construct a CBF, we first linearized (\ref{eq:quadrotor-dynamics}) around the origin to obtain the linearized model $\dot{x}(t) = Fx + Gu$ and constructed an LQR controller. We defined $b(x) = \epsilon - x^{T}Px$, where $P$ is the solution to the Lyapunov equation for the LQR controller, as in Section \ref{subsec:fixed-points}. 
We found the maximum value of $\epsilon$ such that viability of $\mathcal{C}$ was satisfied and $\mathcal{C} \subseteq \mathcal{S}$ by solving an SOS program as in Section \ref{subsec:fixed-points}. 

Fig. \ref{fig:simulation}(a) shows the shape of the safe region for different sizes of the constraint set, confirming that the safe region contracts as the control constraints become more restrictive. 

Figure \ref{fig:simulation}(b) shows the trajectory of the UAV for different initial states. We choose $c = 1$ and use a CBF-based controller with $b(x)$ as defined above. The CBF-based controller selects $u$ at each time step according to
\begin{equation}
    \label{eq:quadrotor-control-law}
    \begin{array}{ll}
\mbox{minimize} & u^{T}u \\
\mbox{s.t.} & \frac{\partial b}{\partial x}(f(x)+g(x)u) \geq -kb(x)\\
 & ||u|| \leq 1
\end{array}
\end{equation}
with $k=10$. As shown in the figure, if $x(0) \notin \mathcal{C}$ (red curve), the trajectory eventually exits the safe region even if the CBF control policy (\ref{eq:quadrotor-control-law}) is used. On the other hand, if $x(0) \in \mathcal{C}$, then $x(t) \in \mathcal{C} \subseteq \mathcal{S}$ for all time $t \geq 0$.

\begin{figure*}[!ht]
\centering
$\begin{array}{cc}
\includegraphics[width=3in]{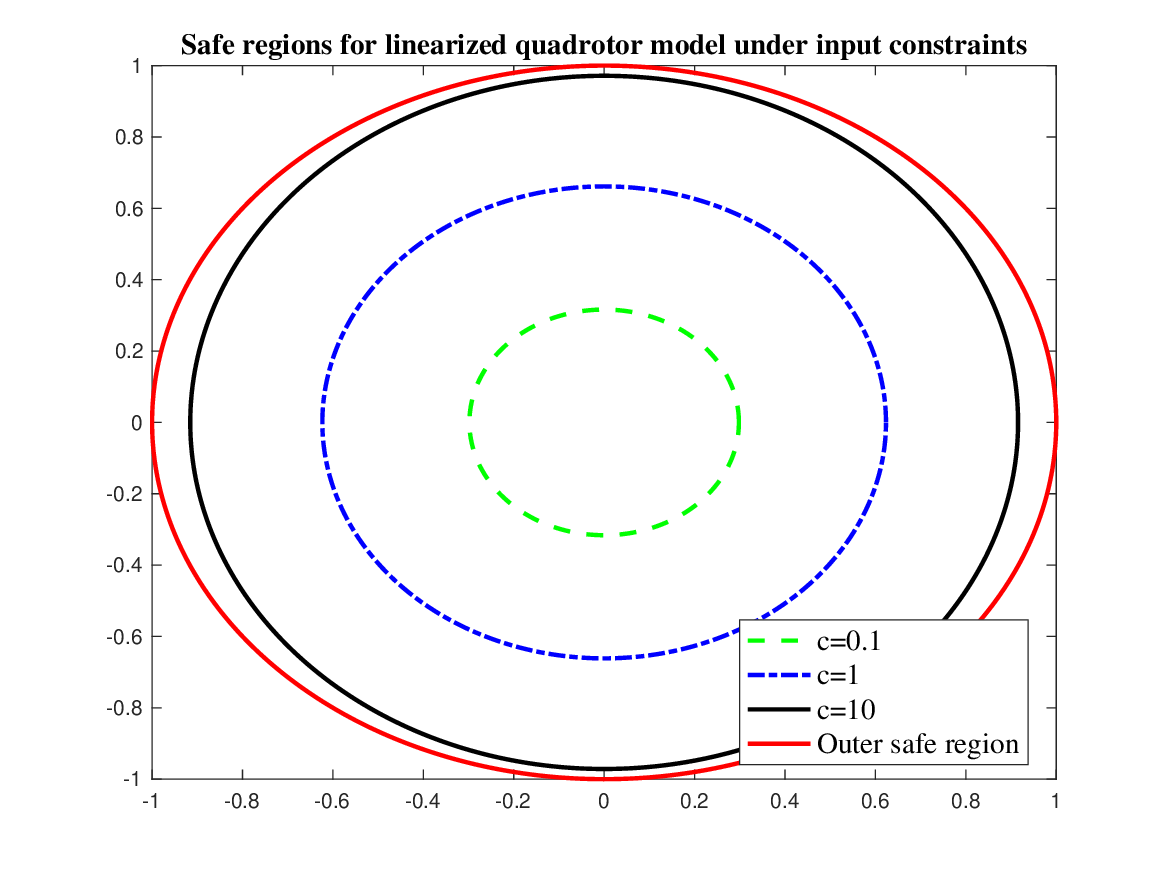} & 
\includegraphics[width=3in]{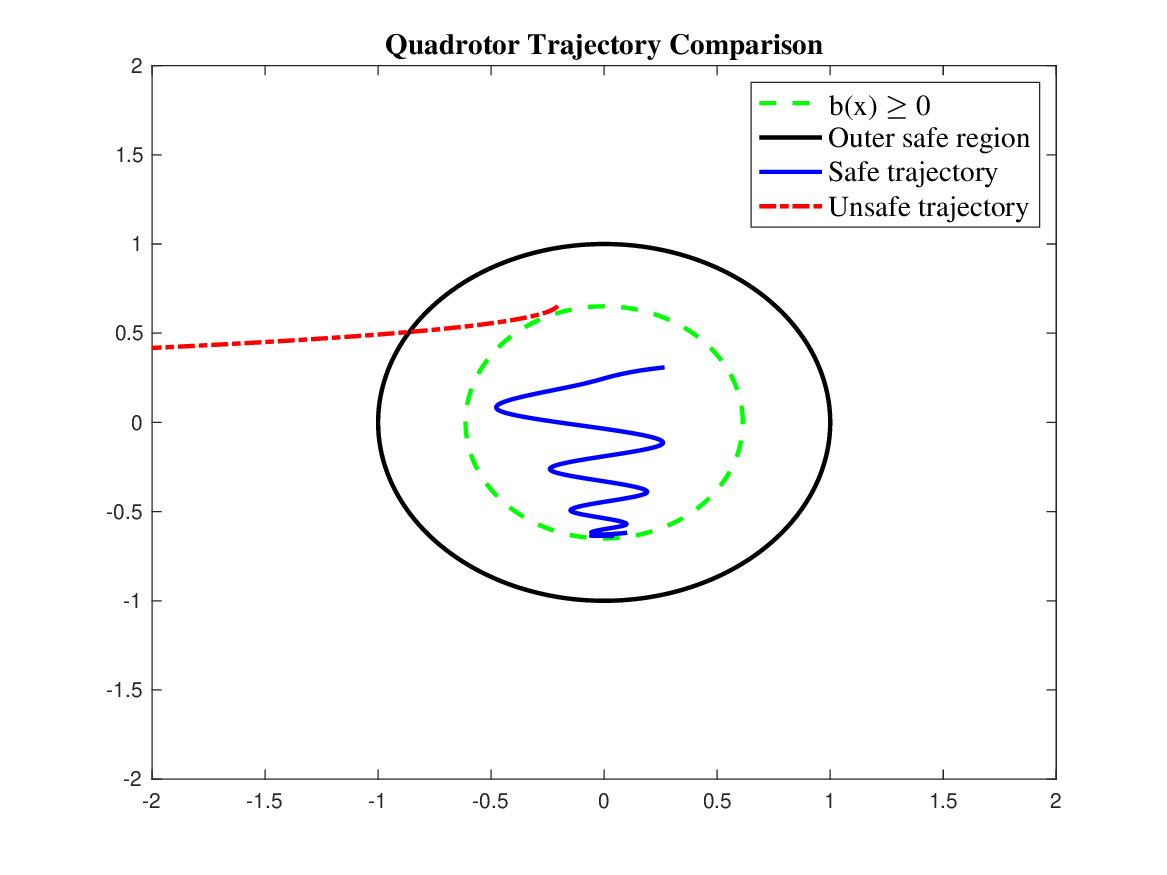} \\
\mbox{(a)} & \mbox{(b)}
\end{array}$
\caption{Numerical evaluation of our approach on a linearized quadrotor model. (a) Geometry of the safe region for different values of the input constraint. As the input constraint becomes more restrictive, the size of the feasible safe region is reduced. (b) Comparison of trajectories from CBF-based control policies. In both cases, the control policy is given by (\ref{eq:quadrotor-control-law}) with $k=10$. Initializing the state within $\{x: b(x) \geq 0\}$ maintains the state within the safe region $\{x: ||x||_{2}^{2} \leq 1\}$, while initializing outside $\mathcal{C}$ results in a safety violation.}
\label{fig:simulation}
\end{figure*}
\section{Conclusions}
\label{sec:conclusion}
This paper studied verification and synthesis of control barrier functions and controlled positive invariant sets. We showed that verifying that a semi-algebraic set is controlled positive invariant for a system with polynomial dynamics is equivalent to the non-existence of a solution to a system of polynomial equations and inequalities. We then proved that controlled positive invariance is equivalent to a sum-of-squares optimization problem via the Positivstellensatz. We derived techniques for safety verification of CBF based controllers based on our framework, as well as sufficient conditions for existence of continuous safe feedback control policies.

We proposed two algorithms for synthesizing control barrier functions. The first algorithm is an alternating descent heuristic based on solving a sequence of sum-of-squares programs. The second algorithm derives a local CBF in the neighborhood of a fixed point of the system. We evaluated our approach through  simulation studies on a  quadrotor model and a power converter system. 

\bibliographystyle{IEEEtran}
\bibliography{TAC-CBF}

\appendices
\section{Verifying Practicality of a Semi-Algebraic Set}
\label{appendix:practical}
In what follows, we describe an approach to verifying that a semi-algebraic set is practical. We have the following result.

\begin{theorem}
    \label{prop:pratical-verification}
  Let $$\mathcal{C} = \cap_{j=1}^{r}{\{x: b_{i}(x) \geq 0\}}$$ be a closed simple semi-algebraic set and let $\epsilon > 0$. For any $S \subseteq \{1,\ldots,r\}$, define $\Theta_{S}(x)$ and $\Phi_{S}(x)$ to be the matrices with rows equal to $(\frac{\partial b_{i}}{\partial x} : i \in S)$ and $(\frac{\partial b_{i}}{\partial x} : i \notin S)$ respectively. The set $\mathcal{C}$ is practical if and only if, for all $S \subseteq \{1,\ldots,r\}$, there do not exist $x \in \mathbb{R}^{n}$, $y \in \mathbb{R}^{|S|}$, and $w \in \mathbb{R}^{r-|S|}$ satisfying (i) $b_{i}(x) = 0$ for all $i \in S$, (ii) $b_{i}(x) > 0$ for all $i \in \{1,\ldots,r\} \setminus S$, (iii) $y^{T}\Theta_{S}(x) + w^{T}\Phi_{S}(x) = 0$, (iv) $y,w\geq 0$ and (v) $y^{T}y \geq \epsilon^{2}$.
\end{theorem}

\begin{proof}
Recall that a semi-algebraic set is practical if, for any $x \in \mathcal{C}$, there exists $z$ satisfying $\frac{\partial b_{i}}{\partial x}z + b_{i}(x) > 0$. Suppose that $\mathcal{C}$ is practical and  $x \in \mathcal{C}$, and let $S = \{i : b_{i}(x) = 0\}$. Since $x \in \mathcal{C}$, we must have $b_{i}(x) > 0$ for $i \notin S$. Hence the condition for practicality implies that there exists $z$ satisfying $\frac{\partial b_{i}}{\partial x}z > 0$ for all $i \in S$ and $\frac{\partial b_{i}}{\partial x}z > -b_{i}(x)$ for all $i \notin S$. By scaling $z$ appropriately, we have that this is equivalent to the existence of $z$ satisfying $\frac{\partial b_{i}}{\partial x}z > 0$ for $i \in S$ and $\frac{\partial b_{i}}{\partial x}z \geq 0$ for $i \notin S$. We can write these inequalities in matrix form as $\Theta_{S}(x)z > 0$ and $\Phi_{S}x \geq 0$. By Motzkin's Transposition Theorem, the existence of such a $z$ is equivalent to the non-existence of $y$ and $w$ satisfying $y^{T}\Theta_{S}(x) + w^{T}\Phi_{S}(x) = 0$, $y,w \geq 0$, and $y \neq 0$. By scaling $y$, this can be made equivalent to $y^{T}y \geq \epsilon^{2}$. Hence, if $x \in \mathcal{C}$, then conditions (i)-(v) are satisfied.

Conversely, suppose that $\mathcal{C}$ is not practical. By the preceding analysis, there must exist $x$ and $z$ violating the conditions (i)-(v).
\end{proof}

The conditions (i)-(v) define a semi-algebraic set, and hence equivalent conditions for non-existence of $(x,y,w)$ satisfying (i)-(v) can be derived using the Positivstellensatz. In the case where $r=1$, the condition to be verified reduces to non-existence of $x$ satisfying $b(x) = 0$ and $\frac{\partial b}{\partial x} = 0$. By the Positivstellensatz, this is equivalent to the existence of polynomials $\eta(x)$ and $\theta_{1}(x),\ldots,\theta_{n}(x)$ satisfying $$\left(\eta(x)b(x) + \sum_{i=1}^{n}{\frac{\partial b}{\partial x_{i}}\theta_{i}(x)} -1\right) \in SOS.$$

\section{Proofs from Section \ref{subsec:viability-verification}}
\label{appendix:proofs}

\begin{proof}[Proof of Theorem \ref{prop:semialgebraic-boundary}]
Let $\mathcal{C}_{i} = \{x: b_{ij}(x) \geq 0 \ \forall j=1,\ldots,r_{i}\}$. Suppose that $x \in \partial \mathcal{C}$. Since $\mathcal{C}$ is closed, we have $x \in \mathcal{C}$, and hence there exists $S \subseteq \{1,\ldots,s\}$ such that $x \in \mathcal{C}_{i}$ for $i \in S$ and $x \notin \mathcal{C}_{i}$ for $i \notin S$. If $i \notin S$, there must exist at least one $j_{i} \in \{1,\ldots,r_{i}\}$ such that $b_{ij_{i}}(x) < 0$. If $i \in S$, then we must have $b_{ij}(x) \geq 0$ for all $j=1,\ldots,r_{i}$. Furthermore, there must exist at least one index $j$ such that $b_{ij}(x) = 0$, since otherwise $x$ would lie in the interior of $\mathcal{C}_{i}$ and hence in the interior of $\mathcal{C}$. Letting $T_{i}$ denote the collection of such indices completes  the proof.
\end{proof}

\begin{proof}[Proof of Lemma \ref{lemma:tangent-cone-semi-algebraic}]
The first step in the proof is to show that $\mathcal{T}_{\mathcal{C}}(x) = \bigcup_{i \in S}{\mathcal{T}_{\mathcal{C}_{i}}(x)}$. Suppose that $z \in \mathcal{T}_{\mathcal{C}_{i}}(x)$. Then $$\liminf_{\tau \rightarrow 0}{\frac{\mbox{dist}(x+\tau z, \mathcal{C})}{\tau}} \leq \liminf_{\tau \rightarrow 0}{\frac{\mbox{dist}(x+\tau z,\mathcal{C}_{i})}{\tau}} = 0,$$ implying that $z \in \mathcal{T}_{\mathcal{C}}(x)$.

Now, let $z \in \mathcal{T}_{\mathcal{C}}(x)$, and suppose $z \notin \mathcal{T}_{\mathcal{C}_{i}}(x)$ for all $i \in S$. Then there exist $\epsilon_{i} > 0$ for $i \in S$ such that $$\liminf_{\tau \rightarrow 0}{\frac{\mbox{dist}(x+\tau z, \mathcal{C}_{i})}{\tau}} = \epsilon_{i}$$ Hence there exist $\delta > 0$ and $\overline{\tau} > 0$ such that $\tau < \overline{\tau}$ implies that $\frac{\mbox{dist}(x+\tau z, \mathcal{C}_{i})}{\tau} > \delta$ for all $i$, implying that $\frac{\mbox{dist}(x+\tau z, \mathcal{C})}{\tau} > \delta$ for all $\tau < \overline{\tau}$. This, however, contradicts the assumption that $z \in \mathcal{T}_{\mathcal{C}}(x)$, and hence we must have $z \in \mathcal{T}_{\mathcal{C}_{i}}(x)$ for some $i \in S$.

The fact that $$\mathcal{T}_{\mathcal{C}_{i}}(x) = \left\{z: \frac{\partial b_{ij}}{\partial x}z \geq 0\right\}$$ follows from the definition of $\mathcal{C}_{i}$ and Lemma \ref{lemma:tangent-cone-intersection}. Combining this with $\mathcal{T}_{\mathcal{C}}(x) = \bigcup_{i \in S}{\mathcal{T}_{\mathcal{C}_{i}}(x)}$ completes the proof. 
\end{proof}

\begin{proof}[Proof of Lemma \ref{lemma:Psatz-Farkas}]
By Farkas Lemma, there exists a solution $u$ to the system $\Theta(x)u \leq \psi(x)$ if and only if there is no $y \in \mathbb{R}^{N}$ satisfying $y_{i} \geq 0$ for $i=1,\ldots,N$, $\Theta(x)^{T}y = 0$, and $\psi(x)^{T}y < 0$. Hence, for all $x$ with $p(x) > 0$ $\forall p \in \mathcal{P}$ and $q(x) = 0$ $\forall q \in \mathcal{Q}$, there exists a solution to at least one of $\Theta_{i}(x)u \leq \psi_{i}(x)$  if and only if there do not exist $(x,y) \in \mathbb{R}^{n+N}$ such that $y_{1},\ldots,y_{N} \geq 0$, $\Theta_{i}(x)^{T}y = 0$ for $i=1,\ldots,m$, $-\psi(x)^{T}y < 0$, $p(x) > 0$ for $p \in \mathcal{P}$ and $q(x) = 0$ for $q \in \mathcal{Q}$. The non-existence of such points is equivalent to the existence of the polynomials $\lambda$, $\phi$, and $\sigma$ defined in the statement of the lemma by Theorem \ref{theorem:Psatz}. 
\end{proof}

\begin{proof}[Proof of Theorem \ref{theorem:semialgebraic-viability}]
We first prove that $\mathcal{C}$ is viable if the conditions of the theorem are met. Suppose $x \in \partial \mathcal{C}$. By Theorem \ref{prop:semialgebraic-boundary}, there exist $S \subseteq \{1,\ldots,s\}$, $\pi \in \Pi(\{1,\ldots, s\} \setminus S)$, and $(T_{i} : i \in S) \in Z(S)$ such that $b_{i\pi(i)}(x) < 0$ for $i \notin S$, $b_{ij}(x) > 0$ for $i \in S$, $j \notin T_{i}$, and $b_{ij}(x) = 0$ for $i \in S$ and $j \in T_{i}$. This is equivalent to $p(x) > 0$ for $p \in \mathcal{P}_{S,T,\pi}$ and $q(x) = 0$ for $q \in \mathcal{Q}_{S,T,\pi}$. 

By Lemma \ref{lemma:Psatz-Farkas}, $\mathcal{B}(\{\Theta_{S,T,l} : l \in S\}, \{\psi_{S,T,l} : l \in S\}, \mathcal{P}_{S,T,\pi}, \mathcal{Q}_{S,T,\pi})$ is nonempty if and only if for every $x$ with $p(x) > 0$ for $p \in \mathcal{P}$ and $q(x) = 0$ for $q \in \mathcal{Q}$, there exists $l \in S$ and a vector $u$ satisfying $\Theta_{S,T,l}(x)u \leq \psi_{S,T,l}(x)$. By definition of $\Theta_{S,T,l}$ and $\psi_{S,T,l}$, we then have (i) $\hat{\Theta}_{S,T,l}(x)u \leq \hat{\psi}_{S,T,l}(x)$ and (ii) $Au \leq c$. Constraint (i) implies that $$-\frac{\partial b_{lj}}{\partial x}g(x)u \leq \frac{\partial b_{lj}}{\partial x}f(x)$$ for all  $j \in T_{l}$, and hence $(f(x)+g(x)u) \in \mathcal{T}_{\mathcal{C}}(x)$ by Lemma \ref{lemma:tangent-cone-semi-algebraic}. Constraint (ii) implies that $u \in \mathcal{U}$. Hence, at every $x \in \partial \mathcal{C}$, there exists $u \in \mathcal{U}$ with $(f(x)+g(x)u) \in \mathcal{T}_{\mathcal{C}}(x)$, implying that $\mathcal{C}$ is viable.

Now, suppose that the conditions of the theorem do not hold for some $S \subseteq \{1,\ldots,s\}$, $\pi \in \Pi(\{1,\ldots,s\} \setminus S)$, and $T \in Z(S)$. By Lemma \ref{lemma:Psatz-Farkas}, there exists $x$ with $b_{i\pi(i)}(x) < 0$ for $i \notin S$, $b_{ij}(x) = 0$ for $i \in S$ and $j \in T_{i}$, and $b_{ij}(x) > 0$ for $i \in S$ and $j \notin T_{i}$, such that there for all $l$ there is no $u \in \mathcal{U}$ with $\hat{\Theta}_{S,T,l}(x)u \leq \hat{\psi}_{S,T,l}(x)$. Equivalently, there exists $x \in \Delta(\mathcal{C})$ such that there is no $u \in \mathcal{U}$ with $(f(x) + g(x)u) \in \mathcal{T}_{\mathcal{C}}(x)$. Finally, we observe that we must have $x \in \partial \mathcal{C}$, since if $x \notin \partial \mathcal{C}$ we would have $\mathcal{T}_{\mathcal{C}}(x) = \mathbb{R}^{n}$, a contradiction. Thus there exists $x \in \partial \mathcal{C}$ with $(f(x) + g(x)u) \notin \mathcal{T}_{\mathcal{C}}(x)$ for all $u \in \mathcal{U}$, implying that $\mathcal{C}$ is not viable. 
\end{proof}

\section{Proof of Theorem \ref{theorem:simple-CPI}}
\label{appendix:CPI}

This appendix presents the proof of Theorem \ref{theorem:simple-CPI}. As a preliminary, we present background on triangulations of semi-algebraic sets. More details can be found in \cite[Ch. 9.2]{bochnak2013real}. 

Let $a_{0},\ldots,a_{k}$ be $(k+1)$ affinely independent points in $\mathbb{R}^{n}$. The $k$-simplex $[a_{0},\ldots,a_{k}]$ is the set of $x \in \mathbb{R}^{n}$ such that there exist nonnegative real numbers $\lambda_{0},\ldots,\lambda_{k}$ with $$\sum_{i=0}^{k}{\lambda_{i}} = 1, \quad x = \sum_{i=0}^{k}{\lambda_{i}a_{i}}.$$ For any subset $a_{i_{0}},\ldots,a_{i_{l}}$ of $\{a_{0},\ldots,a_{k}\}$, the $l$-simplex $[a_{i_{0}},\ldots,a_{i_{l}}]$ is a \emph{face} of $[a_{0},\ldots,a_{k}]$. If $\sigma$ is a simplex, then we denote by $\sigma^{0}$ the set of points in $\sigma$ whose coefficients $\lambda_{i}$ are all positive, and denote $\sigma^{0}$ as an open simplex.

\begin{definition}
\label{def:simp-comp}
A \emph{simplicial complex} $K = \{\sigma_{i} : i=1,\ldots,q\}$ is a collection of simplices such that, for each $\sigma_{i}$, all faces of $\sigma_{i}$ are included in $K$ and for all $i,j=1,\ldots,p$, either $\sigma_{i} \cap \sigma_{j} = \emptyset$ or $\sigma_{i} \cap \sigma_{j}$ is a common face of $\sigma_{i}$ and $\sigma_{j}$.
\end{definition}

The realization of $K$ is denoted $|K|$ and defined by $$|K| = \bigcup_{i=1}^{q}{\sigma_{i}}.$$ The open simplices $\sigma_{i}^{0}$ form a partition of $K$. The following theorem defines triangulations of semialgebraic sets.

\begin{theorem}[\cite{bochnak2013real}, Theorem 9.2.1]
    \label{theorem:semi-algebraic-SA}
    Every closed and bounded semi-algebraic set $\mathcal{C} \subseteq \mathbb{R}^{n}$ is semi-algebraically triangulable, i.e., there exists a finite simplicial complex $K = \{\sigma_{i} : i=1,\ldots,q\}$ and a semi-algebraic homeomorphism $\Phi : |K| \rightarrow \mathcal{C}$. Moreover, given a finite family $S_{1},\ldots,S_{r}$ of semi-algebraic subsets of $\mathcal{C}$, we can select a simplicial complex $K$ and a semi-algebraic triangulation $\Phi: |K| \rightarrow \mathcal{C}$ such that every $S_{j}$ is the union of a set of open simplicies $\Phi(\sigma_{i}^{0})$.
\end{theorem}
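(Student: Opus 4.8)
The plan is to prove the result by induction on the ambient dimension $n$, using cylindrical algebraic decomposition (CAD) as the central tool. First I would enlarge the given family to include $\mathcal{C}$ itself, so that it suffices to produce a single semi-algebraic homeomorphism $\Phi:|K|\to\mathcal{C}$ adapted to the finite family $\{\mathcal{C},S_{1},\ldots,S_{r}\}$. By the Tarski--Seidenberg projection theorem and the standard construction of a CAD (after a generic linear change of coordinates so the defining polynomials are in good position), there exists a finite partition of $\mathbb{R}^{n}$ into cells, each semi-algebraically homeomorphic to an open cube $(0,1)^{d}$, such that the partition is cylindrical with respect to the projection $\pi:\mathbb{R}^{n}\to\mathbb{R}^{n-1}$ onto the first $n-1$ coordinates and every member of the family is a union of cells. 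Closedness and boundedness of $\mathcal{C}$ ensure that the graph (``section'') functions describing these cells are bounded, which will later furnish continuous extensions to cell boundaries.

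The base case $n=0$ is immediate, since a closed bounded semi-algebraic subset of $\mathbb{R}^{0}$ is a point. For the inductive step I would apply $\pi$ to the CAD: the images of the cells form a finite family of semi-algebraic subsets of $\mathbb{R}^{n-1}$, so the inductive hypothesis yields a semi-algebraic triangulation $\Phi_{0}:|K_{0}|\to\overline{\pi(\mathcal{C})}$ in which each projected cell is a union of open simplices. After a barycentric subdivision of $K_{0}$ I may assume that over each closed simplex $\tau$ of $K_{0}$ the CAD structure is constant, meaning the section functions $\xi_{1}\le\cdots\le\xi_{m_{\tau}}$, pulled back along $\Phi_{0}$, are continuous semi-algebraic functions on all of $\tau$ whose graphs, together with the bands between consecutive graphs, partition the part of $\mathcal{C}$ lying over $\Phi_{0}(\tau^{0})$.

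The heart of the argument is then to triangulate the ``curved prism'' over each simplex $\tau=[a_{0},\ldots,a_{k}]$ compatibly with these graphs, and to do so consistently across adjacent base simplices. The construction I would use introduces, over each vertex $a_{j}$, the fiber points $(a_{j},\xi_{i}(a_{j}))$ as new vertices of an abstract simplicial complex $K$, and defines $\Phi$ on each new simplex by interpolating the section functions rather than by an affine embedding into $\mathbb{R}^{n}$; coning each band over the barycenter of $\tau$ produces the required simplices, and the straightening homeomorphism carrying the affine model onto the graph-bounded band makes $\Phi$ a semi-algebraic homeomorphism. Because the CAD was chosen adapted to the whole family, each $S_{j}$ is a union of bands and graphs over base cells, hence a union of open simplices $\Phi(\sigma_{i}^{0})$, which is exactly the ``moreover'' clause.

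The main obstacle is the boundary behaviour of the section functions. On $\tau^{0}$ the functions are strictly ordered, $\xi_{1}<\cdots<\xi_{m_{\tau}}$, but on $\partial\tau$ several sections may collapse to a common value, so the corresponding new vertices coincide and some prism simplices degenerate. The delicate, bookkeeping-intensive part is to verify that the induced identifications are precisely those forced by the faces of $K_{0}$, that the coning constructions over neighbouring base simplices agree on shared faces, and that the resulting $K$ is a genuine simplicial complex with $\Phi$ a well-defined homeomorphism. The strict ordering and continuity of the $\xi_{i}$ guaranteed by the CAD are exactly what make these identifications consistent, so the argument hinges on tracking the collapse of sections over the boundary strata.
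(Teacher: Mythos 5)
This statement is not proved in the paper at all: it is quoted verbatim, with attribution, as Theorem 9.2.1 of Bochnak--Coste--Roy, and serves purely as background machinery for the proof of Theorem \ref{theorem:simple-CPI} in Appendix C. So there is no in-paper argument to compare against; what you have done is reconstruct, in outline, the proof given in the cited reference itself. Your sketch follows that standard argument faithfully: induction on the ambient dimension, a cylindrical algebraic decomposition adapted to the enlarged family $\{\mathcal{C}, S_{1},\ldots,S_{r}\}$ after a generic linear change of coordinates, an inductively obtained triangulation of the projection, and a lifting of that triangulation through the graphs and bands of the section functions, with the genuinely delicate point being the collapse of sections over boundary strata and the consistency of the resulting identifications. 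Two cautions if you were to write this out in full. First, boundedness of $\mathcal{C}$ gives boundedness of the section functions but \emph{not} their continuous extension to closed cells; that extension is exactly what the good-position (generic projection) lemma buys, so the burden of the argument sits there rather than in the boundedness remark. Second, a barycentric subdivision of the base triangulation does not by itself make the cylindrical structure ``constant over each closed simplex''; the correct mechanism is that the projections of all cells (and the coincidence loci of sections) are included in the family handed to the inductive hypothesis, so that each is a union of open simplices of the base. With those repairs your outline matches the textbook proof, but for the purposes of this paper the theorem is simply imported, and citing it --- as the paper does --- is the appropriate move.
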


We are now ready to provide the proof of Theorem \ref{theorem:simple-CPI}.

\begin{proof}
The approach of the proof is to construct a feedback controller $\mu: \mathcal{C} \rightarrow \mathcal{U}$ that renders $\mathcal{C}$ positive invariant. First, for each $T \subseteq \{1,\ldots,r\}$, define $$\mathcal{C}_{T} = \mathcal{C} \cap \{x: b_{i}(x) = 0 \ \forall i \in T\}.$$ For each $x$, define $T$ to be the maximal set $T$ such that $x \in \mathcal{C}_{T}$, and let $u(x)$ satisfy $u(x) \in \mathcal{U}$ and $\frac{\partial b_{i}}{\partial x}(f(x) + g(x)u(x)) > 0$ for all $i \in T$. By continuity, there exists a ball $B(x)$ centered at $x$ such that, for all $z \in B(x)$, $\frac{\partial b_{i}}{\partial x}(f(z) + g(z)u(x)) > 0$. Since $\mathcal{C}_{T}$ is compact, there exists a finite set of balls $B(x_{T,1}),\ldots,B(x_{T,M_{T}})$ such that $\mathcal{C}_{T}$ is contained in their union. Furthermore, there is a continuous and positive definite function $d_{T}: \mathcal{C}_{T} \rightarrow \mathbb{R}$ such that, for any $x \in \mathcal{C}_{T}$, the ball centered at $x$ with radius $d_{T}$ is contained in $B(x_{T,i})$ for some $i \in 1,\ldots,M_{T}$. Let $\overline{d} = \min_{T}{\min{\{d_{T}(x) : x \in \mathcal{C}_{T}\}}}$, noting that $\overline{d} > 0$.

Let $Z_{1},\ldots,Z_{N}$ be a collection of balls centered at points in $\mathcal{C}$ with radius $\frac{\overline{d}}{2}$ such that $\mathcal{C} \subseteq \bigcup_{i=1}^{N}{Z_{i}}$. Consider the collection of semi-algebraic sets $\{\mathcal{C}_{T,i}: T \subseteq \{1,\ldots,r\}, i=1,\ldots,N\}$ defined by $\mathcal{C}_{T,i} = \mathcal{C}_{T} \cap K_{i}$. Let $(K,\Phi)$ be a semi-algebraic triangulation satisfying the conditions of Theorem \ref{theorem:semi-algebraic-SA}. We define the control policy $\mu$ as follows. First, suppose that $w_{jl}$ is a vertex of a simplex $\sigma_{j} \in K$, and let $T(w_{jl})$ be the maximal set $T$ with $\Phi(w_{jl}) \in \mathcal{C}_{T}$. We have that $\Phi(w_{jl}) \in B_{T}(x_{e})$ for some $e \in \{1,\ldots,M_{T(w_{jl})}\}$. We choose $\mu(\Phi(w_{jl})) = u(x_{e})$.

Now, suppose that $x \in \mathcal{C}_{T,i}$. By definition, we have $x \in \Phi(\sigma_{j}^{0})$ for an open simplex $\sigma_{j}^{0} = [w_{j1},\ldots,w_{jL}]$ such that $\Phi(\sigma_{j}^{0}) \subseteq \mathcal{C}_{T,i}$. Defining $a_{1}(x),\ldots,a_{L}(x)$ to be the barycentric coordinates of $x$, define $$\mu(x) = \sum_{k=1}^{L}{a_{k}(x)\mu(\Phi(w_{jk}))}.$$ Note that this function is continuous in $x$. We will show that $\mu(x)$ satisfies the conditions for positive invariance.

Let $w_{jl}$ be a vertex of $\sigma_{j}^{0}$. By definition of $K_{i}$, we have that $||x-\Phi(w_{jl})|| \leq \overline{d}$, and hence $x$ is contained in the ball centered at $\Phi(w_{jl})$ with radius $d(\Phi(w_{jl}))$. By definition of $d$, we then have that $x \in B_{T}(x_{e})$ where $x_{e}$ is defined as above, implying that $\frac{\partial b_{i}}{\partial x}(f(x) + g(x)\mu(\Phi(w_{jl})) > 0$. The control $\mu(x)$ is therefore a convex combination of values of $u$ that satisfy $\frac{\partial b_{i}}{\partial x}(f(x) + g(x)u) > 0$, implying that $\mu(x)$ satisfies the condition as well.

 By continuity, there exists a ball $B(x)$ centered at $x$ such that, for all $z \in B(x)$, $\frac{\partial b_{i}}{\partial x}(f(z) + g(z)u(x)) > 0$ for all $i \in T$. By local compactness of $\mathcal{C}$, there is a countable set of balls $B(x)$, indexed $B_{T,1},B_{T,2},\ldots,$ that covers $\mathcal{C}_T$ and such that each $x \in \mathcal{C}_T$ lies in at most finitely many balls. Moreover, for every $x \in \partial \mathcal{C}_T$, there exists $d(x) > 0$ and a ball $B_{T,k}$ such that $B(x,d(x)) \subseteq B_{T,k}$.

\end{proof}

\begin{proof}[Proof of Proposition \ref{prop:CPI-conditions}]
By Theorem \ref{theorem:simple-CPI}, it suffices to prove that, for each $x \in \partial \mathcal{C}$, there exists $u \in \mathcal{U}$ satisfying $\frac{\partial b_{i}}{\partial x}(f(x)+g(x)u) > 0$ for all $i$ with $b_{i}(x) = 0$. The existence of such a $u$ is equivalent to the conditions of the proposition by Motzkin's Transposition Theorem.
\end{proof}

\begin{IEEEbiography}[{\includegraphics[width=1in,height=1.25in,clip,keepaspectratio]{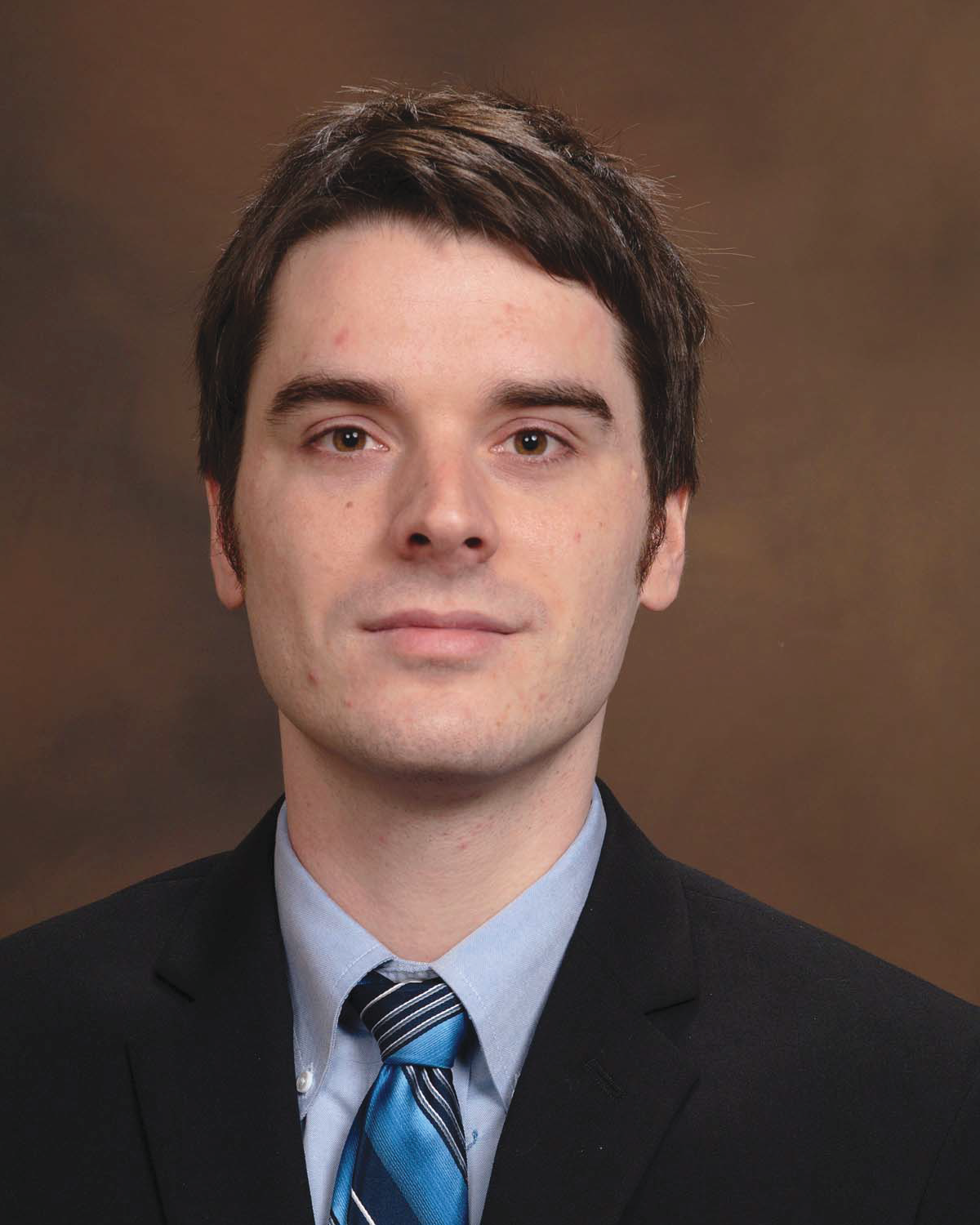}}]{Andrew Clark}
(SM’24) is an Associate Professor in the Department of Electrical and Systems Engineering at Washington University in St. Louis. He received the B.S.E. degree in Electrical Engineering and the M.S. degree in Mathematics from the University of Michigan - Ann Arbor in 2007 and 2008, respectively. He received the Ph.D. degree in Electrical Engineering from the Network Security Lab (NSL), Department of Electrical Engineering, at the University of Washington  - Seattle in 2014. He is author or co-author of the IEEE/IFIP William C. Carter award-winning paper
(2010), the WiOpt Best Paper (2012), the WiOpt Student Best Paper
(2014), and the GameSec Outstanding Paper (2018), and was a finalist for the IEEE CDC 2012 Best Student Paper Award and the ACM/ICCPS Best Paper Award (2016, 2018, 2020). He received an NSF CAREER award in 2020 and an AFOSR YIP award in 2022. His research interests include control and security of complex networks, safety of autonomous systems, submodular optimization, and control-theoretic modeling of network security threats.
\end{IEEEbiography}

\end{document}